\documentclass[twocolumn,10pt]{IEEEtran}
\usepackage{braket}

\usepackage{hyperref}
\hypersetup{colorlinks,urlcolor=black}
\usepackage{booktabs}

\input{def.tex}
\usepackage{dsfont}
\usepackage{minibox}
\DeclareSymbolFont{matha}{OML}{txmi}{m}{it}% txfonts
\DeclareMathSymbol{\varv}{\mathord}{matha}{118}
\usepackage{eurosym}
\usepackage{hhline,physics}

\usepackage{multicol}
\usepackage{algorithm}
\usepackage{graphicx}
\usepackage{textcomp}
\usepackage{pifont}
\usepackage{float}

\usepackage{subcaption}
\usepackage[multiple]{footmisc}

\usepackage{algorithmic}

\makeatletter
\renewcommand{\baselinestretch}{0.98}

\IEEEoverridecommandlockouts

\begin{document}
	\title{MISO Downlink with Fluid Antenna Multiple Access} 
	\author{
	Anastasios~Papazafeiropoulos,~\IEEEmembership{Senior Member,~IEEE} 		
	%	\thanks{M. Diamanti, A. Papazafeiropoulos, and S. Papavassiliou are with the Institute of Communication and Computer Systems (ICCS), School of Electrical and Computer Engineering, National Technical University of Athens, Zografou, Greece, 15780, e-mail: mdiamanti@netmode.ntua.gr; apapazafeiropoulos@mail.ntua.gr; papavass@mail.ntua.gr.}
	
	\thanks{Anastasios Papazafeiropoulos is with the Communications and Intelligent Systems Research Group, University of Hertfordshire, Hatfield AL10 9AB, U. K. (e-mail:tapapazaf@gmail.com). \par 
		%			Pandelis~Kourtessis is with the Communications and Intelligent Systems Research Group, University of Hertfordshire, Hatfield AL10 9AB, U. K. (e-mail: p.kourtessis@herts.ac.uk). \par 
	}
}
	\maketitle\vspace{-1.9cm}
\begin{abstract}
Fluid antenna multiple access (FAMA) enables each user to rapidly switch
among several closely spaced ports and select the strongest received signal.
Although this mechanism offers micro-scale spatial diversity, its behavior in
multiuser downlink systems with spatial correlation and linear precoding is
not well understood. This paper develops a unified analytical framework for
the multiple-input single-output (MISO) downlink with FAMA users served via maximum ratio transmission
(MRT) or zero-forcing (ZF). \textcolor{black}{We show that the per-port signal-to-interference ratio
	(SIR) follows a Beta-prime distribution with parameters
	\((M_{\mathrm{eff}},L)\), where \(M_{\mathrm{eff}}=M\) under MRT and
	\(M_{\mathrm{eff}}=M-U+1\) under ZF}, and derive closed-form finite-sum cumulative distribution functions (CDFs) for both
cases. We further provide the first analytical characterization of cross-port
SIR correlation. \textcolor{black}{Furthermore, we derive rigorous outage probability bounds that tightly bracket the exact performance and become exact in the limiting cases of fully correlated and independent ports.} Asymptotic analyses reveal the
fundamental diversity orders and tail behavior for each precoder. Numerical
results confirm the accuracy of the SIR distributions, correlation model, and
outage bounds, and show that MRT achieves weaker port correlation and larger
selection gains than ZF when the base station (BS) has ample spatial degrees of freedom.
The framework offers explicit guidelines for port configuration and precoder
selection in practical FAMA systems.
\end{abstract}
	
		\begin{keywords}
		Fluid antennas, FAMA, MIMO,  multi-user MIMO, outage probability.
			\end{keywords}
	
\section{Introduction}
Multiple-input multiple-output (MIMO) technology has fundamentally transformed wireless communication by exploiting spatial dimensions to enhance reliability and throughput without additional bandwidth or time resources. By leveraging spatially diverse channel paths, MIMO achieves significant performance gains in both point-to-point~\cite{Telatar1999,Goldsmith2005} and multiuser systems, where spatial multiplexing enables simultaneous service of multiple users within the same time–frequency block~\cite{Caire2003,Jindal2006}. \textcolor{black}{While linear precoding in multiuser MIMO is well understood, its interaction with spatially reconfigurable user-side antennas remains largely unexplored. In particular, conventional MIMO models do not capture how antenna mobility and port selection affect  signal-to-interference 	ratio (SIR) statistics and achievable diversity in multiuser downlink systems.}

Fluid antennas are electronically controlled radiating structures whose geometry or position can be reconfigured to modify electromagnetic characteristics without deploying multiple fixed antenna elements. Although formally introduced in~\cite{Kar2010}, related concepts date back earlier works such as~\cite{Kosta2004}. \textcolor{black}{Fluid antennas have been realized through various liquid-metal and reconfigurable radiating structures, as surveyed in~\cite{Paracha2019,Huang2021}.} \textcolor{black}{
	Also, fluid antenna concepts have also been investigated in the context of index modulation and robust transmission schemes. For example, fluid antenna–assisted index modulation has been studied for MIMO systems and reconfigurable intelligent surface (RIS)-aided  millimeter wave (mmWave) scenarios, with emphasis on modulation design, detection complexity, and robustness to hardware impairments~\cite{Zhu2024,Zhu2024a,Guo2025}. These works differ fundamentally from the present study, as they do not consider multiuser downlink precoding, per-port SIR and outage characterization, or the impact of spatial correlation and port selection under interference-limited operation.
}

More \textcolor{black}{recently}, fluid antenna multiple access (FAMA) has emerged as a promising paradigm in which users dynamically reposition their antenna within a compact aperture and select the port that maximizes instantaneous performance~\cite{FAMA_MultipleAccess1}. Unlike conventional MIMO, which relies on simultaneous multi-antenna observations, FAMA exploits micro-scale spatial variations through single-RF-chain port switching. \textcolor{black}{This distinctive operating principle introduces new analytical challenges, since the per-port received signals are inherently correlated and the resulting performance depends critically on both the antenna displacement model and the adopted multiuser transmission strategy at the  base station (BS).} Existing analytical studies mainly focus on single-input single-output (SISO) configurations, independent ports, or strongest-port statistics, without accounting for multiuser precoding or spatial correlation.

Recent works have investigated slow and fast FAMA protocols~\cite{Wong2023a,Wong2023b} and OFDM-based integration into 5G-NR systems~\cite{Hong2025}. \textcolor{black}{In parallel, opportunistic user scheduling has been combined with FAMA to relax the requirement on the number of ports per fluid antenna while preserving high multiplexing gains, by selecting a small subset of users with favorable FAMA-induced SIR from a larger pool \cite{Wong2023c}. FAMA has also been embedded into integrated data and energy transfer (IDET) architectures, where port selection is optimized either for SINR or harvested power and closed-form outage and multiplexing expressions are derived for joint wireless data transfer (WDT) and wireless energy transfer (WET) operation \cite{Lin2025}. Also,  it has  been combined with coded caching to mitigate interference and worst-user bottlenecks, enabling interference-free multiuser connectivity with a single transmit antenna and fluid-antenna port selection, with closed-form rate and scaling characterizations \cite{Zhao2026}. At a more fundamental level, slow-FAMA has been studied for unsourced random access (URA), where MIMO fluid-antenna systems achieve notable capacity and error-floor improvements over classical MIMO-URA bounds \cite{Zhang2025}.  Moreover, fluid antenna systems have been shown to enhance both orthogonal and non-orthogonal multiple access schemes by exploiting fine-grained spatial sampling to mitigate multiuser interference and improve fairness compared with fixed array baselines \cite{New2024}.} 
\textcolor{black}{Despite these advances, the existing literature has largely emphasized protocol design, system integration, or single-link analysis, leaving the fundamental performance limits of multiuser downlink FAMA under linear precoding analytically unexplored.}  The joint effects of  maximum ratio transmission
(MRT)/zero-forcing (ZF) beamforming, interference, port-wise SIR statistics, spatial correlation, and outage performance remain analytically uncharacterized in closed form.

Motivated by these gaps, this paper develops a unified analytical framework for downlink MISO FAMA under MRT and ZF precoding.\footnote{\textcolor{black}{
		From a practical perspective, fluid antenna systems enable spatial diversity and interference mitigation using a single radio frequency (RF) chain, avoiding the cost, size, and power overhead of conventional multi-antenna arrays. By exploiting micro-scale spatial variations within a compact aperture, FAMA is particularly attractive for small-form-factor devices and dense multiuser scenarios where traditional MIMO is constrained.
	}\textcolor{black}{
		Although fluid-antenna technology is still evolving, recent experiments and system-level studies demonstrate the feasibility of rapid port switching and practical integration, motivating analytical frameworks that characterize the fundamental performance limits of FAMA under realistic channel conditions.
}}
We derive exact per-port SIR distributions, characterize cross-port correlation, and establish outage bounds for arbitrary dependence. Asymptotic analysis reveals how FAMA gains scale with the number of ports, the effective BS array dimension, and the interference level. \textcolor{black}{Beyond its analytical novelty, the proposed framework provides closed-form performance benchmarks for multiuser downlink FAMA systems with linear precoding, enabling rapid and reproducible evaluation without reliance on Monte-Carlo (MC) simulations.} \textcolor{black}{The analytical results offer direct design-level insight into the relative merits of MRT and ZF precoding, the impact of port correlation on selection diversity, and the role of the fluid-antenna aperture in determining achievable FAMA gains.}\footnote{\textcolor{black}{
Practical fluid antenna implementations may be affected by mechanical imperfections and actuation inaccuracies. The present work does not model these hardware aspects, but instead characterizes the fundamental channel-level performance limits of FAMA under spatial correlation and multiuser precoding, providing a theoretical benchmark that isolates signal-processing effects. 
The adopted spatial correlation model serves as an abstraction of port-position uncertainty. Additional mechanical non-idealities would increase correlation and reduce selection gains, so the reported results should be interpreted as optimistic upper bounds on performance.
}}

\textcolor{black}{
	Although MRT/ZF precoding and Beta-prime SIR distributions under Rayleigh fading are individually well understood, their combination with fluid-antenna port selection in a multiuser downlink setting introduces fundamentally new analytical challenges. Precoding is designed from a single reference port, while performance depends on opportunistic selection over correlated ports, inducing coupling between precoding, interference, and correlation that is not captured by existing diversity or selection analyses.
}

\paragraph*{Comparison to Existing FAMA Studies}

Representative works such as~\cite{FAMA_MultipleAccess1} and OFDM-based FAMA studies in~\cite{Hong2025} focus on protocol design, system integration, or link-level evaluation, without providing closed-form per-port SIR characterization under multiuser precoding and correlated port selection. In contrast, this work develops a unified framework integrating multiuser MISO precoding, exact per-port SIR distributions, spatial correlation modeling, and closed-form outage characterization.\footnote{\textcolor{black}{While this work focuses on SIR and outage-based performance metrics, the derived SIR distributions also provide a direct analytical foundation for evaluating ergodic rate, sum throughput, and fairness metrics in FAMA-enabled multiuser systems.}}

\textcolor{black}{
	Unlike classical correlated diversity or antenna-selection analyses, which assume simultaneous observation of multiple branches or independent combining, the FAMA setting considered here involves port-wise probing with a single RF chain and precoding mismatch across ports. As a result, existing results on correlated selection diversity cannot be directly adapted to obtain the per-port SIR statistics, cross-port correlation, or outage behavior derived in this work.
}

\textcolor{black}{
	Existing studies on FAMA have primarily focused on SISO configurations, typically assuming independently fading fluid antenna ports and without considering multiuser downlink precoding. Consequently, the interaction between spatial correlation across fluid antenna ports and linear precoding strategies has not been addressed in prior work. To the best of the authors' knowledge, no existing study has analyzed per-port SIR statistics of FAMA systems under MRT or ZF precoding, while jointly accounting for spatial correlation and its impact on selection-based outage probability. The present work fills this gap by providing the first analytical treatment of these effects in multiuser downlink FAMA systems.
}

\subsection*{Main Contributions}

The main contributions of this work are summarized as follows:

\begin{itemize}
	
	\item \textbf{Exact per-port SIR distributions under MRT and ZF:}
	We show that, under both precoders, the per-port SIR follows a
	Beta-prime distribution with parameters $(a,b)$ determined by the effective
	BS array dimension. \textcolor{black}{Specifically, \((a,b)=(M_{\mathrm{eff}},L)\), where
		\(M_{\mathrm{eff}}=M\) for MRT and \(M_{\mathrm{eff}}=M-U+1\) for ZF,
		leading to markedly different tail and low-threshold behaviors.}
	
	\item \textbf{Closed-form finite-sum cumulative distribution function (CDF) expressions:}
	The incomplete Beta representations are converted into finite-sum CDFs
	that provide exact tractable formulas for the SIR distribution at any
	port, enabling rigorous outage analysis without numerical integration.
	
	\item \textbf{Analytical cross-port SIR correlation model:}
	We derive the first closed-form expression for the correlation between
	SIRs across fluid-antenna ports. The correlation coefficient depends
	jointly on the geometric overlap parameters $\mu_k\mu_\ell$ and the
	effective BS array dimension $M_{\mathrm{eff}}$, providing an explicit
	link between aperture size, precoding choice, and selection diversity.
	
	\item \textbf{Rigorous FAMA outage bounds for arbitrary correlation:}
	Using the per-port CDF, we derive tight upper and lower outage bounds
	that remain valid without any independence assumptions. These bounds
	clarify how port correlation limits the achievable selection gain,
	and quantify the regimes where the FAMA gain approaches the ideal
	independent-port case.
	
	\item \textbf{Asymptotic SIR and outage analysis:}
	We obtain small-threshold expansions, large-SIR tail behavior, and
	large-$(N,M)$ scaling laws that provide fundamental insight into the
	dependence of the FAMA gain on the number of ports, aperture width,
	antenna count, and interference level. \textcolor{black}{The asymptotics reveal a unified single-port diversity order
		\(M_{\mathrm{eff}}\), with \(M_{\mathrm{eff}}=M\) for MRT and
		\(M_{\mathrm{eff}}=M-U+1\) for ZF, and a universal
		interference-controlled tail exponent \(L\).}
	
	\item \textbf{Design guidelines for MISO FAMA systems:}
	Based on our analytical results, we provide engineering insights on
	when MRT or ZF is preferable, how large the fluid-antenna aperture
	should be to ensure weak correlation, and how many ports offer
	meaningful gains under given geometric and system constraints.
\end{itemize}

	The remainder of the paper is organized as follows. Section~\ref{system} introduces the
	system model, including the channel representation,
	the MRT and ZF precoders, and the channel state information (CSI) assumptions. Section~\ref{statistical} derives the
	exact per-port SIR distributions and the associated Beta-prime expressions for
	both precoders. Section~\ref{fama} develops the cross-port SIR correlation model,
	followed by rigorous outage bounds and asymptotic analyses characterizing the
	diversity order and tail behavior. Section~\ref{Numerical} presents numerical results that
	validate all analytical findings and provide practical design insights for
	MISO FAMA systems. Finally, Section~\ref{conclusion} concludes the paper and outlines
	directions for future research.
	
	\paragraph*{Notations}
	Bold uppercase letters denote matrices, whereas bold lowercase letters represent vectors. 
	For any matrix $\mathbf{X}$, the operators $\mathbf{X}^{\mathsf{T}}$ and $\mathbf{X}^{\mathsf{H}}$  
	 correspond to the transpose and Hermitian transpose, respectively. The expectation and variance operators are 
	written as $\mathbb{E}\{\cdot\}$ and $\mathrm{Var}\{\cdot\}$. Also,  $J_0(\cdot)$ is the zero-order Bessel function of the first kind,
	and $\mathbf{I}_M$ is the $M\times M$ identity matrix.

	\section{System Model}\label{system}
	We consider a downlink multiuser MISO system, where a BS with
	$M$ antennas serves $U$ single-RF users, each equipped with a fluid antenna
	capable of switching among $N$ ports within a small aperture, as shown in Fig. \ref{Fig0}. Each user
	selects the port that maximizes its instantaneous SIR, while the BS employs
	linear precoding based on the CSI of one reference port per user. The
	resulting performance depends critically on the spatial correlation across
	fluid-antenna ports, the displaced complex Gaussian channel model, and the
	choice of MRT or ZF beamforming. This section introduces the geometric
	correlation model, downlink signal model, and CSI assumptions required for
	the analysis that follows.
		\begin{figure}[!h]
		\begin{center}
			\includegraphics[width=0.7\linewidth]{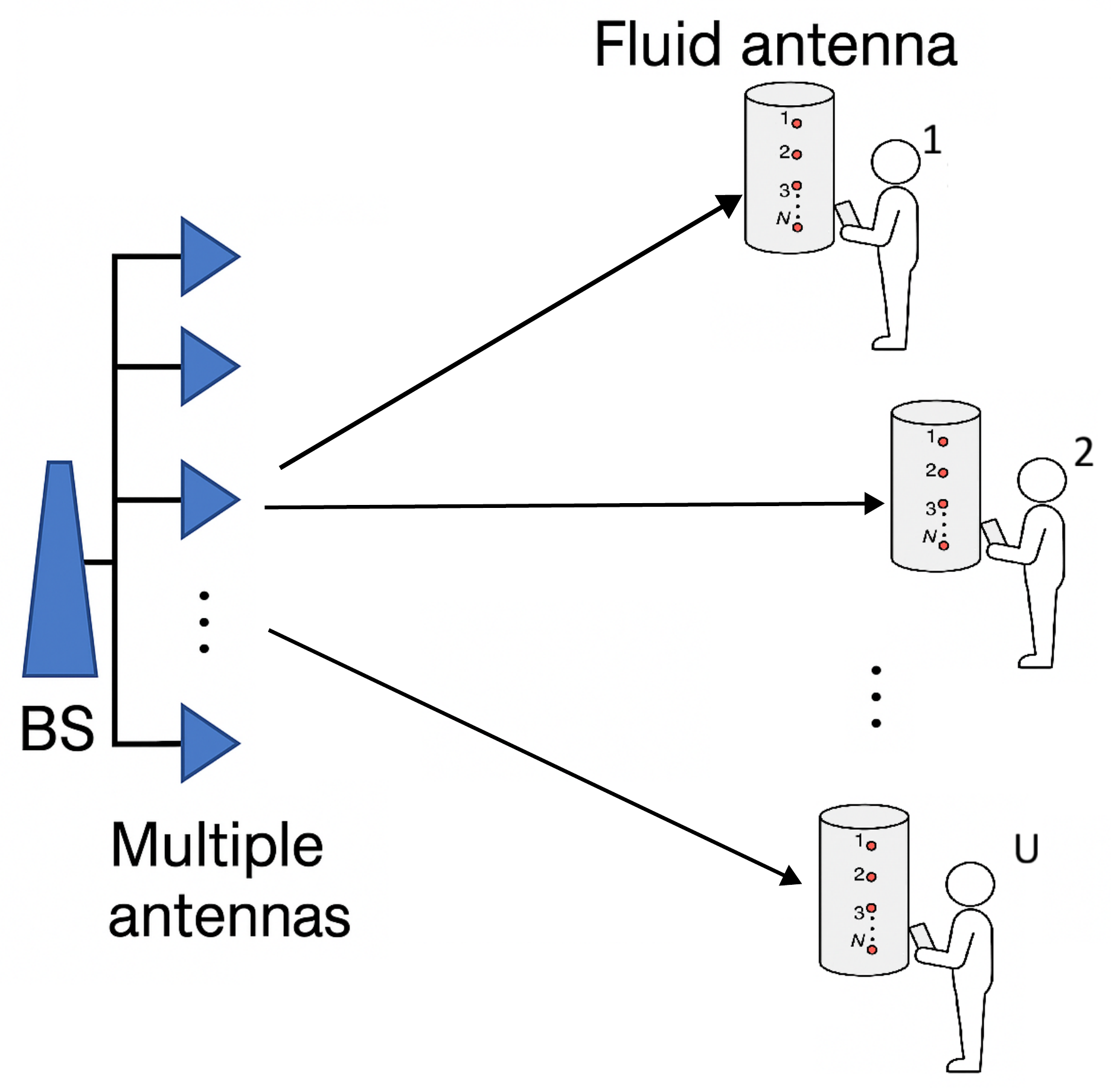}
			\caption{\footnotesize{ A MISO downlink system model with FAMA. }}
			\label{Fig0}
		\end{center}
	\end{figure}

	\subsection{Fluid Antenna Geometry and Spatial Correlation}\label{correlation}
	
	The $N$ ports of the fluid antenna at user $u$ are positioned along a
	one-dimensional aperture of length $W\lambda$, where $\lambda$ denotes the
	carrier wavelength, and $W$ is a  parameter that specifies the aperture length in wavelengths. The displacement of the $k$-th port from a reference
	location is modeled as
	\begin{equation}
		d_k = \frac{k-1}{N-1} W\lambda, \quad k = 1,2,\ldots,N,
	\end{equation}
	so that the ports are uniformly distributed over the aperture.
	
	We assume an isotropic scattering environment in the azimuth plane and
	model the spatial correlation of the complex baseband channel at different
	ports via the well-known Bessel-type correlation function. Let
	$\mathbf{h}_{u,k} \in \mathbb{C}^{M\times 1}$ denote the MISO channel vector
	from the BS antenna array to the $k$-th port of user $u$. The correlation
	between channel entries at ports $k$ and $\ell$ is given by \cite{FAMA_MultipleAccess1}
	\begin{equation}
		\mathbb{E}\big[\mathbf{h}_{u,k}\mathbf{h}_{u,\ell}^{\mathsf H}\big]
		= \beta_u J_0\!\left(\frac{2\pi}{\lambda} |d_k - d_\ell|\right)
		\mathbf{I}_M,
	\end{equation}
	where $\beta_u$ captures large-scale fading (path-loss and shadowing) for
	user $u$.
	
	To obtain a constructive parametric model, we express $\mathbf{h}_{u,k}$ as
	\begin{equation}
		\mathbf{h}_{u,1} = \sqrt{\beta_u}\,\mathbf{x}_{u,0},
	\end{equation}
	\begin{equation}
		\mathbf{h}_{u,k} = \sqrt{\beta_u}\!\left(\!
		\mu_k \mathbf{x}_{u,0}
		+ \sqrt{1 - \mu_k^2}\,\mathbf{x}_{u,k}\!
		\right)\!, \!\!\quad k = 2,\ldots,N.\label{channel}
	\end{equation}
 Also,  $\mathbf{x}_{u,0},\mathbf{x}_{u,2},\ldots,\mathbf{x}_{u,N}
	\sim \mathcal{CN}(\mathbf{0},\mathbf{I}_M)$ are independent, and
	$\mu_k \in [-1,1]$ controls the correlation between the reference port and
	port $k$. In other words, Port 1 is chosen as a reference, and all other  channels corresponding to different ports are written as correlated versions of this reference channel. The parameters $\mu_k$ are chosen such that
	\begin{equation}
		\mu_k = J_0\!\left(
		\frac{2\pi}{\lambda} |d_k - d_1|
		\right), \quad k = 2,\ldots,N,\label{mu1}
	\end{equation}
	which is consistent with the spatial correlation function above.
	
	The BS is assumed to acquire instantaneous CSI only for the reference
	port $k=1$ of each user, i.e., it estimates $\mathbf{h}_{u,1}$ for
	$u=1,\ldots,U$. This is consistent with practical FAMA operation, where
	the user locally probes its ports through fast switching and identifies
	the strongest port, while feeding back a single channel estimate to the
	BS.\footnote{Only one uplink reference transmission is required per user \cite{FAMA_MultipleAccess1}.} The remaining port channels
	$\{\mathbf{h}_{u,k}\}_{k>1}$ are not required to be known instantaneously
	at the BS. Instead, their statistics follow directly from the spatial
	correlation model in \eqref{channel}, which allows the per-port SIR
	distributions and the correlation expressions in Section~\ref{subsec:corr}
	to be derived without assuming instantaneous CSI at all ports. Therefore, both MRT and ZF precoders are designed solely from the reference-port
	channel vector $\mathbf{h}_{u,1}$\textcolor{black}{.}\footnote{\textcolor{black}{
			The BS is assumed to acquire instantaneous CSI only for a single reference port per user, which is used for precoder design. This assumption is consistent with practical FAMA operation, where each user employs a single RF chain and cannot provide simultaneous CSI feedback for all fluid antenna ports without incurring excessive training overhead. The remaining ports are therefore not tracked instantaneously at the BS and are instead characterized statistically through the adopted spatial correlation model.
		}}

	\subsection{Downlink Transmission and Signal Model}
		The BS simultaneously serves $U$ users over the same time-frequency resource.
	The transmit signal is
	\begin{equation}
		\mathbf{x} = \sum_{u=1}^{U} \mathbf{w}_u s_u,
	\end{equation}
	where $\mathbf{w}_u \in \mathbb{C}^{M\times 1}$ is the precoding vector for
	user $u$ and $s_u$ is the corresponding data symbol with
	$\mathbb{E}[|s_u|^2] = P_u$. The total transmit power is constrained as
	$\sum_{u=1}^{U} \|\mathbf{w}_u\|^2 P_u \leq P_{\max}$.
	
	The received signal at the $k$-th port of user $u$ is
	\begin{equation}
		z_{u,k} = \mathbf{h}_{u,k}^{\mathsf H} \mathbf{x} + \eta_{u,k}
		= \mathbf{h}_{u,k}^{\mathsf H} \mathbf{w}_u s_u
		+ \sum_{i\neq u} \mathbf{h}_{u,k}^{\mathsf H} \mathbf{w}_i s_i
		+ \eta_{u,k},
	\end{equation}
	where $\eta_{u,k} \sim \mathcal{CN}(0,\sigma^2)$ denotes the additive noise.
	The first term represents the desired signal, whereas the second term accounts
	for multiuser interference at user $u$ when its fluid antenna is connected
	to port $k$.
	
	The instantaneous signal-to-interference-plus-noise ratio (SINR) at port $k$
	of user $u$ is therefore
	\begin{equation}
		\mathrm{SINR}_{u,k}
		= \frac{
			P_u |\mathbf{h}_{u,k}^{\mathsf H}\mathbf{w}_u|^2
		}{
			\sum_{i\neq u} P_i |\mathbf{h}_{u,k}^{\mathsf H}\mathbf{w}_i|^2
			+ \sigma^2
		}.\label{SINR1}
	\end{equation}
	 \textcolor{black}{
		Although the instantaneous SINR is defined in \eqref{SINR1}, the subsequent analysis focuses on the SIR. This choice is motivated by the interference-limited nature of dense multiuser MISO downlink systems, where the aggregate multiuser interference dominates thermal noise and largely determines performance. Focusing on the SIR enables exact analytical characterization of the per-port statistics, correlation structure, and outage behavior under FAMA, whereas incorporating noise would require additional approximations that obscure these fundamental mechanisms.
	}

%	\subsection{Port Selection and Effective SIR}
%	
%	Each user is equipped with a single RF chain and can connect it to exactly
%	one of its $N$ ports at any given time. We assume perfect and instantaneous
%	knowledge of the per-port SINR at the user side, so that user $u$ selects
%	the port that maximizes its instantaneous SINR:
%	\begin{equation}
%		k_u^\star = \arg\max_{k \in \{1,\ldots,N\}}
%		\mathrm{SINR}_{u,k}.
%	\end{equation}
%	The effective SINR of user $u$ under MISO–FAMA operation is then
%	\begin{equation}
%		\Gamma_u = \max_{k=1,\ldots,N}
%		\mathrm{SINR}_{u,k}.
%	\end{equation}
%	For a target SINR threshold $\bar{\gamma}$, user $u$ is said to be in outage
%	if $\Gamma_u < \bar{\gamma}$. The corresponding outage probability is
%	\begin{equation}
%		P_{\mathrm{out},u}(\bar{\gamma})
%		= \Pr\{\Gamma_u < \bar{\gamma}\}
%		= \Pr\left\{
%		\max_{k} \mathrm{SINR}_{u,k} < \bar{\gamma}
%		\right\}.
%	\end{equation}
	
	\subsection{Beamforming Design}
	
	In this work, we consider linear precoding at the BS. The beamforming
	vectors are designed based on the CSI at a
	reference port for each user as mentioned. For analytical tractability, we assume that
	the BS acquires the channel vectors $\{\mathbf{h}_{u,1}\}_{u=1}^{U}$,
	corresponding to the first port of each user, and constructs the precoders
	accordingly. The extension to other reference ports follows by symmetry. 	Let $\mathbf{H}_1 = [\mathbf{h}_{1,1},\ldots,\mathbf{h}_{U,1}] \in
	\mathbb{C}^{M\times U}$ denote the aggregated channel matrix at the
	reference port. We consider two conventional linear precoding strategies:
 MRT and ZF. \textcolor{black}{
 	Both MRT and ZF precoders are constructed solely based on the CSI of the reference port for each user. The resulting beamformers are thus agnostic to the instantaneous channels of the remaining fluid antenna ports, and the FAMA gain arises from local port selection at the user side rather than from BS-side per-port adaptation.}\footnote{\textcolor{black}{While different choices or adaptive selection of the reference port may affect the instantaneous performance, the fixed reference-port assumption adopted here provides a representative and practically motivated baseline, and allows the fundamental impact of precoding, spatial correlation, and fluid-antenna port selection to be isolated analytically. A systematic investigation of adaptive reference-port selection and its impact on performance constitutes an interesting direction for future work. 
 }
 }
	
	\subsubsection{MRT  Precoding}
	
	Under MRT, the beamformer for user $u$ is chosen to align with its channel
	vector at the reference port, i.e., Port 1.  Specifically,
	\begin{equation}
		\mathbf{w}_u^{\mathrm{MRT}} =
		\frac{\mathbf{h}_{u,1}}{\|\mathbf{h}_{u,1}\|}, \quad u=1,\ldots,U.
	\end{equation}
	The MRT precoder maximizes the average received signal power at the
	reference port of each user, but does not actively suppress multiuser
	interference. The resulting desired signal term at port $k$ of user $u$ is
	\begin{equation}
		\mathbf{h}_{u,k}^{\mathsf H}\mathbf{w}_u^{\mathrm{MRT}}
		= \frac{\mathbf{h}_{u,k}^{\mathsf H}\mathbf{h}_{u,1}}
		{\|\mathbf{h}_{u,1}\|},
	\end{equation}
	while the interference contribution from user $i\neq u$ is
	\begin{equation}
		\mathbf{h}_{u,k}^{\mathsf H}\mathbf{w}_i^{\mathrm{MRT}}
		= \frac{\mathbf{h}_{u,k}^{\mathsf H}\mathbf{h}_{i,1}}
		{\|\mathbf{h}_{i,1}\|}.
	\end{equation}
	
	\subsubsection{ ZF Precoding}
	
	To explicitly mitigate multiuser interference, we also consider ZF precoding
	based on the reference-port channel matrix $\mathbf{H}_1$. The unnormalized
	ZF precoder is given by
	\begin{equation}
		\widetilde{\mathbf{W}}^{\mathrm{ZF}}
		= \mathbf{H}_1
		\left(\mathbf{H}_1^{\mathsf H}\mathbf{H}_1\right)^{-1},
	\end{equation}
	where the $u$-th column $\widetilde{\mathbf{w}}_u^{\mathrm{ZF}}$ is the
	beamforming direction for user $u$. To satisfy the transmit power
	constraint, we apply per-column normalization
	\begin{equation}
		\mathbf{w}_u^{\mathrm{ZF}} =
		\frac{\widetilde{\mathbf{w}}_u^{\mathrm{ZF}}}
		{\|\widetilde{\mathbf{w}}_u^{\mathrm{ZF}}\|}, \quad u=1,\ldots,U.
	\end{equation}
	By construction, ZF precoding satisfies
	\begin{equation}
		\mathbf{h}_{i,1}^{\mathsf H}\mathbf{w}_u^{\mathrm{ZF}} = 0,
		\quad \forall\, i\neq u,
	\end{equation}
	so that the multiuser interference is completely removed at the reference
	ports in the absence of CSI errors. At a generic port $k$, however, the
	channels differ from $\mathbf{h}_{u,1}$ due to the fluid-antenna displacement
	and the resulting spatial correlation, and residual interference is
	observed:
	\begin{equation}
		\mathbf{h}_{u,k}^{\mathsf H}\mathbf{w}_i^{\mathrm{ZF}} \neq 0,
		\quad k\geq 2, \; i\neq u.
	\end{equation}
This residual interference arises because the precoder is designed using CSI from a single reference port, while reception may occur at a different port.\footnote{\textcolor{black}{
		It is worth noting that imperfect CSI or channel estimation errors at the reference port would primarily modify the effective signal dimensions and the strength of spatial correlation across ports, without altering the fundamental analytical structure of the proposed per-port SIR distributions and outage analysis.
}}

\textcolor{black}{
	When a non-reference port is selected, this CSI mismatch introduces residual multiuser interference. This effect is inherent to FAMA under single-port CSI constraints and is captured by the derived per-port SIR distributions and cross-port correlation, influencing the achievable selection diversity and outage performance.
}

\textcolor{black}{
	The adopted ZF precoder relies on single-port CSI, reflecting practical CSI acquisition constraints in FAMA systems. Under this constraint, reference-port ZF provides an analytically tractable baseline for studying the interaction between multiuser interference suppression and fluid-antenna port selection.\footnote{\textcolor{black}{
			ZF perfectly nulls interference only at the reference port; selecting a different port results in residual interference due to CSI mismatch, reflecting the inherent sensitivity of ZF under FAMA's single-port CSI assumption.
	}}
}
 \textcolor{black}{
	More sophisticated precoding strategies, such as robust or statistical ZF designs that explicitly account for port correlation or uncertainty, may further mitigate residual interference across the fluid-antenna aperture. However, such designs require additional CSI or statistical knowledge and are beyond the scope of the present analysis, which focuses on establishing fundamental performance benchmarks under minimal CSI assumptions.
}

%	The objective of the subsequent analysis is to characterize the statistical
%	behavior of $\Gamma_u$ and to quantify the performance gains of MISO–FAMA
%	in terms of outage probability and rate.
	
	\subsection{Per-Port SIR}
	 For clarity, we
	focus on a fixed user $u$ and a generic port index $k$.
	
	The per-port SIR is defined  as
	\begin{equation}
		X_{u,k}
		\triangleq
		\frac{
			P_u |\mathbf{h}_{u,k}^{\mathsf H}\mathbf{w}_u|^2
		}{
			\sum_{i\neq u} P_i |\mathbf{h}_{u,k}^{\mathsf H}\mathbf{w}_i|^2
		}.
	\end{equation}
	
	 The subsequent analysis focuses on the
	statistical characterization of this per-port SIR and its maximization over
	the $N$ fluid-antenna ports. \textcolor{black}{Henceforth, equal transmit power allocation across users is assumed to provide a fair and analytically tractable baseline that isolates the effects of multiuser precoding, spatial correlation, and fluid-antenna port selection. While more general power-control strategies could be considered, they would primarily scale the effective SIR parameters without altering the analytical structure developed in this work.}

	 The SIR can then be
	written as the ratio
	\begin{equation}
		X_{u,k}
		= \frac{U_k}{V_k},
	\end{equation}
	where
	\begin{align}
		U_k &\triangleq |\mathbf{h}_{u,k}^{\mathsf H}\mathbf{w}_u|^2,
		\label{eq:Uk_def}\\
		V_k &\triangleq \sum_{i\neq u}
		|\mathbf{h}_{u,k}^{\mathsf H}\mathbf{w}_i|^2.
		\label{eq:Vk_def}
	\end{align}
	
		\textcolor{black}{
		The per-port SIR derivations exploit conditional distributional properties under i.i.d.\ Rayleigh fading and reference-port CSI. For a generic port $k$, the channel can be written in the correlated form
		$\mathbf{h}_{u,k}=\mu_k \mathbf{h}_{u,1}+\sqrt{1-\mu_k^2}\,\mathbf{e}_{u,k}$,
		where $\mathbf{e}_{u,k}\sim\mathcal{CN}(\mathbf{0},\mathbf{I})$ is independent of $\mathbf{h}_{u,1}$.
		Conditioned on the reference-port CSI used to construct the precoders, the beamforming vectors are deterministic. 			
		For ZF precoding, the interfering beamformers satisfy
		$\mathbf{h}_{u,1}^{\mathsf H}\mathbf{w}_i^{\mathrm{ZF}}=0$ for all $i\neq u$.
		Hence, the correlated component $\mu_k \mathbf{h}_{u,1}$ does not contribute to the interference at port $k$, and the aggregate interference depends only on the innovation component $\mathbf{e}_{u,k}$, which is independent of the desired projection term. 		
		For MRT precoding, the beamformer of user $u$ depends only on $\mathbf{h}_{u,1}$, while the interfering beamformers depend on $\mathbf{h}_{i,1}$ for $i\neq u$.
		Due to the independence of user channels and the rotational invariance of Rayleigh fading, the desired projection $\mathbf{h}_{u,k}^{\mathsf H}\mathbf{w}_u^{\mathrm{MRT}}$ and the interference projections $\mathbf{h}_{u,k}^{\mathsf H}\mathbf{w}_i^{\mathrm{MRT}}$ ($i\neq u$) are independent in distribution when conditioned on the beamformers.
		This conditional independence enables the closed-form Beta-prime characterization of the per-port SIR under MRT.}
%		\subsection{Single-Port Outage and FAMA Over $N$ Ports}
%	
%We begin by characterizing the outage probability of the per-port SIR~$X$, which
%serves as the fundamental building block for the overall FAMA performance. Let
%$\gamma>0$ denote the target SIR threshold. The single-port outage probability
%$\Pr\{X_{u,k}<\gamma\}$ quantifies the reliability at an individual fluid-antenna
%position. Building on this result, we then analyze FAMA operation over $N$ ports,
%where the receiver selects the port with the maximum SIR. This leads to the
%FAMA outage probability $\Pr\{\max_{k=1,\dots,N} X_{u,k} < \gamma\}$, which depends
%jointly on the per-port SIR distribution and the spatial dependence across ports.

	\section{Statistical SIR Analysis}\label{statistical}
	In this section, we characterize the per-port SIR statistics under specific linear precoders, which are MRT and ZF. \textcolor{black}{To aid readability, we briefly outline the main steps of the derivation before presenting the final expression. The key idea is to condition on the desired-signal power, exploit the independence of the interference terms given the precoder structure, and then marginalize over the resulting distributions.
	}
	
	\subsection{MRT-Based Statistical Model}
		To obtain a tractable analytical form, we first focus on MRT precoding
	based on the reference-port CSI. 
	Under the i.i.d. Rayleigh assumption across users and the fluid-antenna
	correlation model in Section~\ref{system}, the channel vectors remain circularly
	symmetric complex Gaussian, and the beamforming directions
	$\{\mathbf{w}_i\}_{i=1}^{U}$ are isotropically distributed on the complex
	unit sphere.\footnote{\textcolor{black}{
The analytical tractability of the proposed framework relies on the rotational invariance of Rayleigh fading, which ensures isotropically distributed channel vectors and enables closed-form characterization of projected gains under MRT and ZF, as well as tractable cross-port correlation modeling. In Rician fading, the deterministic LoS component breaks isotropy and introduces direction-dependent statistics, so the projected gain distributions depend on the relative orientation of the LoS component, precoding subspace, and antenna displacement. Consequently, the exact analysis does not extend directly to Rician channels. Extending the framework to Rician fading remains an important direction for future work.
		}
	} For a fixed port $k$ and user $u$, it is convenient to write
	\begin{equation}
		\mathbf{h}_{u,k}
		= \sqrt{\beta_u}\,\mathbf{g}_{u,k},
	\end{equation}
	where $\mathbf{g}_{u,k}\sim\mathcal{CN}(\mathbf{0},\mathbf{I}_M)$ is
	standard complex Gaussian. Using the rotational invariance of
	$\mathbf{g}_{u,k}$, it can be shown that (for a given $k$) \cite{Tse2005}
	\begin{equation}
		U_k^{\mathrm{MRT}} = |\mathbf{h}_{u,k}^{\mathsf H}\mathbf{w}_u|^2
		\sim \Gamma(M,1),\label{mrt1}
	\end{equation}
	i.e., $U_k$ is Gamma distributed with shape $M$ and unit scale.\footnote{\textcolor{black}{
			A rigorous derivation of \eqref{mrt1} is provided in Appendix~A. In particular, we condition on the MRT beamformer constructed from the reference-port channel, use the isotropy of Rayleigh fading to characterize the projection of $\mathbf{h}_{u,1}$ onto the normalized beam direction, and account for the beamformer normalization to obtain the resulting Gamma-distributed effective channel gain.
		}
	} Moreover,
	each interference term in \eqref{eq:Vk_def} satisfies
	\begin{equation}
		|\mathbf{h}_{u,k}^{\mathsf H}\mathbf{w}_i|^2 \sim \mathrm{Exp}(1),
		\quad i\neq u,
	\end{equation}
	and the summation of $L=U-1$ independent exponential terms yields
	\begin{equation}
		V_k^{\mathrm{MRT}} \sim \Gamma(L,1),
	\end{equation}
	where $L$ is the number of interfering data streams. The above relations
	hold marginally for each fixed port $k$. Notably, these expressions capture the per-port power
	statistics of the desired and interfering signals under MRT and Rayleigh
	fading.
	
	\subsubsection{SIR Distribution  Under MRT}
	
	Combining \eqref{eq:Uk_def}–\eqref{eq:Vk_def} with the Gamma distributions
	of $U_k$ and $V_k$, the per-port SIR $	X_{u,k}^{\mathrm{MRT}} = \frac{U_k^{\mathrm{MRT}}}{V_k^{\mathrm{MRT}}}$ 
	follows a Beta-prime distribution with parameters $(M,L)$ \cite{Papoulis1965,Simon2004}.\footnote{A Beta-prime, also known as Pearson Type~VI  distribution
		with parameters $(a,b)$ \cite{Simon2004} has a pdf  given as
		\begin{equation}
			f_X(x)
			=
			\frac{x^{a-1}(1+x)^{-(a+b)}}{B(a,b)},
			\qquad x>0,
			\label{eq:beta_prime_pdf}
		\end{equation}
		where $B(a,b)=\Gamma(a)\Gamma(b)/\Gamma(a+b)$ denotes the Beta function.
		The cumulative distribution function (CDF) is given in terms of the
		regularized incomplete Beta function $I_y(a,b)$:
		\begin{equation}
			F_X(x)
			= \Pr\{X<x\}
			= I_{\frac{x}{1+x}}(a,b),
			\qquad x>0,
			\label{eq:beta_prime_cdf}
		\end{equation}
		where
		\begin{equation}
			I_y(a,b)
			= \frac{1}{B(a,b)}
			\int_{0}^{y} t^{a-1}(1-t)^{b-1}\,dt,
			\qquad 0<y<1.\label{beta}
	\end{equation}} The cumulative
	distribution function (CDF) of $X_{u,k}$ at threshold $\gamma>0$ is given by
	\begin{equation}
		F_X(\gamma)
		\triangleq \Pr\{X_{u,k} < \gamma\}
		= I_{\frac{\gamma}{1+\gamma}}(M,L).
		\label{eq:FX_gamma_beta}
	\end{equation}
 For
	integer $M$ and $L$, \eqref{eq:FX_gamma_beta} admits the closed-form
	finite-sum representation
	\begin{equation}
		F_X(\gamma)
		=
		1 - \frac{1}{(1+\gamma)^{M+L-1}}
		\sum_{j=0}^{M-1}
		\binom{M+L-1}{j}\gamma^{j},
		\label{eq:FX_gamma_finitesum}
	\end{equation}
	since the incomplete beta function has the finite-sum expansion \cite[Eq. 6.6.4]{Abramowitz1964}
	\begin{align}
		I_{y}(a,b)
		= 
		1 - 
		\sum_{j=0}^{a-1}
		\binom{a+b-1}{\,j\,}
		y^{\,j}(1-y)^{\,a+b-1-j}.
	\end{align}
	The expressions in \eqref{eq:FX_gamma_beta}, \eqref{eq:FX_gamma_finitesum}
	fully characterize the SIR statistics at an individual fluid-antenna port
	under MISO downlink transmission with MRT precoding. 
	
	\subsection{ZF-Based Statistical Model}
	
	We now characterize the per-port SIR statistics under  ZF
	precoding based on the reference-port CSI. As in the MRT case, the BS
	acquires the channel vectors $\{h_{u,1}\}_{u=1}^{U}$ corresponding to the
	first port of each user and constructs the ZF precoder accordingly.
	
	As in the MRT case, we operate in the interference-limited regime and define
	the desired-signal and interference powers at port $k$ as
	\begin{align}
		U^{\mathrm{ZF}}_{k}
		&\triangleq
		\big|\mathbf{h}_{u,k}^{H}\mathbf{w}_{u}^{\mathrm{ZF}}\big|^{2},
		\label{eq:UkZF} \\
		V^{\mathrm{ZF}}_{k}
		&\triangleq
		\sum_{i\neq u}
		\big|
		\mathbf{h}_{u,k}^{H}\mathbf{w}_{i}^{\mathrm{ZF}}
		\big|^{2}
		\label{eq:VkZF}.
	\end{align}
	
	The per-port SIR under ZF is 
	\begin{equation}
		X_{u,k}^{\mathrm{ZF}}
		= \frac{U^{\mathrm{ZF}}_{k}}{V^{\mathrm{ZF}}_{k}}.
	\end{equation}

At the reference port $k=1$, the ZF beamforming vector
$\mathbf{w}_{u}^{\mathrm{ZF}}$ is orthogonal to the co-user channel
vectors $\{\mathbf{h}_{i,1}\}_{i\neq u}$ and lies in the
$(M-U+1)$–dimensional nullspace of $\mathbf{H}_{1}$.  
Since $\mathbf{h}_{u,1}$ is independent of this nullspace and is
circularly symmetric complex Gaussian, its projection onto the
ZF beamforming direction has $(M-U+1)$ effective degrees of
freedom. Therefore,
\begin{equation}
	U^{\mathrm{ZF}}_{1}
	= \big|\mathbf{h}_{u,1}^{H}\mathbf{w}_{u}^{\mathrm{ZF}}\big|^{2}
	\sim \Gamma(M-U+1, 1),
	\label{eq:UZF_ref_no_proj}
\end{equation}
which follows directly from the standard ZF property that the
useful signal under ZF beamforming experiences a chi-square
distribution with $2(M-U+1)$ degrees of freedom.\footnote{\textcolor{black}{
		A rigorous derivation of \eqref{eq:UZF_ref_no_proj} is given in Appendix~A. ZF projects the reference-port channel of user $u$ onto the nullspace of the other users' reference-port channels, whose dimension is $M-U+1$. By standard results on projections of complex Gaussian vectors, the resulting effective gain is Gamma distributed with shape parameter $M-U+1$ after normalization.
	}
} For a general port $k\ge 2$, $\mathbf{h}_{u,k}$ remains circularly
	symmetric Gaussian and is independent of the ZF nullspace orientation.
	Thus, by rotational invariance, we obtain
	\begin{equation}
		U^{\mathrm{ZF}}_{k}
		\sim \Gamma(M-U+1,1),
		\qquad \forall k.
		\label{eq:UkZFgeneral}
	\end{equation}

For a fixed user $u$ and port $k$, the vectors $\mathbf{h}_{u,k}$ and
$\mathbf{w}_{i}^{\mathrm{ZF}}$ (for $i\neq u$) are independent and isotropic
in $\mathbb{C}^{M}$, which implies the well-known scalar-product result
\begin{equation}
	\big|\mathbf{h}_{u,k}^{H}
	\mathbf{w}_{i}^{\mathrm{ZF}}\big|^{2}
	\sim \exp(1),
	\qquad i\neq u.
	\label{eq:ZF_exp}
\end{equation}
Since $L=U-1$ such terms contribute independently. Thus, we have
\begin{equation}
	V^{\mathrm{ZF}}_{k}
	\sim \Gamma(L,1),
	\qquad \forall k.
	\label{eq:VkZFfinal}
\end{equation}
	
	\subsubsection{SIR Distribution Under ZF}
	Since $U^{\mathrm{ZF}}_{k}\sim\Gamma(M-U+1,1)$ and
	$V^{\mathrm{ZF}}_{k}\sim\Gamma(L,1)$ and these are independent for any fixed
	port $k$, the ratio $X_{u,k}^{\mathrm{ZF}}$
	follows a Beta-prime distribution with parameters $(M-U+1,L)$. Its CDF is
	\begin{equation}
		F_{X}^{\mathrm{ZF}}(\gamma)
		=
		I_{\frac{\gamma}{1+\gamma}}(M-U+1,L),
		\label{eq:ZF_betaprime}
	\end{equation}
	showing explicitly that ZF reduces the desired-signal shape parameter from
	$M$ (MRT) to $(M-U+1)$ due to nulling of co-user interference.
	
	Analogous to the MRT case in~\eqref{eq:FX_gamma_finitesum}, this CDF also
	admits a closed-form finite-sum representation for integer parameters.
	Specifically, we have
	\begin{equation}
		F_{X}^{\mathrm{ZF}}(\gamma)
		= 1
		- \frac{1}{(1+\gamma)^{M-U+L}}
		\sum_{j=0}^{M-U}
		\binom{M-U+L}{j}\gamma^{j}.
		\label{eq:FX_ZF_finitesum}
	\end{equation}
	Similar to \eqref{eq:ZF_betaprime}, this expression highlights the reduced effective array gain under ZF,
	reflected in the smaller shape parameter $(M-U+1)$ compared to the MRT
	value $M$. In the subsequent
	sections, the  per-port statistics for MRT and ZF are used to analyze the outage
	probability of the FAMA-enabled MISO system, where each user selects the
	port that maximizes its instantaneous SIR.
	
\begin{remark}[MRT vs.\ ZF: Comparison of Per-Port SIR Distributions]
	Both MRT and ZF lead to per-port SIRs that follow a Beta-prime distribution,
	differing only in the effective numerator shape parameter. MRT attains the full
	array gain of $M$ antennas and therefore yields a larger shape parameter, while
	ZF suppresses multiuser interference at the reference port but incurs an
	array-gain reduction to $M-U+1$. Under FAMA port selection, the distinction
	between the two precoders is thus reflected directly in the corresponding
	Beta-prime parameters $(M,L)$ for MRT and $(M-U+1,L)$ for ZF. As shown in
	Section~\ref{subsec:corr}, the cross-port SIR correlation $\rho_X(k,\ell)$
	decreases with the effective array dimension. Consequently, MRT induces weaker
	correlation across ports than ZF, enabling stronger selection diversity and 
	improved FAMA performance.
\end{remark}

%	\subsection{Order-Statistics Approximation for Correlated Ports}
%	
%Due to the correlation among $\{X_{u,k}\}$, the exact CDF of the maximum
%$\Gamma_u = \max_{k} X_{u,k}$ is analytically intractable. A widely used
%approach in the analysis of correlated-branch selection diversity is to
%approximate the $N$ correlated branches by an equivalent number of
%independent branches \cite{SelDivCorr1,SelDivCorr2}. In this framework, the
%effective number of ports is modeled as
%\begin{equation}
%	N_{\mathrm{eff}}
%	=
%	\frac{N}{1 + (N-1)\rho_X},
%\end{equation}
%where $\rho_X$ denotes the average of the per-pair correlation coefficients
%in \eqref{eq:rhoX_final_PDF}. Substituting $N_{\mathrm{eff}}$ into the
%independent-branch expression yields the approximation
%\begin{equation}
%	P_{\mathrm{out}}(\gamma)
%	\approx
%	\big[F_X(\gamma)\big]^{N_{\mathrm{eff}}},
%\end{equation}
%which recovers the ideal i.i.d.\ behavior when $\rho_X\to 0$, and reduces
%to single-port performance when $\rho_X\to 1$.
%
%Since MRT provides a larger effective signal dimension
%($M_{\mathrm{eff}}=M$) than ZF ($M_{\mathrm{eff}}=M-U+1$), it induces weaker
%cross-port correlation and therefore a larger $N_{\mathrm{eff}}$, rendering
%FAMA more effective under MRT for a given port geometry.

	\section{FAMA Outage Probability Analysis}\label{fama}
In this section, we characterize the outage performance of FAMA by progressively analyzing the
per-port SIR distribution, the effect of selection over multiple ports, and the role of spatial
correlation across the aperture. Building on these components, we derive tight analytical upper
and lower bounds on the FAMA outage and establish its small- and large-SIR asymptotic behavior.

		\subsection{Single-Port Outage Probability}

From \eqref{eq:FX_gamma_beta}, the outage probability at a single port $k$
of user $u$ is
\begin{equation}
	P_{\mathrm{out}}^{(1)}(\gamma)
	\triangleq
	\Pr\{X_{u,k} < \gamma\}
	= F_X(\gamma)
	= I_{\frac{\gamma}{1+\gamma}}(M_{\mathrm{eff}},L),
	\label{eq:single_port_outage}
\end{equation}
where $M_{\mathrm{eff}}$ denotes the effective signal dimension determined by
the precoder, i.e.,
\[
M_{\mathrm{eff}} =
\begin{cases}
	M, & \text{for MRT}, \\
	M-U+1, & \text{for ZF},
\end{cases}
\]
and $L$ is the number of interfering data streams. \textcolor{black}{	This result shows that the effect of multiuser precoding enters the per-port SIR distribution solely through the effective signal dimension, while the fluid-antenna displacement governs the correlation across ports.}
The finite-sum closed-form expressions in
\eqref{eq:FX_gamma_finitesum} and \eqref{eq:FX_ZF_finitesum} apply directly for integer
$(M_{\mathrm{eff}},L)$.
\textcolor{black}{
	The single-port SIR distribution and outage probability are  not intended to represent the performance of a FAMA system, but rather to serve as exact analytical building blocks for the subsequent characterization of multiport selection under correlation.
}

	\subsection{FAMA Selection Over $N$ Ports}
	
	Under FAMA operation, user $u$ selects the port with the strongest
	instantaneous SIR. The effective SIR is therefore
	\begin{equation}
		\Gamma_u = \max_{k=1,\ldots,N} X_{u,k}.
	\end{equation}
	
	The corresponding outage probability is
	\begin{equation}
	\!\!\!	P_{\mathrm{out}}(\gamma)
	\!	=\! \Pr\{\Gamma_u < \gamma\}
	\!	=\! \Pr\{X_{u,1}<\gamma,\ldots,X_{u,N}<\gamma\}.
		\label{eq:Pout_max_def}
	\end{equation}
	
	An exact closed-form evaluation of \eqref{eq:Pout_max_def} is impeded by the
	correlation among the SIRs $\{X_{u,k}\}$ across the fluid ports, which
	arises from the shared MISO channel vectors and the spatially correlated
	fading model in Section~\ref{system}. Nevertheless, the per-port CDF $F_X(\gamma)$
	enables the derivation of rigorous and tractable bounds on
	$P_{\mathrm{out}}(\gamma)$.
	
	\subsection{Cross-Port SIR Correlation Analysis}\label{corr1}\label{subsec:corr}

	The outage performance of FAMA depends critically on the statistical
	dependence among the per-port SIRs $\{X_{u,k}\}$.
	Although each $X_{u,k}$ follows a Beta-prime distribution, the SIRs are
	not independent due to the common component in the  user-channel
	model \eqref{channel}, 
	where $\mathbf{x}_{u,0}\sim\mathcal{CN}(\mathbf{0},\mathbf{I})$ is shared
	across ports, while $\mathbf{x}_{u,k}$ are mutually independent.\footnote{	\textcolor{black}{
			For fluid-antenna systems with correlated ports, an exact closed-form expression for the outage probability under port selection would require the joint distribution of the per-port SIRs. Due to the strong statistical dependence induced by shared channel components and multiuser interference, this joint distribution does not admit a tractable analytical form, even for a single user.
	}}

	\begin{lemma}\label{lem01}
		The correlation coefficient between ports $k,\ell$ admits the approximation
		\begin{equation}
			\rho_{U}(k,\ell)
			= \mathrm{corr}(U_k,U_\ell)
			\approx
			\mu_k^{2}\mu_\ell^{2}
			\frac{M_{\mathrm{eff}}}{M_{\mathrm{eff}}+L},
			\label{eq:rhoU_final}
		\end{equation}
		where $M_{\mathrm{eff}} = M$ for MRT and $M_{\mathrm{eff}} = M-U+1$ for ZF.
			\end{lemma}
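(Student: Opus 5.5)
The plan is to compute $\mathrm{Cov}(U_k,U_\ell)$ directly, using the constructive channel model \eqref{channel}, and then divide by the common variance. Since $U_k\sim\Gamma(M_{\mathrm{eff}},1)$ marginally (from \eqref{mrt1} and \eqref{eq:UkZFgeneral}), we have $\mathrm{Var}(U_k)=M_{\mathrm{eff}}$, so the task reduces to approximating the covariance.

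First, condition on the precoder $\mathbf{w}_u$ and write the effective scalar channel at port $k$ as
\begin{equation*}
s_k\triangleq \mathbf{h}_{u,k}^{\mathsf H}\mathbf{w}_u=\sqrt{\beta_u}\big(\mu_k\,\mathbf{x}_{u,0}^{\mathsf H}\mathbf{w}_u+\sqrt{1-\mu_k^2}\,\mathbf{x}_{u,k}^{\mathsf H}\mathbf{w}_u\big).
\end{equation*}
The innovation terms $\mathbf{x}_{u,k}^{\mathsf H}\mathbf{w}_u$ and $\mathbf{x}_{u,\ell}^{\mathsf H}\mathbf{w}_u$ are independent $\mathcal{CN}(0,1)$ variables and are independent of the common part $c\triangleq\mathbf{x}_{u,0}^{\mathsf H}\mathbf{w}_u$. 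Here we use the normalization adopted in Section~\ref{statistical}, where $\beta_u$ is absorbed. The common part $c$ carries the array gain, with $|c|^2\sim\Gamma(M_{\mathrm{eff}},1)$, because $\mathbf{w}_u$ is built from $\mathbf{x}_{u,0}$.

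Expanding $U_kU_\ell=|s_k|^2|s_\ell|^2$ and taking expectations, every term containing an odd power of a zero-mean innovation vanishes. The cross-fluctuation terms between $c$ and the innovations cancel in the covariance. The dominant surviving contribution is
\begin{equation*}
\mathrm{Cov}(U_k,U_\ell)\approx \mu_k^2\mu_\ell^2\,\mathrm{Var}(|c|^2)=\mu_k^2\mu_\ell^2 M_{\mathrm{eff}}.
\end{equation*}
This gives the raw correlation $\mu_k^2\mu_\ell^2$ for the desired-signal powers.

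The factor $M_{\mathrm{eff}}/(M_{\mathrm{eff}}+L)$ should then come from the normalization implicit in the SIR-level analysis. The term $U_k$ enters through $U_k/(U_k+V_k)$, which is Beta$(M_{\mathrm{eff}},L)$ distributed and independent of the total power $U_k+V_k$. I would argue, via a first-order delta-method linearization, that only the fraction $M_{\mathrm{eff}}/(M_{\mathrm{eff}}+L)$ of the total $\Gamma(M_{\mathrm{eff}}+L,1)$ power fluctuation is attributable to the coherently shared signal component. The interference powers $V_k,V_\ell$ built from independent beams contribute no shared variance to first order. Combining these two facts gives \eqref{eq:rhoU_final}.

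The main obstacle is this last step. An exact computation of $\mathrm{corr}(U_k,U_\ell)$ naturally yields $\mu_k^2\mu_\ell^2$ rather than the stated expression, so the extra factor must be justified as an approximation. I would first try a Gaussian (moment-matching) linearization of the normalized quantities around their means, accepting that the result holds only approximately, as the lemma's "$\approx$" indicates. I would not attempt an exact identity.
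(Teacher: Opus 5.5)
\textbf{Verdict: there is a genuine gap.} Your covariance step matches the paper's Appendix~B, which also gets $\mathrm{Cov}(U_k,U_\ell)\approx\mu_k^2\mu_\ell^2M_{\mathrm{eff}}$ from the shared component; in your decomposition this is in fact exact. The gap is the step that should produce the factor $M_{\mathrm{eff}}/(M_{\mathrm{eff}}+L)$.

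In the paper this factor is not derived from any SIR-level normalization; it is built into the model. The paper writes $U_k\approx\mu_k^2S_{\mathrm{c}}+S_k$ and \emph{assigns} the port-specific residual the law $S_k\sim\Gamma(L,1)$. This residual is taken independent across ports and of $S_{\mathrm{c}}\sim\Gamma(M_{\mathrm{eff}},1)$, and is credited with the $L$ degrees of randomness of the interference. That gives $\mathrm{Var}(U_k)\approx\mu_k^4M_{\mathrm{eff}}+L$. This is then replaced by $M_{\mathrm{eff}}+L$, assuming $\mu_k^2,\mu_\ell^2$ are not small (effectively $\mu_k^4\approx1$), and dividing the covariance by it gives \eqref{eq:rhoU_final}.

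Your Beta/delta-method step cannot play this role. $\mathrm{corr}(U_k,U_\ell)$ is fixed by the joint law of $(U_k,U_\ell)$. The fact that $U_k/(U_k+V_k)\sim\mathrm{Beta}(M_{\mathrm{eff}},L)$ is independent of $U_k+V_k$ changes neither $\mathrm{Cov}(U_k,U_\ell)$ nor $\mathrm{Var}(U_k)$. Your heuristic that a ``fraction $M_{\mathrm{eff}}/(M_{\mathrm{eff}}+L)$ of the $\Gamma(M_{\mathrm{eff}}+L,1)$ fluctuation is shared'' actually describes $\mathrm{corr}(U_k+V_k,U_\ell+V_\ell)$ with independent $V$'s. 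That is the correlation of total received power, a different quantity from the one in the lemma. To reach the stated formula you must adopt the paper's modeling assumption explicitly and present the lemma as a model-based approximation, not as something derived.

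Your variance step is also inconsistent with your own decomposition. Set $\beta_u=1$. Then $c=\mathbf{x}_{u,0}^{\mathsf H}\mathbf{w}_u$ is real and nonnegative: it is $\|\mathbf{x}_{u,0}\|$ for MRT and the norm of the projection of $\mathbf{x}_{u,0}$ onto the ZF nullspace for ZF. Also $n_k=\mathbf{x}_{u,k}^{\mathsf H}\mathbf{w}_u\sim\mathcal{CN}(0,1)$ is independent of $c$. So $U_k=|\mu_kc+\sqrt{1-\mu_k^2}\,n_k|^2$ has mean $\mu_k^2M_{\mathrm{eff}}+1-\mu_k^2$ and variance $M_{\mathrm{eff}}\mu_k^2(2-\mu_k^2)+(1-\mu_k^2)^2$. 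The $\Gamma(M_{\mathrm{eff}},1)$ marginal you invoke, and hence $\mathrm{Var}(U_k)=M_{\mathrm{eff}}$, holds only when $|\mu_k|=1$.

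The exact correlation is therefore $\mu_k^2\mu_\ell^2M_{\mathrm{eff}}/\sqrt{\mathrm{Var}(U_k)\mathrm{Var}(U_\ell)}$ with these variances. In it, $L$ enters only through $M_{\mathrm{eff}}$, and not at all under MRT, where $U_k$ involves only user $u$'s channels. This confirms your suspicion that no exact computation gives \eqref{eq:rhoU_final}. The explicit $M_{\mathrm{eff}}+L$ denominator comes only from the paper's $\Gamma(L,1)$ residual assumption, and that assumption is the ingredient your write-up is missing.
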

		\begin{proof}
		Please see Appendix~\ref{lem01proof}.	
		\end{proof}
	\textcolor{black}{
The proposed correlation expression is obtained via a linearization-based approximation and is accurate when the aggregate interference exhibits limited relative fluctuations, such as in moderate-to-large effective interference dimensions or non-negligible fluid-antenna apertures. In these regimes, the approximation closely matches MC estimates, as shown in Section~V.	}
	\textcolor{black}{
In strongly correlated regimes (e.g., very small apertures), quantitative accuracy may degrade. However, the achievable FAMA gain is inherently limited and the cross-port SIR correlation approaches unity, so the expression correctly captures the limiting behavior.
	}
	\textcolor{black}{
The derived cross-port SIR correlation coefficient therefore serves as a quantitative indicator of the operating regime, measuring proximity to either the fully correlated or independent-port extremes.
		}
		
	The following lemma gives an approximation of the cross-port SIR correlation.
		\begin{lemma}\label{lem02}
			The cross-port SIR correlation can be approximated as
		\begin{equation}
			\rho_{X}(k,\ell)
			\approx
					\rho_{U}(k,\ell)\,\textcolor{black}{\frac{L}{L + M_{\mathrm{eff}}+1}.}
			\label{eq:rhoX_final}
		\end{equation} 
	\end{lemma}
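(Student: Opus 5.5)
The plan is to derive \eqref{eq:rhoX_final} with a first-order (delta-method) linearization of the ratio $X_k=U_k/V_k$ around the means of numerator and denominator, in the same spirit as Lemma~\ref{lem01}. Write $U_k=\bar U(1+\delta_{U,k})$ and $V_k=\bar V(1+\delta_{V,k})$ with $\bar U=M_{\mathrm{eff}}$ and $\bar V=L$. To first order this gives
\begin{equation*}
X_k\approx \frac{M_{\mathrm{eff}}}{L}\big(1+\delta_{U,k}-\delta_{V,k}\big).
\end{equation*}
Hence
\begin{equation*}
\mathrm{Cov}(X_k,X_\ell)\approx \Big(\frac{M_{\mathrm{eff}}}{L}\Big)^2\big[\mathrm{Cov}(\delta_{U,k},\delta_{U,\ell})+\mathrm{Cov}(\delta_{V,k},\delta_{V,\ell})-\mathrm{Cov}(\delta_{U,k},\delta_{V,\ell})-\mathrm{Cov}(\delta_{V,k},\delta_{U,\ell})\big].
\end{equation*}
The cross terms are dropped using the conditional independence of desired and interference projections noted at the end of Section~\ref{system}. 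Under ZF the interference depends only on the innovation; under MRT the signal and interference beamformers are independent. The first bracketed term equals $\rho_U(k,\ell)/M_{\mathrm{eff}}$, because $\mathrm{Var}(\delta_U)=1/M_{\mathrm{eff}}$ for a $\Gamma(M_{\mathrm{eff}},1)$ variable. Lemma~\ref{lem01} then supplies $\rho_U$.

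The interference covariance term is where the approximation is made. I would argue that $V_k$ and $V_\ell$ share the common component $\mathbf{x}_{u,0}$ projected onto the interfering beams. Under ZF this component is annihilated at the reference port. In the regime targeted by the lemma, the dominant dependence is carried by the desired-signal term. So I would treat the interference fluctuations as contributing mainly to the marginal variance rather than to the cross-port covariance, i.e.\ set $\mathrm{Cov}(\delta_{V,k},\delta_{V,\ell})\approx 0$ at this order. This yields $\mathrm{Cov}(X_k,X_\ell)\approx (M_{\mathrm{eff}}/L)^2\rho_U/M_{\mathrm{eff}}$.

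For the normalization, the exact Beta-prime variance is used rather than the linearized one, which is what produces the ``$+1$''. For $(a,b)=(M_{\mathrm{eff}},L)$,
\begin{equation*}
\mathrm{Var}(X)=\frac{a(a+b-1)}{(b-1)^2(b-2)},
\end{equation*}
and the mean is $a/(b-1)$. Combining these with the covariance, and replacing the linearized means $M_{\mathrm{eff}}/L$ consistently, gives a ratio of the form $\rho_U\,L/(L+M_{\mathrm{eff}}+1)$ up to terms that are $O(1/L)$. Equivalently, the linearized variance is $(M_{\mathrm{eff}}/L)^2\,[1/M_{\mathrm{eff}}+1/L]$ with a second-order correction. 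Using $1/M_{\mathrm{eff}}+1/L+1/(M_{\mathrm{eff}}L)$, which is the second-order delta expansion of $\mathrm{Var}(U/V)$ keeping the $\mathrm{Var}(\delta_U)\mathrm{Var}(\delta_V)$ product term, the ratio of covariance to variance becomes
\begin{equation*}
\frac{\rho_U/M_{\mathrm{eff}}}{(L+M_{\mathrm{eff}}+1)/(M_{\mathrm{eff}}L)}=\rho_U\frac{L}{L+M_{\mathrm{eff}}+1}.
\end{equation*}
This is exactly \eqref{eq:rhoX_final}, and it is the route I would commit to.

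The main obstacle is justifying that the interference covariance and the cross terms are negligible relative to the retained terms, together with the mixed first/second-order bookkeeping in the variance. I would check the limiting cases to support the approximation. When $\mu\to 0$ it gives $\rho_X\to 0$. When $L\to\infty$ it gives $\rho_X\to\rho_U$, since the interference becomes deterministic. In the fully correlated limit, accuracy is not claimed, consistent with the remark after Lemma~\ref{lem01}.
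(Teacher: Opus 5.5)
Your committed route is essentially the paper's. The paper linearizes only $1/V_k$ about $\bar V=L$ and keeps $U_k$ exact, which is the product model $X_k\approx(\bar U/\bar V)(1+\delta_{U,k})(1-\delta_{V,k})$. It also asserts that $\delta_{V,k}$ is zero-mean, independent across ports and independent of all $U_j$. This gives $\mathrm{Cov}(X_k,X_\ell)\approx\mathrm{Cov}(U_k,U_\ell)/L^2$ and $\mathrm{Var}(X_k)\approx\mathrm{Var}(U_k)/L^2+\mathbb{E}[U_k^2]\,\mathrm{Var}(V_k)/L^4$. The ``$+1$'' comes from $\mathbb{E}[U_k^2]/\mathrm{Var}(U_k)=M_{\mathrm{eff}}+1$, which is exactly your $1/(M_{\mathrm{eff}}L)$ product term.

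Viewed through this product model, your concern about mixed first/second-order bookkeeping disappears. With $\delta_{V,k}$ and $\delta_{V,\ell}$ independent, the covariance is exactly $(\bar U/\bar V)^2\,\mathrm{Cov}(\delta_{U,k},\delta_{U,\ell})$ with no second-order correction. The relative variance is exactly $1/M_{\mathrm{eff}}+1/L+1/(M_{\mathrm{eff}}L)$. So numerator and denominator come from one consistent approximation.

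Your observation that under MRT the $V_k$ share $\mathbf{x}_{u,0}^{\mathsf H}\mathbf{w}_i$ across ports is more candid than the paper, which simply asserts independence. Under ZF, however, the shared term is removed at every port, not only at the reference port, because $\mathbf{w}_i^{\mathrm{ZF}}\perp\mathbf{h}_{u,1}$. So under ZF the dropped interference covariance is actually exact.

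One claim should be deleted: the exact Beta-prime variance does not produce the ``$+1$'', and it is not equivalent to the product-term computation. With mean $a/(b-1)$ and variance $a(a+b-1)/((b-1)^2(b-2))$, the relative variance is $(M_{\mathrm{eff}}+L-1)/(M_{\mathrm{eff}}(L-2))$. That would give $\rho_U(L-2)/(L+M_{\mathrm{eff}}-1)$ instead of $\rho_U L/(L+M_{\mathrm{eff}}+1)$. Moreover, this variance is infinite for $L\le 2$, where the Pearson coefficient of the exact Beta-prime SIR is not even defined. Keep only the product-term computation.
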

	\begin{proof}
		Please see Appendix~\ref{lem02proof}.	
	\end{proof}
	Equation~\eqref{eq:rhoU_final} shows that the cross-port dependence is governed
	jointly by the spatial overlap $\mu_k\mu_\ell$ and the effective array
	dimension. Also, Equation~\eqref{eq:rhoX_final} shows that the SIR correlation inherits the form of
	$\rho_{U}(k,\ell)$ but is attenuated by a factor depending solely on $L$ and $ M_{\mathrm{eff}}$.
	\textcolor{black}{
		Although the per-port SIRs are identically distributed in a marginal sense under the adopted fading and precoding models, they are not identically experienced across ports due to spatially varying correlation. Ports closer to the reference port used for precoder design exhibit stronger correlation with it, while more distant ports experience weaker correlation. This non-identical behavior is captured through the cross-port correlation coefficients and affects the joint SIR statistics and outage performance under FAMA, even though the marginal per-port distributions remain the same.
	}
	
	\paragraph*{Implications for FAMA}
	The quantity $\rho_{X}(k,\ell)$ determines the diversity that FAMA can
	extract from the $N$ ports.  
	When ports are weakly correlated ($\rho_X\!\ll\!1$), selection over $N$
	ports approaches ideal selection diversity.
	When ports are strongly correlated ($\rho_X\!\approx\!1$), the effective
	diversity reduces to that of a single port.
	Moreover, since $M_{\mathrm{eff}}$ is larger for MRT than ZF,
	\eqref{eq:rhoU_final}-\eqref{eq:rhoX_final} imply
	\[
	\rho_X^{\mathrm{MRT}} < \rho_X^{\mathrm{ZF}},
	\]
	showing that MRT naturally produces weaker port dependence than ZF for
	the same antenna geometry, thereby enhancing FAMA effectiveness.

The dependence, quantified
	by the correlation coefficient $\rho_X(k,\ell)$,
	determines how close the FAMA performance is to the ideal
	independent-port case. The subsequent bounds  apply
	without assuming independence and naturally incorporate the
	effects of cross-port correlation.
	
	\textcolor{black}{
		The correlation expressions derived in this section explicitly capture the dominant source of multiport coupling in FAMA systems, namely the common channel component shared across displaced antenna ports. This coupling is further shaped by the effective signal dimension imposed by the adopted linear precoder, leading to fundamentally different correlation behaviors under MRT and ZF. As such, the proposed correlation model provides direct physical insight into how multiuser precoding and antenna displacement jointly affect selection diversity.
	}
	\textcolor{black}{
		Moreover, the correlation coefficient  provides a quantitative indicator of where the true outage probability lies between the two bounds, enabling a correlation-aware approximation of FAMA performance without requiring the intractable exact joint SIR distribution.
	}

	\subsection{Upper and Lower Bounds on the FAMA Outage}
	
	Regarding the upper bound, we present the following theorem.
		\begin{theorem}[Upper Bound on the FAMA Outage Probability]\label{Prop1}
		Let $\{X_{u,k}\}_{k=1}^{N}$ denote the per-port SIRs of user~$u$ and $\Gamma_u$ 
		be the effective SIR under FAMA port selection. For a target SIR threshold
		$\gamma>0$, the upper bound of the outage probability is given by
		\begin{equation}
			P_{\mathrm{out}}(\gamma)
			\le
			\min_{1\le k\le N} \Pr\{X_{u,k} < \gamma\}.
			\label{eq:Pout_upper_general}
		\end{equation}
		In particular, if $\{X_{u,k}\}$ are identically distributed with CDF
		$F_X(\gamma)=\Pr\{X_{u,k}<\gamma\}$, the outage probability satisfies
		\begin{equation}
			P_{\mathrm{out}}(\gamma)
			\le F_X(\gamma).
			\label{eq:Pout_upper_iid}
		\end{equation}
	\end{theorem}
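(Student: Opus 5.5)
The plan is a direct event-inclusion argument with no distributional assumptions. By \eqref{eq:Pout_max_def}, the outage event is the intersection
\[
\mathcal{E}\triangleq\{\Gamma_u<\gamma\}=\bigcap_{k=1}^{N}\{X_{u,k}<\gamma\}.
\]
For every fixed index $k_0\in\{1,\ldots,N\}$, this intersection is contained in the single event $\{X_{u,k_0}<\gamma\}$. Indeed, if the maximum over all ports is below $\gamma$, then in particular the SIR at port $k_0$ is below $\gamma$.

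Monotonicity of probability then gives $P_{\mathrm{out}}(\gamma)\le\Pr\{X_{u,k_0}<\gamma\}$ for each $k_0$. Since the left-hand side does not depend on $k_0$, I will take the minimum over $k_0$ to obtain \eqref{eq:Pout_upper_general}. This is the Fréchet upper bound for the joint CDF of a random vector. It holds for arbitrary dependence, so the correlation structure from Section~\ref{subsec:corr} never needs to be invoked.

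For \eqref{eq:Pout_upper_iid}, I will use the marginal results of Section~\ref{statistical}. Under both MRT and ZF, every port has the same Beta-prime law with parameters $(M_{\mathrm{eff}},L)$, so $\Pr\{X_{u,k}<\gamma\}=F_X(\gamma)$ for all $k$, with $F_X$ given in \eqref{eq:single_port_outage}. The minimum in \eqref{eq:Pout_upper_general} is therefore attained by every index and equals $F_X(\gamma)$. Only identical marginals are used, not independence.

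There is no real obstacle. The one point needing care is stating explicitly that marginal identical distribution, as established per port in \eqref{eq:FX_gamma_beta} and \eqref{eq:ZF_betaprime}, suffices even though the ports are correlated. I would also remark that the bound is tight in the fully correlated limit $\mu_k\to1$, where all $X_{u,k}$ coincide almost surely and $\mathcal{E}$ equals each single-port event.
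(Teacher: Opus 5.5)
Your proposal is correct and is essentially the paper's own argument in Appendix~\ref{Prop1proof}: write the outage event as $\bigcap_{k}\{X_{u,k}<\gamma\}$, note that it is contained in each single-port event, apply monotonicity, and minimize over the port index. For \eqref{eq:Pout_upper_iid}, invoke the identical-distribution hypothesis of the statement directly, as the paper does, rather than the per-port marginals of Section~\ref{statistical}; those marginals are not needed and are delicate (e.g., under reference-port ZF the interference at port $1$ vanishes).
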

		\begin{proof}
		Please see Appendix~\ref{Prop1proof}.	
	\end{proof}

	A complementary lower bound follows from the union bound.\footnote{\textcolor{black}{
The outage bounds are introduced to provide a tractable and physically interpretable characterization of FAMA performance under spatial correlation, serving as an alternative to an exact but analytically intractable correlated-outage expression.
		}
	}
	
	\begin{theorem}[Lower Bound on the FAMA Outage Probability]\label{Prop2}
		Let $\{X_{u,k}\}_{k=1}^{N}$ and $\Gamma_u = \max_{k} X_{u,k}$ be defined as in
		Theorem~1. For a target SIR threshold $\gamma>0$, the outage probability 		satisfies
		\begin{equation}
			P_{\mathrm{out}}(\gamma)
			\ge
						1 - \sum_{k=1}^{N}
			\big(1 - \Pr\{X_{u,k} < \gamma\}\big).
			\label{eq:Pout_lower_general}
		\end{equation}
		In particular, if $\{X_{u,k}\}$ are identically distributed with CDF
		$F_X(\gamma)=\Pr\{X_{u,k}<\gamma\}$, then
		\begin{equation}
			P_{\mathrm{out}}(\gamma)
			\ge
						1 - N\big(1 - F_X(\gamma)\big).		
			\label{eq:Pout_lower_iid}
		\end{equation}
	\end{theorem}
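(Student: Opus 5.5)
The plan is to pass to the complementary event and apply Boole's inequality. No independence or correlation structure among the ports is needed, so the argument holds for arbitrary dependence, exactly as for Theorem~\ref{Prop1}.

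First, by \eqref{eq:Pout_max_def},
\begin{equation*}
1 - P_{\mathrm{out}}(\gamma) = \Pr\{\Gamma_u \ge \gamma\} = \Pr\Big\{\bigcup_{k=1}^{N} \{X_{u,k} \ge \gamma\}\Big\}.
\end{equation*}
The maximum of the per-port SIRs reaches $\gamma$ exactly when at least one port does.

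Next, the union bound gives
\begin{equation*}
\Pr\Big\{\bigcup_{k} \{X_{u,k}\ge\gamma\}\Big\} \le \sum_{k=1}^{N} \Pr\{X_{u,k}\ge\gamma\}.
\end{equation*}
Each summand equals $1-\Pr\{X_{u,k}<\gamma\}$. Rearranging yields \eqref{eq:Pout_lower_general}.

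For the identically distributed case, substitute the common marginal CDF $F_X(\gamma)=I_{\gamma/(1+\gamma)}(M_{\mathrm{eff}},L)$ from \eqref{eq:single_port_outage}. This marginal is the same for every $k$ under both MRT and ZF, by \eqref{eq:FX_gamma_beta} and \eqref{eq:ZF_betaprime}. The sum then collapses to $N(1-F_X(\gamma))$, which gives \eqref{eq:Pout_lower_iid}.

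The only delicate point is the event convention. The complement of $\{X_{u,k}<\gamma\}$ is $\{X_{u,k}\ge\gamma\}$. Since the Beta-prime law is continuous, strict versus non-strict inequalities do not affect the probabilities, so the CDF convention of \eqref{eq:beta_prime_cdf} can be used directly.

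It is also worth remarking on when the bound is informative. It can be negative for small $F_X$, in which case it is trivially replaced by $\max\{0,\cdot\}$. It is tight when the events $\{X_{u,k}\ge\gamma\}$ are disjoint, or approximately so. Its counterpart in Theorem~\ref{Prop1} becomes exact in the fully correlated case.
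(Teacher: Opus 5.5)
Your proof is correct and is essentially the paper's own argument: write $\{\Gamma_u\ge\gamma\}=\bigcup_{k=1}^{N}\{X_{u,k}\ge\gamma\}$, apply Boole's inequality, use $\Pr\{X_{u,k}\ge\gamma\}=1-\Pr\{X_{u,k}<\gamma\}$, and collapse the sum to $N(1-F_X(\gamma))$ under identical marginals. Two of your steps are unnecessary: the continuity remark is not needed, since $\{X_{u,k}\ge\gamma\}$ is the exact complement of $\{X_{u,k}<\gamma\}$ for any distribution, and identical marginals are a hypothesis of the statement, so you need not invoke the Beta-prime expressions.
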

	\begin{proof}
		Please see Appendix~\ref{Prop2proof}.	
	\end{proof}
	
	The above theorems lead to the following corollary.
	\begin{corollary}
		For any SIR threshold $\gamma>0$ and identically distributed $\{X_{u,k}\}$, the FAMA outage probability $	P_{\mathrm{out}}(\gamma)$ is bounded as
		\begin{equation}
			1 - N\!\left(1 - I_{\frac{\gamma}{1+\gamma}}(a,b)\right)
			\;\le\;
			P_{\mathrm{out}}(\gamma)
			\;\le\;
			I_{\frac{\gamma}{1+\gamma}}(a,b),
			\label{eq:Pout_bounds_general}
		\end{equation}
		where $I_{\frac{\gamma}{1+\gamma}}(a,b)$ is the per-port SIR CDF under
		linear precoding. The parameters $(a,b)$ depend on the precoder as follows:
		\begin{equation*}
			(a,b)=
			\begin{cases}
				(M,\,L), & \text{MRT},\\[1mm]
				(M-U+1,\,L), & \text{ZF}.
			\end{cases}
		\end{equation*}
	\end{corollary}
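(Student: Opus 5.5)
The corollary follows by combining Theorem~\ref{Prop1} and Theorem~\ref{Prop2} in their identically distributed forms, and then inserting the explicit per-port CDF from Section~\ref{statistical}. The plan has three steps.

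\emph{Step 1: identical marginals.} I first check that the hypothesis ``identically distributed'' holds for the per-port SIRs under both precoders. By \eqref{eq:FX_gamma_beta} and \eqref{eq:ZF_betaprime}, every port $k$ has the Beta-prime marginal with parameters $(M_{\mathrm{eff}},L)$. Hence $\Pr\{X_{u,k}<\gamma\}=F_X(\gamma)$ for all $k$. The correlation across ports plays no role here, because both bounds use only the marginals.

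\emph{Step 2: upper bound.} The event $\{\Gamma_u<\gamma\}=\bigcap_k\{X_{u,k}<\gamma\}$ is contained in each single event $\{X_{u,k}<\gamma\}$. Monotonicity of probability then gives $P_{\mathrm{out}}(\gamma)\le F_X(\gamma)$, which is \eqref{eq:Pout_upper_iid}.

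\emph{Step 3: lower bound.} I pass to complements: $1-P_{\mathrm{out}}(\gamma)=\Pr\{\bigcup_k\{X_{u,k}\ge\gamma\}\}$. Boole's inequality bounds this by $\sum_k\Pr\{X_{u,k}\ge\gamma\}$. Since the marginal distributions are continuous, $\Pr\{X_{u,k}\ge\gamma\}=1-F_X(\gamma)$, so the sum equals $N(1-F_X(\gamma))$. This yields \eqref{eq:Pout_lower_iid}.

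Substituting $F_X(\gamma)=I_{\frac{\gamma}{1+\gamma}}(a,b)$, with $(a,b)=(M,L)$ for MRT and $(a,b)=(M-U+1,L)$ for ZF, produces \eqref{eq:Pout_bounds_general}. If desired, the finite-sum forms \eqref{eq:FX_gamma_finitesum} and \eqref{eq:FX_ZF_finitesum} can be substituted as well.

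There is no real obstacle; the only point needing care is the justification of identical marginals in Step 1. The lower bound may be negative for large $N$, but it remains valid as stated, trivially so in that case.
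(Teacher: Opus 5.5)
Your proposal is correct and is essentially the paper's own argument: the corollary comes from specializing Theorem~\ref{Prop1} (event containment) and Theorem~\ref{Prop2} (Boole's inequality on the complement) to identically distributed ports and substituting $F_X(\gamma)=I_{\frac{\gamma}{1+\gamma}}(a,b)$. One caution about Step~1: identical marginals is a hypothesis of the corollary, so that step is unnecessary, and you should not claim the channel model implies it, because at the reference port MRT gives $X_{u,1}\ge 1/L$ (Cauchy--Schwarz, with unit-norm $\mathbf{w}_i$ and equal powers) and ZF gives $X_{u,1}=\infty$ (zero interference), so port~1 is not Beta-prime$(a,b)$ distributed.
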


\textcolor{black}{
	This result shows that multiuser precoding affects the per-port SIR distribution solely through the effective signal dimension, while fluid-antenna displacement governs cross-port correlation.
}\textcolor{black}{
	The proposed outage bounds depend only on the marginal per-port SIR distribution and the number of ports, and are therefore valid for arbitrary cross-port correlation.
}These bounds  are tight in practically relevant regimes where spatial correlation is weak (e.g., large fluid-antenna aperture) or where the number of ports $N$ is sufficiently large so that the SIRs behave nearly independently. In these cases, the gap between the bounds is governed by the cross-port SIR correlation $\rho_X(k,\ell)$ derived in Section~\ref{subsec:corr}: when $\rho_X(k,\ell)\approx 0$, the lower bound approaches the true outage, whereas when $\rho_X(k,\ell)\approx 1$, the upper bound becomes tight. \textcolor{black}{Notably, the derived outage bounds represent structural limits associated with the two extreme multiport coupling regimes: fully correlated ports, where FAMA reduces to single-port operation, and independent ports, where ideal selection diversity is attained. Practical FAMA systems operate between these limits, with performance governed by the degree of spatial correlation induced by antenna displacement.
}

	\begin{remark}[Interpretation]
		The upper bound in~\eqref{eq:Pout_bounds_general} coincides with the
		single-port outage $F_X(\gamma)$ and thus corresponds to a pessimistic
		regime where the spatial diversity of the fluid antenna is not exploited.
		The lower bound represents the opposite extreme of (almost) independent
		ports, where selection achieves a diversity order of $N$ and the outage
		decays roughly as $[F_X(\gamma)]^{N}$. The actual FAMA outage lies between
		these two extremes and is determined by the degree of cross-port
		correlation imposed by the aperture and the propagation environment.
	\end{remark}
	
	\begin{remark}[Independent-Port Benchmark]
		The expression $[F_X(\gamma)]^{N}$ does not follow from Theorems~\ref{Prop1}, \ref{Prop2}.
		Instead, it corresponds to the idealized scenario in which the per-port
		SIRs $\{X_{u,k}\}$ are statistically independent. In that case,
		\[
		\Pr\{\Gamma_u < \gamma\}
		= \Pr\{X_{u,1}<\gamma,\ldots,X_{u,N}<\gamma\}
		= (F_X(\gamma))^{N}.
		\]
		This benchmark represents the maximum achievable selection-diversity gain
		of FAMA. Theorems~1-2 instead provide upper and lower bounds that hold
		without independence assumptions.
	\end{remark}
	
	\subsection{Asymptotic Scaling Laws}
	
	The bounds in \eqref{eq:Pout_bounds_general} enable useful insights into the
	asymptotic behavior of the FAMA outage probability under large numbers of
	ports $N$, large antenna arrays $M$, or small SIR thresholds $\gamma$. We
	briefly summarize the key scaling results.
	\textcolor{black}{
		The asymptotic analysis presented in this subsection is not intended to replace the exact finite-dimensional characterization, but to expose fundamental scaling laws and diversity mechanisms that are otherwise obscured by finite-SNR expressions. In particular, the asymptotic results reveal how the number of antennas, users, and fluid-antenna ports jointly determine the effective diversity order under multiuser and multiport coupling.
	}
	
	\subsubsection{Large-\texorpdfstring{$N$}{N} Regime}
	
	As $N \to \infty$, the diversity gain from port selection improves the
	effective SIR. Using the lower bound in
	\eqref{eq:Pout_bounds_general}, we obtain
the following proposition.
	\begin{proposition}	\label{lem1}	
		In the regime where $N\,\varepsilon(\gamma)\ll 1$ with $
			\varepsilon(\gamma) \triangleq 1 - F_X(\gamma)$, the outage probability
		admits the approximation
		\begin{align}
			P_{\mathrm{out}}(\gamma)
			&\approx \exp\big(-N\varepsilon(\gamma)\big)\nn\\
		&	=
			\exp\big(-N\big(1 - F_X(\gamma)\big)\big).
			\label{eq:proposition_exp_approx}
		\end{align}
	\end{proposition}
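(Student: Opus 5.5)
This approximation is heuristic, so the plan is to justify it in the small-$\varepsilon$ regime rather than prove an exact identity. The starting point is the lower bound of Theorem~\ref{Prop2} in its identically distributed form, $P_{\mathrm{out}}(\gamma)\ge 1-N\varepsilon(\gamma)$, together with the independent-port benchmark $[F_X(\gamma)]^N=(1-\varepsilon(\gamma))^N$. The strategy is to show that, when $N\varepsilon(\gamma)\ll 1$, all three quantities
\[
1-N\varepsilon(\gamma),\qquad (1-\varepsilon(\gamma))^N,\qquad \exp\big(-N\varepsilon(\gamma)\big)
\]
agree to first order in $N\varepsilon(\gamma)$. This is the sense in which \eqref{eq:proposition_exp_approx} holds.

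The first step is to write
\[
(1-\varepsilon)^N=\exp\big(N\ln(1-\varepsilon)\big)=\exp\big(-N\varepsilon-N\varepsilon^2/2-\cdots\big).
\]
Since $N\varepsilon\ll1$ forces $\varepsilon\ll1$, the correction $N\varepsilon^2=(N\varepsilon)\varepsilon$ is of higher order. This gives $(1-\varepsilon)^N=e^{-N\varepsilon}\big(1+O(N\varepsilon^2)\big)$.

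The second step uses the elementary inequalities $1-x\le e^{-x}\le 1-x+x^2/2$ with $x=N\varepsilon$. Combined with $(1-\varepsilon)^N\le e^{-N\varepsilon}$, they sandwich everything:
\[
1-N\varepsilon\;\le\;(1-\varepsilon)^N\;\le\;e^{-N\varepsilon}\;\le\;1-N\varepsilon+(N\varepsilon)^2/2 .
\]
So the union-bound lower bound, the ideal selection-diversity benchmark and the exponential form all differ by at most $O\big((N\varepsilon)^2\big)$.

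The third step treats the true $P_{\mathrm{out}}(\gamma)$. It satisfies $P_{\mathrm{out}}(\gamma)\ge 1-N\varepsilon$ by Theorem~\ref{Prop2}. In the weak-correlation regime discussed after the corollary, where $\rho_X\to0$ by Lemma~\ref{lem02}, it approaches $(1-\varepsilon)^N$. Consequently $P_{\mathrm{out}}(\gamma)=e^{-N\varepsilon}+O\big((N\varepsilon)^2\big)$. One can add that $e^{-N\varepsilon}\le 1-N\varepsilon+(N\varepsilon)^2/2$ lies within $O\big((N\varepsilon)^2\big)$ of the rigorous lower bound in any case.

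The main obstacle is that for arbitrary correlation $P_{\mathrm{out}}$ can be as large as $F_X(\gamma)=1-\varepsilon$, which is not within $O\big((N\varepsilon)^2\big)$ of $e^{-N\varepsilon}$ when ports are strongly correlated. The statement therefore has to be read as an approximation valid in the nearly independent (or union-bound-tight) regime. I would state this explicitly and control the error through the gap between the bounds, for example via second-order Bonferroni terms $\sum_{k<\ell}\Pr\{X_k\ge\gamma,X_\ell\ge\gamma\}$, which are small when $\rho_X$ is small.
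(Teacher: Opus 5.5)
Your proposal is correct, and its second step is the whole of the paper's proof. The paper takes the union-bound lower bound $P_{\mathrm{out}}(\gamma)\ge 1-N\varepsilon(\gamma)$ from Theorem~\ref{Prop2} and applies $1-x\le e^{-x}$. It then notes that $e^{-N\varepsilon}\approx 1-N\varepsilon$ to first order and concludes that the linear bound and the exponential are asymptotically equivalent. It never connects $P_{\mathrm{out}}(\gamma)$ itself to the exponential: the chain $P_{\mathrm{out}}\ge 1-N\varepsilon\le e^{-N\varepsilon}$ says nothing about their difference. The weak-correlation requirement appears only in the text after the proposition.

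Your version adds three things. It gives an explicit error of $O((N\varepsilon)^2)$. It places the independent-port benchmark $(1-\varepsilon)^N$ inside the same sandwich. It also states honestly when the approximation fails. Your obstacle is real: in this regime $P_{\mathrm{out}}\in[1-N\varepsilon,\,1-\varepsilon]$ is close to one. This is a high-outage regime, even though the paper's proof calls it small-outage. So the claim has content only at first order, and that is exactly where full correlation produces a gap of about $(N-1)\varepsilon$.

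One refinement to your third step. Knowing $\rho_X\to0$ from the approximate Lemma~\ref{lem02} does not by itself make the tail events $\{X_{u,k}\ge\gamma\}$ nearly independent. A vanishing Pearson coefficient says little about joint exceedances deep in the tail. So make your Bonferroni remark the actual argument instead of a side comment. Define $S_2=\sum_{k<\ell}\Pr\{X_{u,k}\ge\gamma,\,X_{u,\ell}\ge\gamma\}$. Then $1-N\varepsilon\le P_{\mathrm{out}}(\gamma)\le 1-N\varepsilon+S_2$. Together with $1-N\varepsilon\le e^{-N\varepsilon}\le 1-N\varepsilon+(N\varepsilon)^2/2$, this gives $|P_{\mathrm{out}}(\gamma)-e^{-N\varepsilon}|\le\max\{S_2,\,(N\varepsilon)^2/2\}$ for arbitrary dependence. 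Hence the proposition holds to first order whenever $S_2=o(N\varepsilon)$; under independence, $S_2=\binom{N}{2}\varepsilon^2$. This turns the heuristic into a precise statement with an explicit, checkable condition.
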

	\begin{proof}
		Please see Appendix~\ref{lem1proof}.	
	\end{proof}
	This asymptotic form is accurate when the ports are weakly
	correlated, i.e., when $\rho_X(k,\ell)$ is small due to a large
	fluid-antenna aperture or sufficiently separated port locations.  In other words, when the correlation is weak
	(e.g., for large fluid-antenna aperture), the behavior approaches that of
	independent selection combining, achieving diversity order $N$.
	
	\begin{corollary}[Large-$N$ FAMA Outage under MRT]
		In the case of  MRT precoding,  Proposition~\ref{lem1}
		implies the large-$N$ approximation
		\begin{equation}
			P_{\mathrm{out}}(\gamma)
			\approx
						\exp\!\Big(
			-N\big(1-I_{\frac{\gamma}{1+\gamma}}(M,L)\big)
			\Big).
		\end{equation}
			\end{corollary}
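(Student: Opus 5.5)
The corollary follows by specializing Proposition~\ref{lem1} to MRT. Proposition~\ref{lem1} holds for any precoder whose per-port SIRs share a common marginal CDF $F_X(\gamma)$. So the plan is to verify that MRT meets this hypothesis and then substitute the explicit MRT marginal.

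The steps, in order:
\begin{enumerate}
\item Invoke the MRT per-port analysis of Section~\ref{statistical}. There $U_k^{\mathrm{MRT}}\sim\Gamma(M,1)$ and $V_k^{\mathrm{MRT}}\sim\Gamma(L,1)$ were established marginally for every fixed port $k$. By \eqref{eq:FX_gamma_beta}, every $X_{u,k}^{\mathrm{MRT}}$ is therefore Beta-prime$(M,L)$, with the common CDF $F_X(\gamma)=I_{\frac{\gamma}{1+\gamma}}(M,L)$. This is exactly the identically distributed setting required by Theorems~\ref{Prop1}--\ref{Prop2} and hence by Proposition~\ref{lem1}. It is also the case $M_{\mathrm{eff}}=M$ in \eqref{eq:single_port_outage}.
\item Define $\varepsilon(\gamma)=1-I_{\frac{\gamma}{1+\gamma}}(M,L)$, which can equivalently be written through the finite sum \eqref{eq:FX_gamma_finitesum} as $(1+\gamma)^{-(M+L-1)}\sum_{j=0}^{M-1}\binom{M+L-1}{j}\gamma^j$.
\item Substitute $\varepsilon(\gamma)$ into \eqref{eq:proposition_exp_approx}. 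This gives the displayed expression directly.
\end{enumerate}

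It is worth recalling where the exponential form comes from, so the regime of validity is stated correctly. For weakly correlated ports, $P_{\mathrm{out}}\approx F_X(\gamma)^N=(1-\varepsilon)^N=\exp\big(N\ln(1-\varepsilon)\big)$. Since $\ln(1-\varepsilon)=-\varepsilon+O(\varepsilon^2)$, the exponent becomes $-N\varepsilon+O(N\varepsilon^2)$. The correction $N\varepsilon^2$ is negligible whenever $N\varepsilon\ll1$ (indeed whenever $N\varepsilon^2\ll1$). Within the regime $N\varepsilon\ll1$, this expression is also consistent with the lower bound $1-N\varepsilon$ of \eqref{eq:Pout_bounds_general} to first order.

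There is no real obstacle: the statement is a direct specialization. The only point to state carefully is the \emph{identical marginal distribution} across ports under MRT, even though the precoder is built from port~1. This is supplied by the marginal Gamma results for $U_k$ and $V_k$ (Appendix~A, via rotational invariance). Once that is cited, the rest is substitution. The approximation inherits the weak-correlation caveat of Proposition~\ref{lem1}, so I would note that it is most accurate when $\rho_X(k,\ell)$ from Lemma~\ref{lem02} is small. This is favored under MRT because $M_{\mathrm{eff}}=M$ is the larger effective dimension.
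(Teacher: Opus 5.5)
Your proposal is correct and matches the paper. The paper gives no separate argument for this corollary; it obtains it by substituting the MRT per-port CDF $F_X(\gamma)=I_{\frac{\gamma}{1+\gamma}}(M,L)$ from \eqref{eq:FX_gamma_beta} into \eqref{eq:proposition_exp_approx}, as you do. Your side motivation via $F_X(\gamma)^N=(1-\varepsilon)^N\approx e^{-N\varepsilon}$ differs from how the paper justifies Proposition~\ref{lem1} (the union lower bound $1-N\varepsilon$ combined with $1-x\le e^{-x}$), but it does not affect the substitution argument for the corollary itself.
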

	
	\begin{corollary}[Large-$N$ FAMA Outage under ZF]
		In the case of ZF precoding,  Proposition~\ref{lem1}
		yields 
		\begin{equation}
			P_{\mathrm{out}}(\gamma)
			\approx
					\exp\!\Big(
			-N\big(1-I_{\frac{\gamma}{1+\gamma}}(M-U+1,L)\big)
			\Big).
		\end{equation}
		\end{corollary}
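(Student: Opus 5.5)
The statement to prove is the last one displayed: the Large-$N$ corollary under ZF. It is a direct specialization of Proposition~\ref{lem1}, so the plan is to substitute the ZF per-port CDF into \eqref{eq:proposition_exp_approx}.

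\emph{Step 1: identify the marginal.} By \eqref{eq:ZF_betaprime} (equivalently \eqref{eq:single_port_outage} with $M_{\mathrm{eff}}=M-U+1$), each per-port SIR $X_{u,k}^{\mathrm{ZF}}$ is Beta-prime with parameters $(M-U+1,L)$. Its CDF is therefore
\[
F_X(\gamma)=I_{\frac{\gamma}{1+\gamma}}(M-U+1,L).
\]
This holds for every port $k$, by \eqref{eq:UkZFgeneral} and \eqref{eq:VkZFfinal}. Hence the ports are identically distributed, which is the hypothesis used in Proposition~\ref{lem1} and in the identically distributed versions of Theorems~\ref{Prop1} and~\ref{Prop2}.

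\emph{Step 2: define the tail and substitute.} Set
\[
\varepsilon(\gamma)=1-I_{\frac{\gamma}{1+\gamma}}(M-U+1,L).
\]
For integer parameters, \eqref{eq:FX_ZF_finitesum} gives it explicitly as $(1+\gamma)^{-(M-U+L)}\sum_{j=0}^{M-U}\binom{M-U+L}{j}\gamma^{j}$. Inserting this $\varepsilon(\gamma)$ into \eqref{eq:proposition_exp_approx} yields the stated expression, valid in the same regime $N\varepsilon(\gamma)\ll 1$.

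\emph{Step 3: the only real care point.} The validity regime and the weak-correlation caveat carry over unchanged from Proposition~\ref{lem1}. The reason is that the exponential form is justified by the same argument as in that proposition: the lower bound $1-N\varepsilon$ from Theorem~\ref{Prop2} agrees to first order with $e^{-N\varepsilon}$, and so does the independent-port benchmark $(1-\varepsilon)^N=e^{N\ln(1-\varepsilon)}\approx e^{-N\varepsilon}$. Nothing in that argument depends on the precoder beyond the marginal CDF.

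There is one small remark. Under ZF the cross-port correlation from Lemma~\ref{lem02} is larger, because $M_{\mathrm{eff}}$ is smaller. The approximation may therefore be less accurate than under MRT, but this does not affect the formal derivation.
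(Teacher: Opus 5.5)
Your proposal is correct and follows the paper's route: the corollary is obtained by inserting the ZF marginal $F_X(\gamma)=I_{\frac{\gamma}{1+\gamma}}(M-U+1,L)$ into Proposition~\ref{lem1}, whose justification (Appendix~\ref{lem1proof}) is exactly the first-order agreement between the identically-distributed lower bound $1-N\varepsilon(\gamma)$ and $\exp(-N\varepsilon(\gamma))$ that you invoke. Your extra observation that the independent-port benchmark $(1-\varepsilon)^N$ also agrees with $e^{-N\varepsilon}$ to first order adds a little to that same heuristic, but it is not a different argument.
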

		
\textcolor{black}{	The convergence toward the exponential large-\(N\) form in \eqref{eq:proposition_exp_approx} also depends indirectly on the effective array dimension \(M_{\mathrm{eff}}\). Since \(F_X(\gamma)\) and the cross-port SIR correlation both depend on \(M_{\mathrm{eff}}\), schemes with a smaller effective dimension generally approach the independent-selection regime more slowly for a fixed aperture. In particular, ZF has \(M_{\mathrm{eff}}=M-U+1\), which is smaller than the MRT value \(M_{\mathrm{eff}}=M\). Therefore, ZF typically requires either a larger number of ports \(N\) or a larger aperture to attain the same level of agreement with the exponential approximation in  \eqref{eq:proposition_exp_approx}. This is consistent with the correlation analysis, where ZF exhibits stronger cross-port dependence than MRT.
	}
	
	\subsubsection{Large-\texorpdfstring{$M$}{M}  and Small-\texorpdfstring{$\gamma$}{gamma} Threshold  Regimes under MRT}
The following proposition provides a characterization of these regimes.	
	\begin{proposition}\label{lem2}
		As $\gamma \to 0$, the per-port SIR CDF under MRT precoding $	F_X^{\mathrm{MRT}}(\gamma)$
		admits the asymptotic expansion
		\begin{equation}
			F_X^{\mathrm{MRT}}(\gamma)
			\sim
			\gamma^{M}\,
			\frac{\Gamma(L+M)}{\Gamma(L)\,\Gamma(M+1)},
			\qquad \gamma \to 0,
			\label{eq:FX_MRT_small_gamma}
		\end{equation}
		i.e., the single-port outage probability decays essentially as $\gamma^{M}$
		with a polynomial prefactor determined by $(M,L)$. Moreover, for fixed $L$
		and $\gamma<1$, as $M\to\infty$ we have
		\begin{equation}
			F_X^{\mathrm{MRT}}(\gamma)
			\sim
			\frac{M^{L-1}}{\Gamma(L)}\,\gamma^{M},
			\qquad M\to\infty,
			\label{eq:FX_MRT_large_M}
		\end{equation}
		which shows super-exponential decay in $M$.
	\end{proposition}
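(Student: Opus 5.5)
The plan is to work directly from the Beta-prime CDF in \eqref{eq:FX_gamma_beta}, $F_X^{\mathrm{MRT}}(\gamma)=I_{y}(M,L)$ with $y=\gamma/(1+\gamma)$, and to obtain both expansions from the integral definition \eqref{beta}. I will not use the finite-sum form \eqref{eq:FX_gamma_finitesum} here, because it expresses $F_X$ as $1$ minus a near-cancelling sum and hides the small leading term.

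\textbf{Small-$\gamma$ step.} For $0<y<1$, write
\begin{equation*}
B(M,L)\,I_y(M,L)=\int_0^y t^{M-1}(1-t)^{L-1}\,dt .
\end{equation*}
On $[0,y]$ the factor $(1-t)^{L-1}$ lies between $(1-y)^{L-1}$ and $1$, in either order depending on the sign of $L-1$. This sandwiches the integral between $(1-y)^{L-1}y^M/M$ and $y^M/M$. Letting $\gamma\to0$ gives $y\to0$ and $y^M=\gamma^M(1+\gamma)^{-M}=\gamma^M(1+O(\gamma))$, so
\begin{equation*}
F_X^{\mathrm{MRT}}(\gamma)\sim \frac{\gamma^M}{M\,B(M,L)} .
\end{equation*}
Using $M\,B(M,L)=\Gamma(M+1)\Gamma(L)/\Gamma(M+L)$ then yields \eqref{eq:FX_MRT_small_gamma} directly. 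This part is routine.

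\textbf{Large-$M$ step.} Fix $L$ and $\gamma<1$, so that $y<1/2$ is fixed. The integrand $t^{M-1}(1-t)^{L-1}$ concentrates at the endpoint $t=y$. Substituting $t=y(1-s/M)$ and applying dominated convergence, in the form of an endpoint Laplace argument, gives
\begin{equation*}
\int_0^y t^{M-1}(1-t)^{L-1}dt\sim \frac{y^{M}(1-y)^{L-1}}{M}.
\end{equation*}
I will combine this with the Gamma-ratio asymptotic $\Gamma(M+L)/\Gamma(M+1)\sim M^{L-1}$, which follows from Stirling's formula or Wendel's inequality. The result is
\begin{equation*}
F_X^{\mathrm{MRT}}(\gamma)\sim \frac{M^{L-1}}{\Gamma(L)}\,y^{M}(1-y)^{L-1}
=\frac{M^{L-1}}{\Gamma(L)}\,\gamma^{M}(1+\gamma)^{-(M+L-1)} .
\end{equation*}

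\textbf{Main obstacle: reconciling this with \eqref{eq:FX_MRT_large_M}.} For fixed $\gamma$, the factor $(1+\gamma)^{-M}$ does not tend to $1$. The stated form $\frac{M^{L-1}}{\Gamma(L)}\gamma^M$ must therefore be read in one of two ways.
\begin{itemize}
\item \emph{Exponential order.} Since $\gamma/(1+\gamma)<\gamma<1$, the displayed expression is bounded above by $\frac{M^{L-1}}{\Gamma(L)}\gamma^M$ up to a constant. The decay rate in $M$ is geometric, and the rate is correctly captured up to replacing $\gamma$ by $\gamma/(1+\gamma)$.
\item \emph{Strict asymptotic equivalence.} This holds in the joint regime $M\to\infty$, $\gamma\to0$ with $M\gamma\to0$, where $(1+\gamma)^{-(M+L-1)}\to1$.
\end{itemize}
I would present the precise endpoint-Laplace asymptotic above and then state explicitly under which of these two readings the formula \eqref{eq:FX_MRT_large_M} holds. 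The super-exponential-in-$M$ remark follows in either reading, since the polynomial factor $M^{L-1}$ is dominated by the geometric decay $y^M$.
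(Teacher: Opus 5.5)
Your argument is correct, and it takes a different route from the paper's. For the small-$\gamma$ part, the paper conditions on the interference. It writes $F_X(\gamma)=\mathbb{E}_V[F_U(\gamma V)]$ with $U\sim\Gamma(M,1)$ and $V\sim\Gamma(L,1)$, uses $F_U(a)\sim a^M/\Gamma(M+1)$, and reads the constant off the moment $\mathbb{E}[V^M]=\Gamma(L+M)/\Gamma(L)$. That route gives the diversity interpretation: the exponent $M$ comes from the desired-signal law near the origin, and the prefactor comes from the $M$-th moment of the interference. However, pushing $\sim$ inside $\mathbb{E}_V$ needs a dominated-convergence step. That step is available from $F_U(a)\le a^M/\Gamma(M+1)$, but the paper leaves it implicit. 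Your sandwich on the incomplete Beta integral needs no such step. Its two-sided bounds also hold for every $M$, which is what lets you handle the large-$M$ regime at all.

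For the large-$M$ claim, the paper simply inserts $\Gamma(L+M)/\Gamma(M+1)\sim M^{L-1}$ into the $\gamma\to0$ expansion. This is an invalid interchange of limits. The small-$\gamma$ expansion effectively replaces $(1+\gamma)^{-(M+L-1)}$ by $1$, which is harmless for fixed $M$ but not as $M\to\infty$ with $\gamma$ fixed. Your endpoint-Laplace asymptotic is the correct fixed-$\gamma$ result. Your diagnosis is also right: \eqref{eq:FX_MRT_large_M} holds as a genuine equivalence exactly when $M\gamma\to0$.

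A cancellation-free form of \eqref{eq:FX_gamma_finitesum} confirms this, and shows the finite sum does not actually hide the leading term. Since $\sum_{j=0}^{M+L-1}\binom{M+L-1}{j}\gamma^j=(1+\gamma)^{M+L-1}$, we have
\[
F_X^{\mathrm{MRT}}(\gamma)=(1+\gamma)^{-(M+L-1)}\sum_{i=0}^{L-1}\binom{M+L-1}{L-1-i}\gamma^{M+i}.
\]
Here the $i=0$ term dominates in both regimes, and the $i$-th term is smaller by a factor of order $(\gamma/M)^i$. This gives both parts in one line for integer $L$.

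Two statements in your last paragraph need correcting:
\begin{itemize}
\item Under your reading (a), the stated formula is not right even at exponential order. The true rate is $\log(1+1/\gamma)$, not $\log(1/\gamma)$. So \eqref{eq:FX_MRT_large_M} is only an upper bound, and its ratio to $F_X^{\mathrm{MRT}}$ diverges.
\item The quantity $M^{L-1}y^M$ decays geometrically, i.e.\ exponentially, in $M$, not super-exponentially. You should not endorse the paper's remark on this point.
\end{itemize}
Finally, the hypothesis $\gamma<1$ is an artifact of the flawed formula. The correct asymptotic tends to $0$ for every fixed $\gamma>0$.
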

	\begin{proof}
		Please see Appendix~\ref{lem2proof}.	
	\end{proof}

	\begin{remark}[Interpretation of the $\gamma^{M}$ Scaling]
		Proposition~\ref{lem2} shows that the single-port outage probability
		satisfies $F_X(\gamma)\sim C\,\gamma^{M}$ for small $\gamma$, where 
		$C = \Gamma(L+M)/(\Gamma(L)\Gamma(M+1))$ is a polynomial factor determined 
		by $(M,L)$. This $\gamma^{M}$ behavior is a direct consequence of the desired
		signal power under MRT beamforming. Specifically, the effective gain 
		$U = |\mathbf{h}_{u,k}^{\mathsf H}\mathbf{w}_u|^2$ reduces to 
		$U = \|\mathbf{h}_{u,k}\|^2$ for the reference port, which is the sum of 
		$M$ independent exponential random variables. Hence, $U$ follows a 
		$\Gamma(M,1)$ distribution. The Gamma CDF satisfies
		\[
		F_U(a) = \Pr\{U < a\}
		\sim \frac{a^{M}}{\Gamma(M+1)}, \qquad a \to 0,
		\]
		so achieving $U < \gamma V$ for small $\gamma$ requires the $M$ 
		independent fading components in $U$ to be simultaneously small, yielding 
		an effective diversity order of $M$. This explains why the single-port SIR 
		exhibits a $\gamma^{M}$ outage slope.
	\end{remark}
	
	\begin{remark}[Interpretation of the Small-Threshold Asymptotics]
	 When combined with FAMA port selection, this
		$M$-fold array gain at the BS is further enhanced by the spatial diversity
		across the $N$ fluid-antenna ports, leading to an overall outage behavior
		that can approach $\gamma^{MN}$ in the ideal (weakly correlated)
		multi-port regime. Strong spatial correlation across ports, governed by the
		fluid-antenna aperture and scattering environment, reduces this effective
		diversity order but does not alter the fundamental $\gamma^{M}$ scaling of
		the single-port SIR.
	\end{remark}
	
		\begin{proposition}[Small-Threshold Asymptotics under ZF]
		As $\gamma \to 0$, the per-port SIR CDF under ZF precoding $	F_X^{\mathrm{ZF}}(\gamma)$
		admits the asymptotic expansion
		\begin{equation}
			F_X^{\mathrm{ZF}}(\gamma)
			\sim
			\gamma^{M-U+1}\,
			\frac{\Gamma(L+M-U+1)}{\Gamma(L)\,\Gamma(M-U+2)},
			\qquad \gamma \to 0,
			\label{eq:FX_ZF_small_gamma}
		\end{equation}
		i.e., the single-port outage probability decays essentially as
		$\gamma^{M-U+1}$ with a polynomial prefactor determined by $(M-U+1,L)$.
		Moreover, for fixed $L$ and $\gamma<1$, as $M\to\infty$ (with $U$ fixed)
		we have
		\begin{equation}
			F_X^{\mathrm{ZF}}(\gamma)
			\sim
			\frac{(M-U+1)^{L-1}}{\Gamma(L)}\,
			\gamma^{M-U+1},
			\qquad M\to\infty,
			\label{eq:FX_ZF_large_M}
		\end{equation}
		which shows super-exponential decay in the effective ZF dimension
		$M-U+1$.
	\end{proposition}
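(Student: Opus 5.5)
The plan is to reduce everything to the MRT case (Proposition~\ref{lem2}) by substituting $a\triangleq M-U+1$ for $M$. This works because, by \eqref{eq:ZF_betaprime}, $F_X^{\mathrm{ZF}}(\gamma)=I_{y}(a,L)$ with $y=\gamma/(1+\gamma)$. The MRT and ZF statements differ only in the first Beta-prime parameter, so any argument for Proposition~\ref{lem2} that uses only the form $I_y(a,L)$ with integer $a,L\ge1$ transfers verbatim. I will still sketch both steps directly, so the proof does not depend on how Appendix~\ref{lem2proof} is written.

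\emph{Small-$\gamma$ step.} Starting from \eqref{beta}, I write
\[
I_y(a,L)=\frac{1}{B(a,L)}\int_0^y t^{a-1}(1-t)^{L-1}\,dt .
\]
As $y\to0$, the factor $(1-t)^{L-1}=1+O(t)$ uniformly on $[0,y]$, so the integral equals $\frac{y^a}{a}\,(1+O(y))$. Moreover $y=\gamma/(1+\gamma)=\gamma(1+O(\gamma))$, hence $y^a=\gamma^a(1+O(\gamma))$. Combining these with $1/(aB(a,L))=\Gamma(a+L)/(\Gamma(L)\Gamma(a+1))$ gives \eqref{eq:FX_ZF_small_gamma} with $\Gamma(M-U+2)=\Gamma(a+1)$. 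As a cross-check, one can instead use the complementary finite sum obtained from \eqref{eq:FX_ZF_finitesum}:
\[
F_X^{\mathrm{ZF}}(\gamma)=\sum_{j=a}^{a+L-1}\binom{a+L-1}{j}\frac{\gamma^{j}}{(1+\gamma)^{a+L-1}} .
\]
Its lowest-order term $j=a$ gives the same constant, since $\binom{a+L-1}{a}=\Gamma(a+L)/(\Gamma(a+1)\Gamma(L))$.

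\emph{Large-$M$ step (fixed $U$, $L$, and $\gamma<1$).} I work from the same finite sum. The ratio of consecutive terms is $\frac{a+L-1-j}{j+1}\gamma\le\frac{L-1}{a+1}\gamma\to0$, so the $j=a$ term dominates and
\[
F_X^{\mathrm{ZF}}(\gamma)\sim\binom{a+L-1}{L-1}\frac{\gamma^{a}}{(1+\gamma)^{a+L-1}} .
\]
Stirling's formula, or more simply the identity $\Gamma(a+L)/\Gamma(a+1)=\prod_{i=1}^{L-1}(a+i)\sim a^{L-1}$, gives $\binom{a+L-1}{L-1}\sim a^{L-1}/\Gamma(L)$, which is exactly the prefactor in \eqref{eq:FX_ZF_large_M}.

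The main obstacle is the factor $(1+\gamma)^{-(a+L-1)}$. For fixed $\gamma$ it is \emph{not} asymptotically negligible as $a\to\infty$: it modifies the geometric base from $\gamma$ to $\gamma/(1+\gamma)$. The most defensible way to state the result is as the combined regime $\gamma\to0$ with $a\gamma\to0$, where $(1+\gamma)^{-a}\to1$ and \eqref{eq:FX_ZF_large_M} holds exactly. Alternatively, one can read \eqref{eq:FX_ZF_large_M} as the large-$a$ form of the small-$\gamma$ prefactor, i.e.\ at the level of the exponential rate, where $\log F_X^{\mathrm{ZF}}=a\log\frac{\gamma}{1+\gamma}+(L-1)\log a+O(1)$ still shows super-exponential decay in $a$. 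I would present the statement in that sense, mirroring the MRT treatment in Appendix~\ref{lem2proof}, and note explicitly where the factor $(1+\gamma)^{-a}$ is absorbed.
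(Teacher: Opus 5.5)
Your small-$\gamma$ step is correct, but it takes a different route from the paper. The paper's omitted proof follows Appendix~\ref{lem2proof} with $M$ replaced by $a=M-U+1$, and argues probabilistically:
\[
F_X^{\mathrm{ZF}}(\gamma)=\mathbb{E}_V[F_U(\gamma V)],\qquad U\sim\Gamma(a,1),\; V\sim\Gamma(L,1)\ \text{independent}.
\]
It then uses $F_U(x)\sim x^{a}/\Gamma(a+1)$ as $x\to0$ and $\mathbb{E}[V^{a}]=\Gamma(L+a)/\Gamma(L)$. That route explains the prefactor as the $a$-th moment of the interference power. However, it tacitly passes the limit through $\mathbb{E}_V$; this is justifiable by dominated convergence, since $F_U(x)\le x^{a}/\Gamma(a+1)$. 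Your route works directly on $I_y(a,L)$ and gives an explicit $O(\gamma)$ relative error. Through the exact finite sum, it also shows how that error depends on $a$, which is exactly what decides the second claim.

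On the large-$M$ claim your diagnosis is right, and you should state it more bluntly: \eqref{eq:FX_ZF_large_M} is false for fixed $\gamma>0$, so it cannot be proved as written. For $L=1$ one has exactly $F_X^{\mathrm{ZF}}(\gamma)=\bigl(\gamma/(1+\gamma)\bigr)^{a}$. Its ratio to the claimed $\gamma^{a}$ is $(1+\gamma)^{-a}\to0$. In general, your finite-sum argument gives $F_X^{\mathrm{ZF}}(\gamma)\sim\frac{a^{L-1}}{\Gamma(L)}\,\gamma^{a}(1+\gamma)^{-(a+L-1)}$.

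The paper obtains \eqref{eq:FX_ZF_large_M} by inserting $\Gamma(L+a)/\Gamma(a+1)\sim a^{L-1}$ into the $\gamma\to0$ expansion. That interchanges the limits $\gamma\to0$ and $a\to\infty$, which is invalid: the relative error of that expansion is of order $a\gamma$, not uniform in $a$. The MRT version in Proposition~\ref{lem2} has the same defect, so ``mirroring'' Appendix~\ref{lem2proof} is not a fix.

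Your first salvage, $a\to\infty$ jointly with $a\gamma\to0$, is correct. Your second, ``exponential-rate'' reading is not. Equation \eqref{eq:FX_ZF_large_M} predicts $\log F_X^{\mathrm{ZF}}=a\log\gamma+O(\log a)$, whereas the truth is $a\log\frac{\gamma}{1+\gamma}+O(\log a)$. So the displayed formula gets the exponential rate itself wrong, and the factor $(1+\gamma)^{-a}$ cannot be absorbed. Commit to either the joint regime or the corrected fixed-$\gamma$ formula. Finally, $a^{L-1}\gamma^{a}$ is geometric decay with a polynomial prefactor, not super-exponential decay.
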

\begin{proof}
The proof is omitted since it follows similar lines to Appendix~\ref{lem2proof}.	
\end{proof}

\begin{corollary}[Effective Diversity Orders of MRT and ZF under FAMA]
	In the small-threshold regime $\gamma\to 0$, the single-port outage
	probabilities under MRT and ZF satisfy
	\begin{align}
		F_X^{\mathrm{MRT}}(\gamma)
		&\sim\,\gamma^{M},\\
		F_X^{\mathrm{ZF}}(\gamma)
		&\sim\,\gamma^{M-U+1}.
	\end{align} In the ideal weakly correlated regime,
	FAMA port selection over $N$ ports behaves as if it were performing
	selection combining on these per-port SIRs, leading to effective diversity
	orders
	\begin{equation}
		d_{\mathrm{MRT}} \approx MN,
		\qquad
		d_{\mathrm{ZF}} \approx (M-U+1)N,
	\end{equation}
	for MRT and ZF, respectively. Hence, ZF exhibits a diversity-order loss
	of approximately $U-1$ per port (or $(U-1)N$ overall) compared to MRT,
	reflecting the degrees of freedom spent on nulling  user interference.
\end{corollary}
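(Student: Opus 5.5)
The first part of the corollary is a restatement of the two small-threshold propositions, so I will not reprove it. Proposition~\ref{lem2} gives $F_X^{\mathrm{MRT}}(\gamma)\sim C_{\mathrm{MRT}}\gamma^{M}$ with $C_{\mathrm{MRT}}=\Gamma(L+M)/(\Gamma(L)\Gamma(M+1))$. The ZF proposition gives $F_X^{\mathrm{ZF}}(\gamma)\sim C_{\mathrm{ZF}}\gamma^{M-U+1}$. Dropping the $\gamma$-independent prefactors yields the two displayed relations. As a sanity check, both also follow from the finite-sum CDF \eqref{eq:FX_gamma_finitesum} with $M_{\mathrm{eff}}\in\{M,M-U+1\}$. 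I would write $1-F_X(\gamma)=(1+\gamma)^{-(M_{\mathrm{eff}}+L-1)}\sum_{j<M_{\mathrm{eff}}}\binom{M_{\mathrm{eff}}+L-1}{j}\gamma^j$ and note that this is a truncated binomial expansion of $(1+\gamma)^{M_{\mathrm{eff}}+L-1}$. Consequently $F_X(\gamma)=(1+\gamma)^{-(M_{\mathrm{eff}}+L-1)}\sum_{j\ge M_{\mathrm{eff}}}\binom{M_{\mathrm{eff}}+L-1}{j}\gamma^j$, whose leading term is of order $\gamma^{M_{\mathrm{eff}}}$.

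For the diversity orders, I would interpret the ``ideal weakly correlated regime'' as the independent-port benchmark of the Independent-Port Benchmark remark, where $P_{\mathrm{out}}(\gamma)=[F_X(\gamma)]^N$. This is the limit in which $\rho_X(k,\ell)\to0$ in Lemma~\ref{lem02}, which happens for $\mu_k\mu_\ell\to0$. Substituting the single-port asymptotics gives $P_{\mathrm{out}}(\gamma)\sim C^N\gamma^{M_{\mathrm{eff}}N}$. I would then define the diversity order as $d=\lim_{\gamma\to0}\log P_{\mathrm{out}}(\gamma)/\log\gamma$. This immediately gives $d_{\mathrm{MRT}}=MN$ and $d_{\mathrm{ZF}}=(M-U+1)N$, and the difference $MN-(M-U+1)N=(U-1)N$ is the claimed loss, or $U-1$ per port.

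The main obstacle is justifying the ``$\approx$'', since real ports are correlated and the bounds of Theorems~\ref{Prop1}--\ref{Prop2} only bracket the outage between diversity $M_{\mathrm{eff}}$ (upper bound) and the trivial lower bound. I would handle this by stating the result as an exact statement in the independent limit. I would then note, via the upper bound $P_{\mathrm{out}}\le F_X$, that the true diversity always lies between $M_{\mathrm{eff}}$ and $M_{\mathrm{eff}}N$, and that it approaches $M_{\mathrm{eff}}N$ as the cross-port correlation vanishes. To make the upper end explicit, I would condition on $\mathbf{x}_{u,0}$: the innovations $\mathbf{x}_{u,k}$ in \eqref{channel} are independent, so the ports are conditionally independent. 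When $\mu_k\to0$, each conditional per-port outage tends to the marginal $F_X$, which gives the product form.
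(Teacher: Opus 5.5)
Your first two paragraphs reproduce the paper's argument. The paper's justification is only the heuristic of inserting the small-$\gamma$ laws of Proposition~\ref{lem2} and its ZF counterpart into the independent-port benchmark $[F_X(\gamma)]^N$. Your finite-sum check is a correct extra confirmation, since the leading coefficient $\binom{M_{\mathrm{eff}}+L-1}{M_{\mathrm{eff}}}$ equals $\Gamma(L+M_{\mathrm{eff}})/(\Gamma(L)\Gamma(M_{\mathrm{eff}}+1))$.

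The trouble is your third paragraph, which tries to turn the ``$\approx$'' into a statement about the correlated model. Theorems~\ref{Prop1}--\ref{Prop2} give only the lower end of your bracket: diversity at least $M_{\mathrm{eff}}$, from $P_{\mathrm{out}}\le F_X$. The upper end $M_{\mathrm{eff}}N$ needs a positive-dependence (association) inequality of the type $P_{\mathrm{out}}(\gamma)\ge c\prod_k\Pr\{X_{u,k}<\gamma\}$. You have not established this, and it is false for general dependence: a countermonotone pair of ports has $P_{\mathrm{out}}(\gamma)=0$ as soon as $F_X(\gamma)\le 1/2$. Nor does $\rho_X(k,\ell)\to0$ help by itself. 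Correlation is a second-moment quantity, whereas a diversity order is a property of the joint lower tail.

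The conditioning step also fails. Conditioning on $\mathbf{x}_{u,0}$ does not make the ports conditionally independent. Every $X_{u,k}$ depends on the interfering beams $\{\mathbf{w}_i\}_{i\neq u}$, which are built from $\{\mathbf{h}_{i,1}\}_{i\neq u}$ and shared by all ports, so you would have to condition on all of $\mathbf{H}_1$. Port~1, the reference with $\mathbf{h}_{u,1}=\sqrt{\beta_u}\mathbf{x}_{u,0}$, never decorrelates. Given $\mathbf{H}_1$ its factor is an indicator, so the product $[F_X]^N$ over all $N$ ports cannot arise this way.

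More seriously, once you condition on $\mathbf{H}_1$ and let $\mu_k\to0$, $\mathbf{h}_{u,k}$ becomes independent of $\mathbf{w}_u$. Then $|\mathbf{h}_{u,k}^{\mathsf H}\mathbf{w}_u|^2$ is, up to $\beta_u$, $\mathrm{Exp}(1)$. The conditional per-port outage therefore tends to a $\mathbf{W}$-dependent quantity that is linear in $\gamma$ for small $\gamma$, not to a law of order $\gamma^{M_{\mathrm{eff}}}$. Carried out in the model \eqref{channel}, your route contradicts the $M_{\mathrm{eff}}N$ claim rather than proving it, because the $M_{\mathrm{eff}}$-fold numerator belongs only to the port the precoder is matched to. Likewise, at the reference port Cauchy--Schwarz gives $X_{u,1}\ge 1/L$ under MRT, and $V_1=0$ under ZF.

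So the identically distributed Beta-prime$(M_{\mathrm{eff}},L)$ marginals are a premise of the corollary. They cannot be recovered from \eqref{channel} in the decorrelated limit. Drop the third paragraph and state the second half of the corollary explicitly as a property of the idealized model with independent Beta-prime$(M_{\mathrm{eff}},L)$ ports, which is all the paper's argument supports.
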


\subsubsection{Large-SIR Asymptotics}

While the small-$\gamma$ behavior determines the diversity order of the system,
the large-SIR tail is equally informative in interference-limited systems.
\begin{proposition}\label{prop3}
	Let $X_{u,k}$ denote the per-port SIR of user $u$ at port $k$, and assume
	$X_{u,k}\sim\mathrm{Beta\mbox{-}prime}(a,b)$ with shape parameters
	$a>0$ and $b>0$. Then, as $\gamma\to\infty$, the upper tail of the CDF
	$F_X(\gamma)=\Pr\{X_{u,k}<\gamma\}$ satisfies
	\begin{equation}
		1 - F_X(\gamma)
		\sim
			\frac{1}{b\,B(a,b)}\,\gamma^{-b}.
		\label{eq:LargeSIR_tail_general}
	\end{equation}
	\end{proposition}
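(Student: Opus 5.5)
The plan is to work directly with the tail integral of the Beta-prime density in \eqref{eq:beta_prime_pdf}. Since $F_X$ is continuous, we have
\[
1-F_X(\gamma)=\frac{1}{B(a,b)}\int_{\gamma}^{\infty} x^{a-1}(1+x)^{-(a+b)}\,dx .
\]
For large $x$ the integrand behaves like $x^{-b-1}$. Integrating $x^{-b-1}$ from $\gamma$ to $\infty$ gives $\gamma^{-b}/b$, which is exactly the claimed constant in \eqref{eq:LargeSIR_tail_general}. It therefore suffices to show that the ratio of the integrand to $x^{-b-1}$ tends to one uniformly on $[\gamma,\infty)$ as $\gamma\to\infty$.

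To do this, write $x^{a-1}(1+x)^{-(a+b)} = x^{-b-1}\,(1+1/x)^{-(a+b)}$. On $x\ge\gamma$ the factor $(1+1/x)^{-(a+b)}$ lies between $(1+1/\gamma)^{-(a+b)}$ and $1$, since $a+b>0$. This yields the sandwich
\[
\frac{(1+1/\gamma)^{-(a+b)}}{b\,B(a,b)}\gamma^{-b}\le 1-F_X(\gamma)\le \frac{1}{b\,B(a,b)}\gamma^{-b}.
\]
Both sides are asymptotic to $\gamma^{-b}/(bB(a,b))$, which proves the statement for arbitrary real $a,b>0$. Because the bounds are explicit, no dominated convergence argument is needed.

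An alternative cross-check goes through \eqref{eq:beta_prime_cdf}. Substitute $t\mapsto 1-t$ and use $1-I_y(a,b)=I_{1-y}(b,a)$ with $1-y=1/(1+\gamma)\to0$. This gives $I_{\epsilon}(b,a)\sim \epsilon^{b}/(bB(a,b))$, the same computation as in Proposition~\ref{lem2} with the roles of $a$ and $b$ swapped. The result also shows $\gamma^{-b}\sim\gamma^{-1/b}\cdot$const only for the exponent $b$: the tail index equals $L$ for both MRT and ZF, while $a=M_{\mathrm{eff}}$ enters only through the prefactor $1/B(M_{\mathrm{eff}},L)$.

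There is no real obstacle here. The only care needed is the direction of the monotonicity of $(1+1/x)^{-(a+b)}$, which is straightforward since $a+b>0$, and confirming that the lower bound's prefactor tends to one.
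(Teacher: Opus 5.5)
Your proof is correct and follows the same route as the paper. Like the paper, you write $1-F_X(\gamma)$ as the tail integral of the Beta-prime density, observe that the integrand behaves like $x^{-b-1}/B(a,b)$, and integrate. Your factorization $x^{a-1}(1+x)^{-(a+b)}=x^{-b-1}(1+1/x)^{-(a+b)}$, together with the sandwich $(1+1/\gamma)^{-(a+b)}\le(1+1/x)^{-(a+b)}\le 1$ on $x\ge\gamma$, makes rigorous the step the paper only calls ``standard tail-integration arguments.'' The side remark ``$\gamma^{-b}\sim\gamma^{-1/b}\cdot$const'' is false as written (unless $b=1$) and plays no role in the argument, so it should be deleted.
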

\begin{proof}
	Please see Appendix~\ref{prop3proof}.	
\end{proof}

Since the denominator shape parameter is $b=L$ for both MRT and ZF,
the tail behavior is universally
\begin{equation}
	\Pr\{X_{u,k}>\gamma\}
	\sim
	\frac{1}{b\,B(a,b)}\,\gamma^{-L},
	\qquad \gamma\to\infty.\label{large1}
\end{equation}
Thus, the downlink operates in an interference-limited regime where the
high-SIR decay rate is controlled solely by the number of interfering
users $L$, irrespective of the beamforming scheme.

\subsection{Design Insights for FAMA Systems}
The analytical results derived in Sections~\ref{statistical}-\ref{fama} yield several practical
implications for the design of MRT- and ZF-based FAMA schemes. First, the
per-port SIR statistics differ mainly through the effective array
dimension: $M_{\mathrm{eff}} = M$ for MRT and $M_{\mathrm{eff}} = M-U+1$ for
ZF. MRT therefore provides stronger low-threshold decay and weaker
cross-port correlation, making it preferable when $M \gg U$ and residual
interference dominates. ZF becomes advantageous when inter-user
interference must be aggressively suppressed and $M$ is only moderately
larger than $U$.

Second, the fluid-antenna aperture directly controls the correlation
coefficients $\rho_X(k,\ell)$ through the overlap factors $\mu_k\mu_\ell$.
Larger apertures reduce correlation and substantially increase the FAMA
selection gain. Finally, the outage bounds in
\eqref{eq:Pout_bounds_general} reveal that additional ports are most useful
when the aperture ensures weak correlation; otherwise, the diversity gain
saturates and extra ports provide diminishing returns.

\textcolor{black}{It should be noted that the selection gain obtained by increasing the number of fluid-antenna ports $N$ comes at the cost of additional port-probing overhead. Since a single-RF-chain FAMA receiver typically evaluates the candidate ports sequentially before selecting the strongest-SIR port, larger $N$ may increase the port-scanning time, processing delay, and channel-coherence requirements. Therefore, the number of ports should be chosen to balance the achievable selection diversity against the temporal overhead of practical port probing.}

\section{Numerical Results}\label{Numerical}
This section validates the analytical results derived in Sections~\ref{statistical}-\ref{fama}
using MC simulations.
\textcolor{black}{
	The numerical results further show that, for practical aperture sizes, the true outage probability closely follows a correlation-dependent trajectory between the analytical bounds, thereby demonstrating that the proposed framework accurately captures the unique performance gains enabled by FAMA.
}\textcolor{black}{
	The simulations also assess the accuracy of the proposed correlation approximations, outage bounds, and asymptotic expressions, confirming that these analytical tools track the true system behavior across varying antenna configurations, numbers of ports, aperture sizes, and spatial correlation levels.
}

\textcolor{black}{\paragraph*{Simulation Setup}
	Unless otherwise stated, all  MC simulations are obtained with $10^6$ independent channel realizations. Small-scale fading is modeled as i.i.d.\ Rayleigh fading across antennas and users, while spatial correlation across fluid-antenna ports is generated according to the displacement-based correlation model described in Section~II. The BS is equipped with $M=8$ antennas and serves $U=4$ users simultaneously using MRT or ZF precoding designed from reference-port CSI. Each user employs a fluid antenna with $N=8$ candidate ports uniformly distributed over the specified aperture. For each realization, the user selects the port that maximizes the instantaneous SIR. Unless otherwise specified, equal transmit power is allocated across users.
}

\subsection{Validation of Per-Port SIR Distributions}
Fig.~\ref{fig1} illustrates the per–port SIR CDF for both MRT and ZF precoding, for several effective array dimensions
$M_{\mathrm{eff}}$, and compares the MC results with the analytical
finite–sum expression derived in~\eqref{eq:FX_gamma_finitesum}, \eqref{eq:FX_ZF_finitesum}. As predicted by the Beta–prime model, the
per–port SIR distribution depends on the shape parameters $(a,b)$,
namely $(M,L)$ for MRT and $(M-U+1,L)$ for ZF.

For MRT, increasing $M$ shifts the CDF to the right, confirming the
$\Gamma(M,1)$ behaviour of the desired signal and the low–SIR decay
proportional to $\gamma^{M}$. In contrast, ZF exhibits a smaller
effective diversity order determined by the null–space dimension
$(M-U+1)$. Consequently, the ZF curves lie to
the left of the MRT ones. Across all shown $M_{\mathrm{eff}}$, the analytical finite–sum CDF matches
the MC simulations exactly, confirming the Beta–prime
characterisation established in Section~\ref{statistical}. 

\subsection{Cross-Port SIR Correlation Verification}
Fig.~\ref{fig3} compares the simulated SIR correlation coefficients 
 obtained from MC simulation with the analytical 
expression in~\eqref{eq:rhoX_final}. The theoretical curves accurately capture both the decay 
profile and the relative strength of the correlation across ports, confirming 
that the derived model in Section~\ref{subsec:corr} correctly characterizes the 
spatial dependence induced by the J$_0$-type channel correlation. As predicted 
by~\eqref{eq:rhoX_final}, MRT exhibits systematically weaker correlation than ZF due to its 
larger effective dimension $M_{\mathrm{eff}}$, which enhances the mixing effect 
across ports. Conversely, ZF yields higher correlation levels because its 
smaller $M_{\mathrm{eff}}$ reduces the random fluctuation of the desired and 
interference terms. Although small deviations appear at the strongest 
correlated pairs (e.g., adjacent ports), the overall agreement between analysis 
and simulation is very good, validating that the proposed correlation model 
accurately reflects both the magnitude and spatial structure of port-level SIR 
coupling. \textcolor{black}{Furthermore, Fig.~\ref{fig3} shows both a  strongly correlated case ($W = 0.25\lambda$) and a weakly correlated case ($W = 4\lambda$). This comparison demonstrates that the analytical cross-port SIR correlation model accurately captures the dependence on the aperture size and remains consistent across different spatial-correlation regimes. In particular, smaller apertures induce significantly stronger correlation across ports, while larger apertures lead to weaker coupling, thereby confirming the regime-dependent behavior that underpins the outage-bound tightness analysis in Section~IV-D.}

	\begin{figure}%
	\centering
	\includegraphics[width=0.95\linewidth]{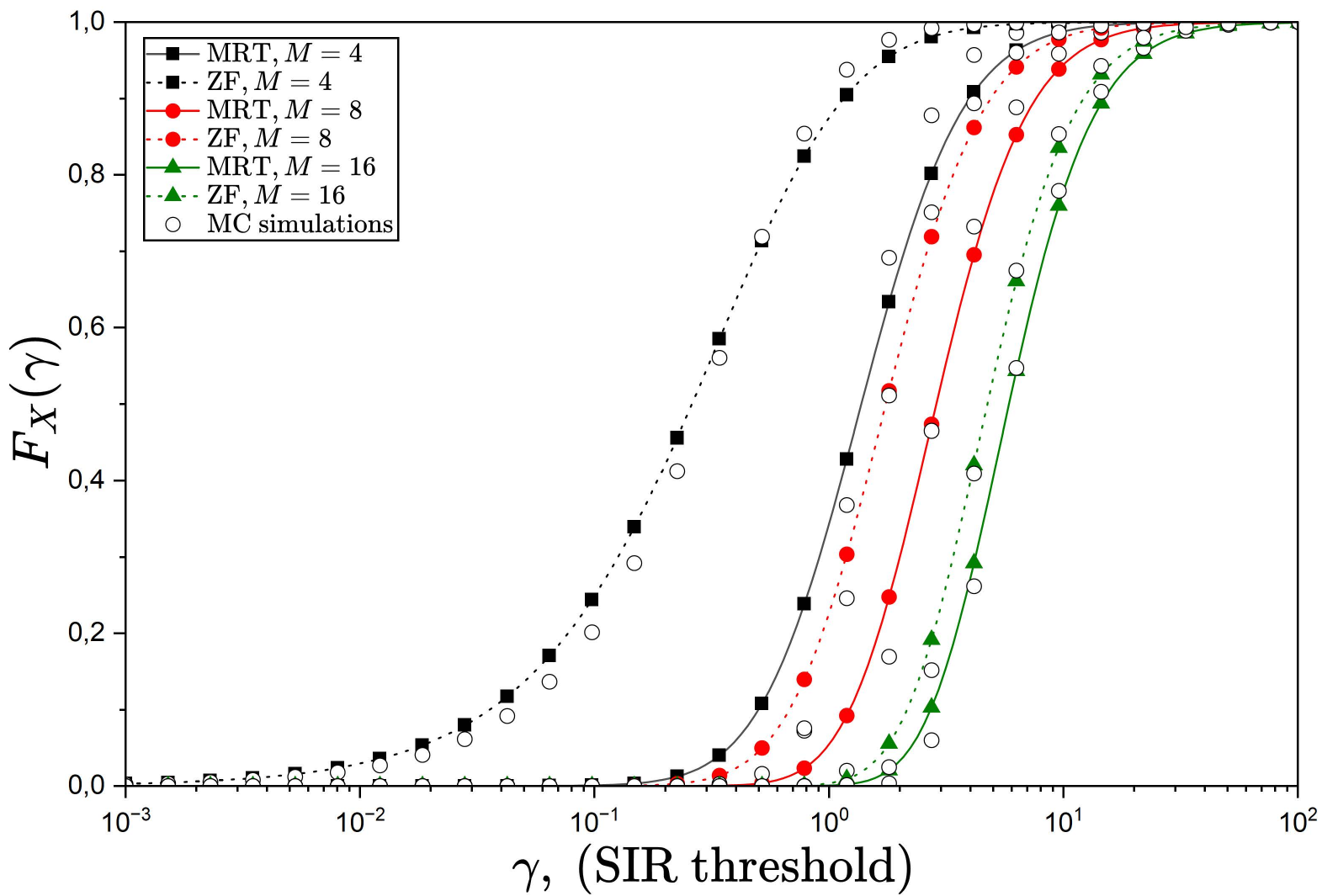}
	\caption{Per-port SIR CDF: Simulated vs. finite-sum models for varying $M$.}
	\label{fig1}
\end{figure}

	\begin{figure}%
	\centering
	\includegraphics[width=0.95\linewidth]{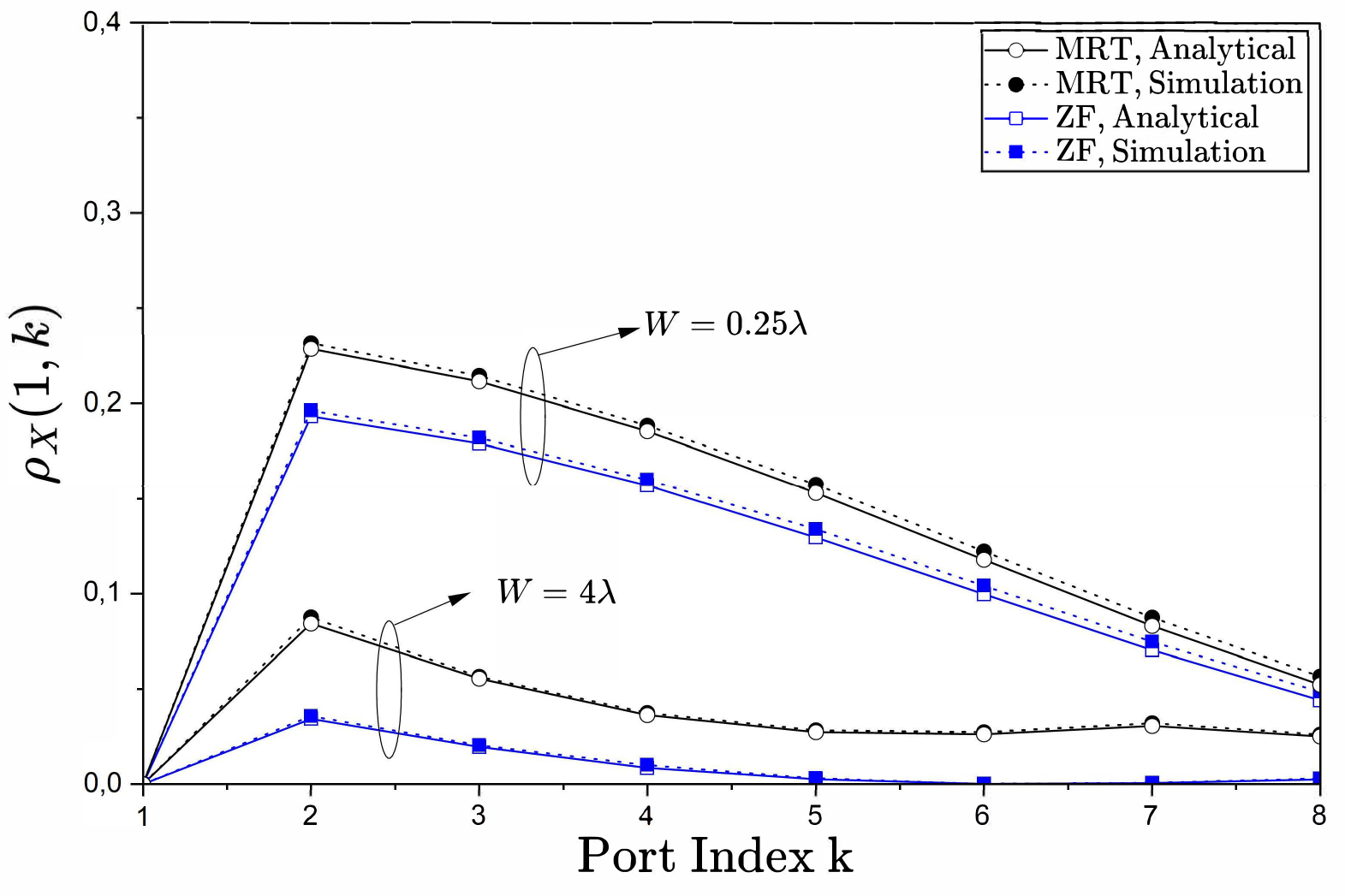}
	\caption{Simulated versus analytical SIR correlation across ports for different aperture $W$.}
	\label{fig3}
\end{figure}
\subsection{Outage Probability and Analytical Bounds}
Fig.~\ref{fig21}  presents a unified comparison between the idealized i.i.d.\ FAMA
model and the practically relevant spatially correlated FAMA scenario
for both MRT and ZF precoding. The left panel corresponds to MRT and the
right panel to ZF. The upper and lower bounds in~\eqref{eq:Pout_bounds_general} are  included and
match the exact analytical behaviour. \textcolor{black}{
	For comparison purposes, we include two baseline schemes. 
	First, a single-port transmission benchmark ($N=1$) is considered, where no fluid-antenna selection is performed. In this case, the outage probability reduces to the per-port CDF $F_{\mathrm{X}}(\gamma)$, which coincides with the upper bound in~\eqref{eq:Pout_bounds_general}. 
	Second, we include an idealized independent-port selection benchmark (labeled ``i.i.d.'' in the figure), obtained by assuming statistically independent SIRs across ports. Under this assumption, the outage probability becomes
	\begin{align}
		P_{\mathrm{out}}^{\mathrm{ind}}(\gamma)
		= \left(F_{\mathrm{X}}(\gamma)\right)^N,
	\end{align}
	which coincides with the lower bound in~\eqref{eq:Pout_bounds_general}. 
	These two baselines bracket the achievable performance region of correlated FAMA systems: the single-port curve quantifies operation without selection, while the independent-port benchmark represents the maximum achievable selection gain. The simulated correlated FAMA curves lie between these two extremes, thereby clearly illustrating both the gains enabled by port selection and the impact of spatial correlation under multiuser precoding. 
}

 \textcolor{black}{
	Fig.~4 illustrates the tightness of the proposed outage bounds across different correlation regimes. 
	For the large aperture case ($W=4\lambda$), where spatial correlation across ports is weak, the simulated FAMA outage closely follows the independent-port benchmark $F_X^N$, and the lower bound becomes tight. 
	Conversely, for the small aperture case ($W=0.25\lambda$), where ports are strongly correlated, the simulated outage approaches the single-port curve $F_X$, and the upper bound becomes tight. 
	These results confirm that the proposed bounds accurately capture the two structural multiport coupling extremes. 
	Practical FAMA systems operate between these limits, with performance governed by the cross-port SIR correlation quantified in Section~IV-C.
}
 For MRT (left panel), increasing $W$
moves the correlated-port performance closer to the ideal i.i.d.\ limit,
since wider apertures reduce spatial correlation and enable greater
selection diversity. When $W=0.25$, the ports are nearly identical and
FAMA provides minimal gain, resulting in an outage curve significantly
above the i.i.d.\ benchmark. When $W=4$, the correlation drops and the
performance approaches the theoretical $(F_X(\gamma))^{N}$ curve.

The ZF panel exhibits the same qualitative trend, but with a steeper
outage decay due to the interference-nulling structure of ZF precoding.
In the i.i.d.\ setting, the ZF per-port SIR distribution is considerably
more favorable than MRT (cf. parameters $(M-U+1,L)$), and thus the
i.i.d.\ ZF outage lies far below MRT at moderate and high SIR
thresholds. However, under spatial correlation, especially for small
apertures, ZF becomes highly sensitive to the exact port location: ports
close to the training position inherit strong interference suppression,
while distant ports may experience degraded SIR. This imbalance prevents
the correlated ZF performance from reaching the ideal i.i.d.\ limit even
for relatively large $W$, and explains the gap between the correlated
and i.i.d.\ curves. The empirical bounds remain valid but naturally
loose for ZF, since \eqref{eq:Pout_bounds_general} relies on i.i.d.\ marginal distributions that
do not accurately capture the unequal SIR distributions induced by ZF
across the correlated ports.

Figs.\textcolor{black}{~4.(a) and 4.(b)} illustrate the impact of increasing the number of BS
antennas from $M=4$ to $M=8$ under MRT precoding.\footnote{\textcolor{black}{
		The proposed outage bounds are tight in the sense that they rigorously bracket the exact outage probability for all spatial correlation levels and become exact in the limiting cases of fully correlated ports and effectively independent ports. For intermediate correlation regimes, the true outage probability is observed to lie close to a correlation-dependent trajectory between the two bounds.
	}
} As established by the
Beta-prime per-port SIR model in Section~\ref{statistical}, a larger $M$ enhances
the desired-signal power while leaving the interference statistics
unchanged, thereby increasing the shape parameter $a=M$. This produces a
consistent left-shift of the outage curves when moving from
Fig.\textcolor{black}{~4.(a) to Fig.~4.(b)}, reflecting the higher effective diversity order
and the faster low-SIR decay proportional to $\gamma^{M}$. 

Figs.~\textcolor{black}{4.(d) and 4.(e)} depict the corresponding behaviour under ZF
precoding. In this case, the effective useful signal dimension is
reduced to $M_{\mathrm{eff}} = M-U+1$, so increasing $M$ expands the
interference-nulling subspace as well as the array gain. When $M=4$,
the nullspace dimension is small and the outage curve decays only
moderately. In contrast, when $M=8$, the effective dimension more than
doubles, yielding significantly larger SIR values and a much steeper
outage decay. The improvement is more pronounced than under MRT because
ZF directly suppresses inter-user interference, and additional antennas
enhance both the beamforming gain and the interference-cancellation
capability.

Figs.~\textcolor{black}{4.(a) and 4.(c)} illustrate the impact of the number of FAMA ports on
the outage performance under MRT precoding. When $N=8$ in Fig.~\textcolor{black}{4(a)},
FAMA benefits from substantial selection diversity, and the outage curve
decays rapidly according to the analytical scaling
$P_{\mathrm{out}}(\gamma)=(F_{X}(\gamma))^N$. Reducing the number of ports
to $N=2$ in Fig.~\textcolor{black}{4.(c)} collapses most of this diversity and results in a
significantly higher outage probability across the entire SIR range.
This behaviour is consistent with the Beta-prime per-port SIR model,
for which the diversity order increases linearly with the number of
available ports.

A similar effect is observed for ZF precoding in Figs.~\textcolor{black}{4.(d) and 4.(f)}.
When $N=8$, as shown in Fig.~\textcolor{black}{4.(d)}, ZF combines its interference-nulling
capability with a large selection set, yielding very low outage
probabilities. When the number of ports is reduced to $N=2$ in
Fig.~\textcolor{black}{4.(f)}, the outage curve shifts upward by several orders of magnitude.
Because ZF induces a more uneven distribution of per-port SIR than MRT,
its performance is particularly sensitive to the number of ports, and it
benefits disproportionately from larger $N$.
\textcolor{black}{
	The numerical results further illustrate the robustness tradeoff between MRT and ZF under reference-port CSI. While ZF provides strong interference suppression at the reference port and is advantageous in dense multiuser regimes, its sensitivity to precoding mismatch reduces the achievable selection gains when spatial correlation is strong or when non-reference ports dominate. In such cases, MRT exhibits greater robustness.
}

\begin{figure*}[t]
	\begin{minipage}{0.47\textwidth}
		\centering
	\subcaptionbox{MRT}{	\includegraphics[trim=0cm -0.20cm 0cm 0.2cm, clip=true, width=3in]{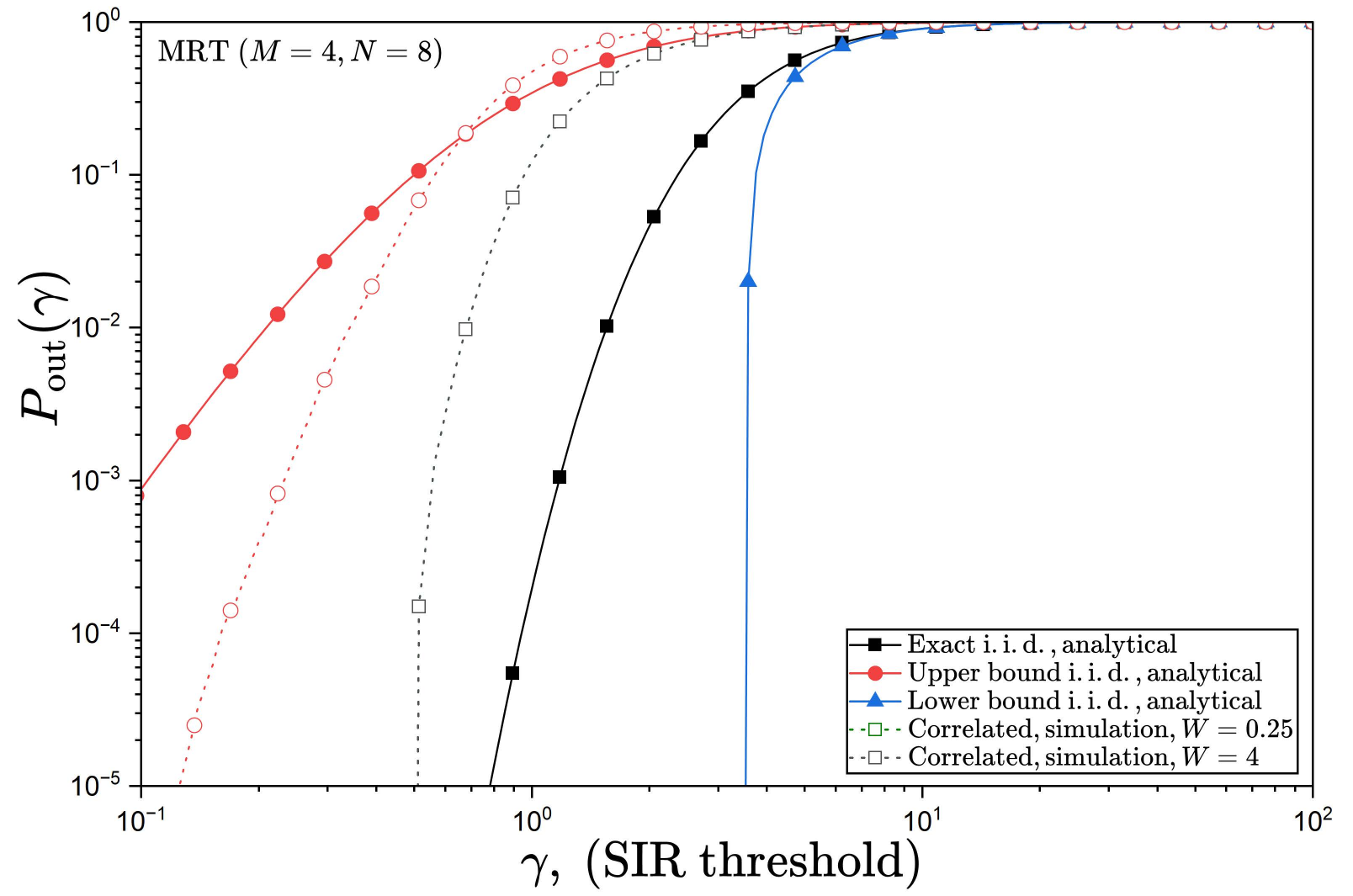}}
		\\ 
		\subcaptionbox{MRT}{\includegraphics[trim=0cm -0.20cm 0cm 0.2cm, clip=true, width=3in]{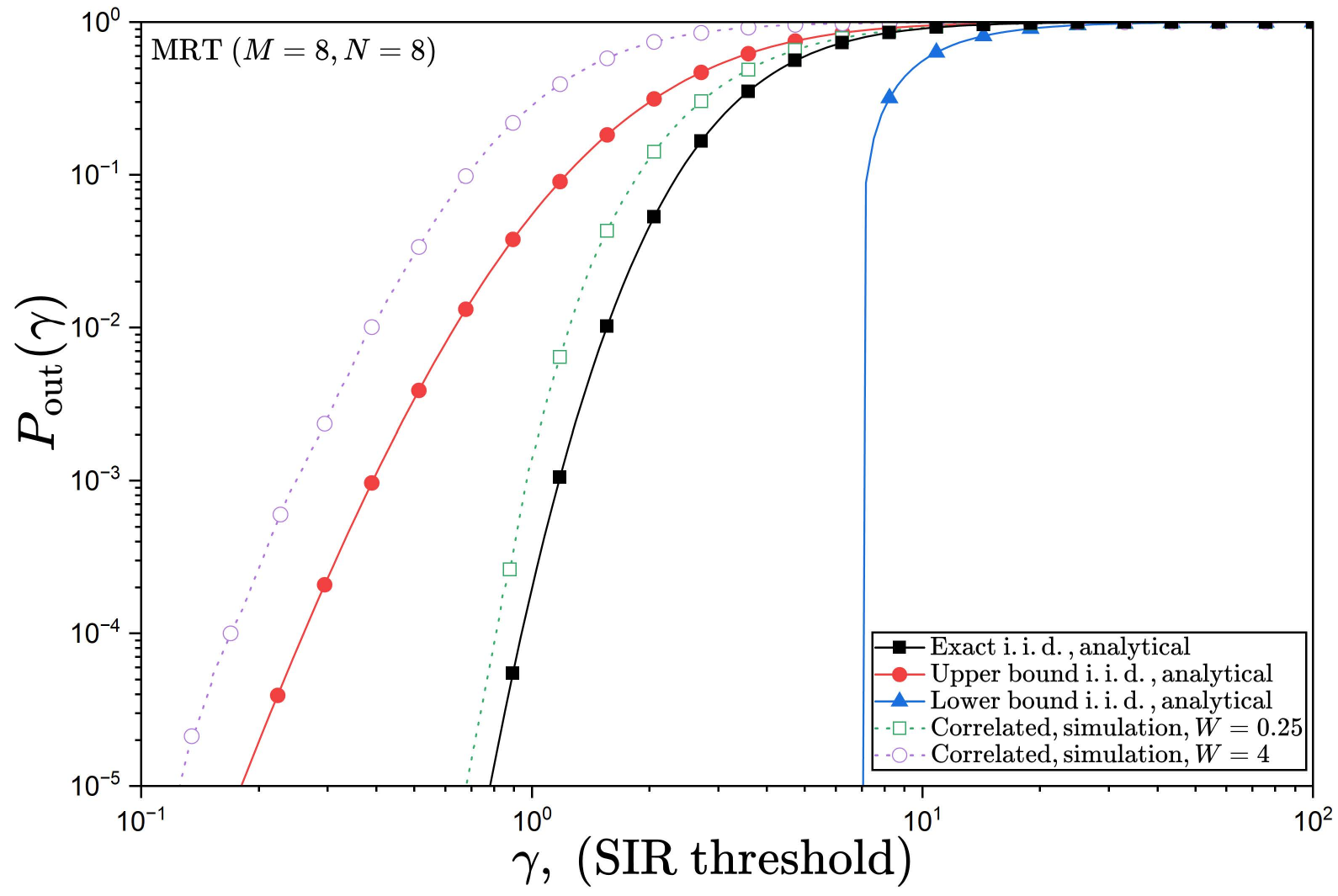}}
		\\ 
	\subcaptionbox{MRT}{	\includegraphics[trim=0cm -0.20cm 0cm 0.2cm, clip=true, width=3in]{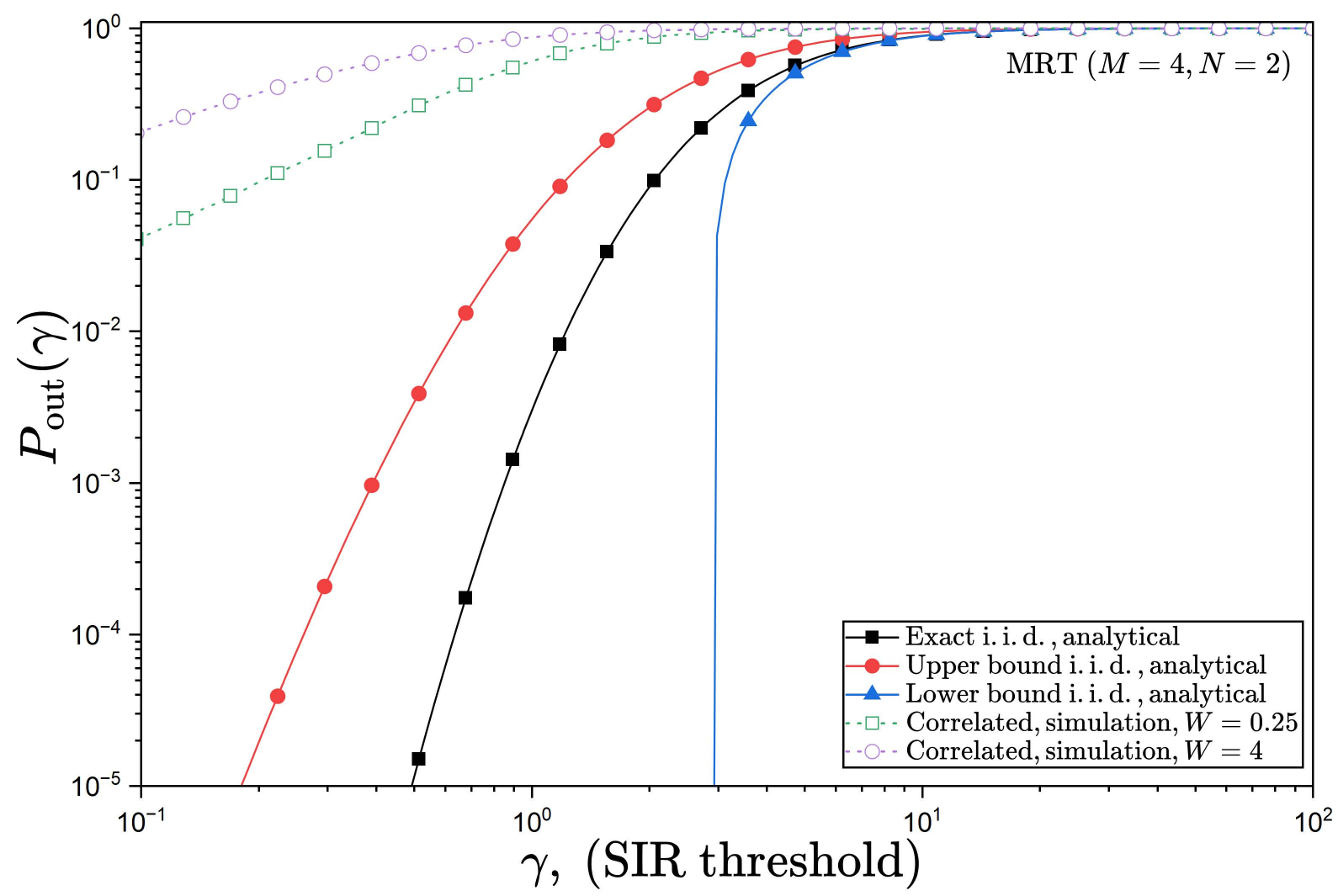} }
		%		\label{FigCorrvsUncorrIRS}
		\vspace*{-0.2cm}
	\end{minipage}
	\begin{minipage}{0.47\textwidth}
		\centering
	\subcaptionbox{ZF}{	\includegraphics[trim=0cm -0.20cm 0cm 0.2cm, clip=true, width=3in]{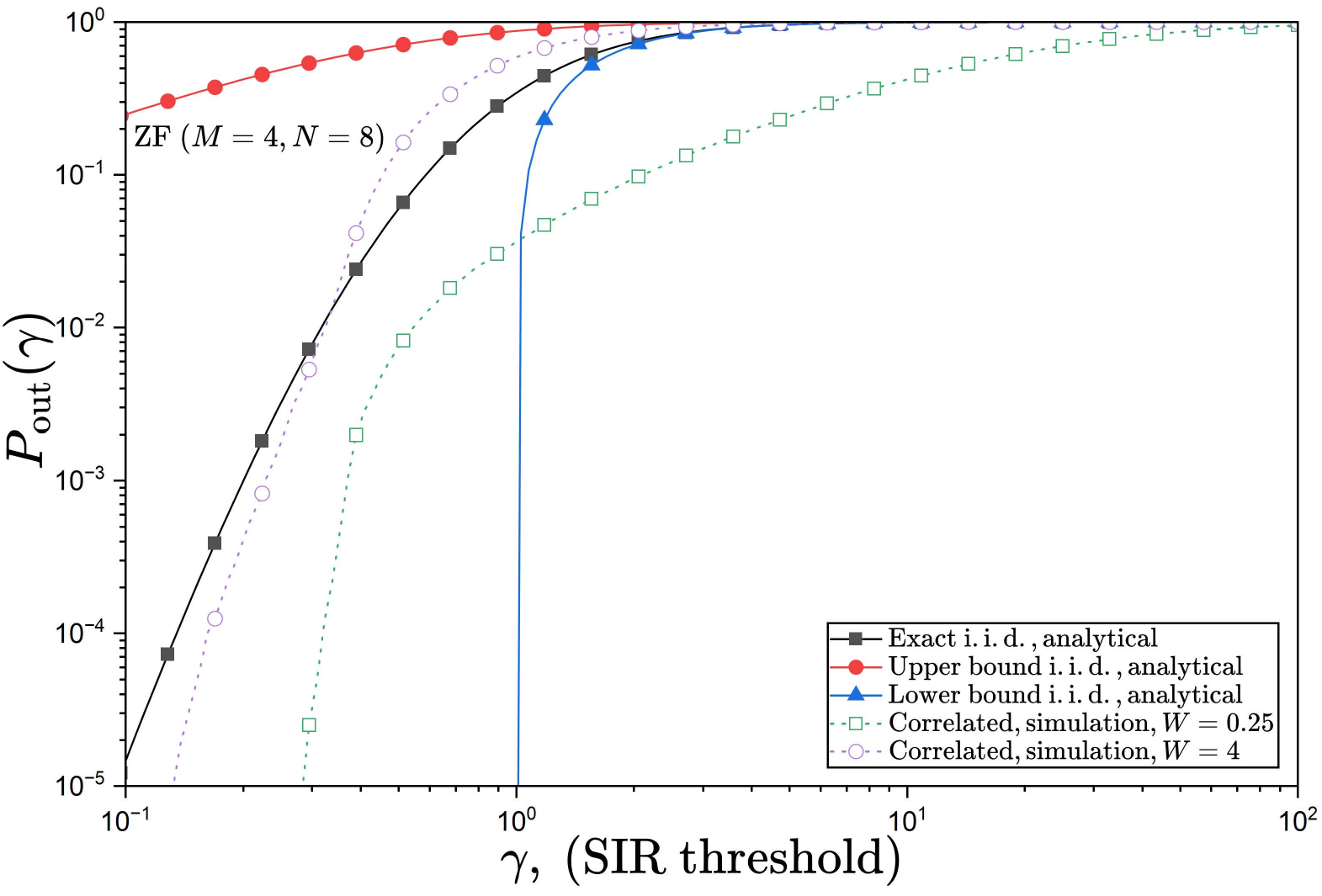}}
	\\ 
	\subcaptionbox{ZF}{\includegraphics[trim=0cm -0.20cm 0cm 0.2cm, clip=true, width=3in]{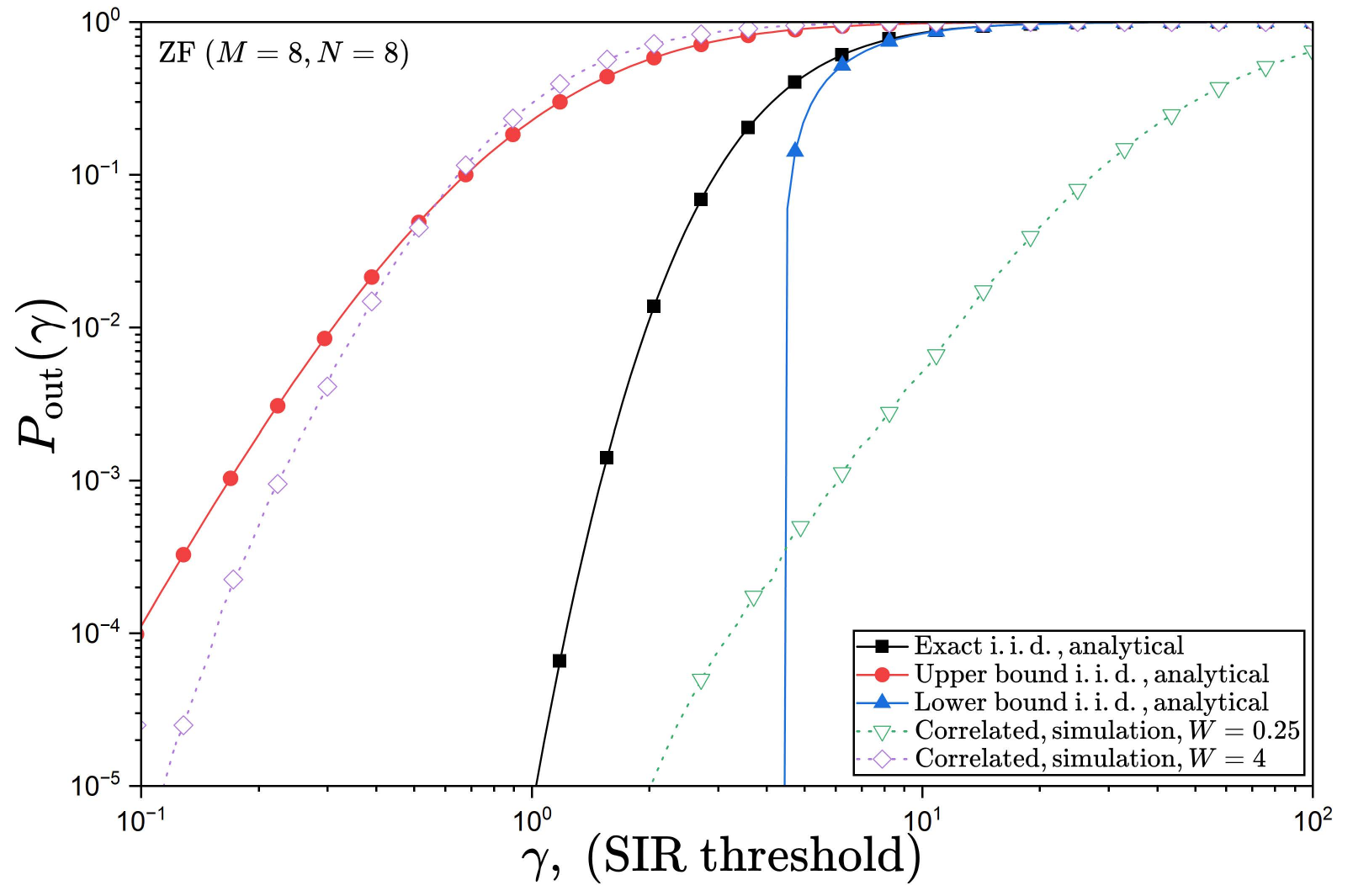}}
	\\ 
	\subcaptionbox{ZF}{	\includegraphics[trim=0cm -0.20cm 0cm 0.2cm, clip=true, width=3in]{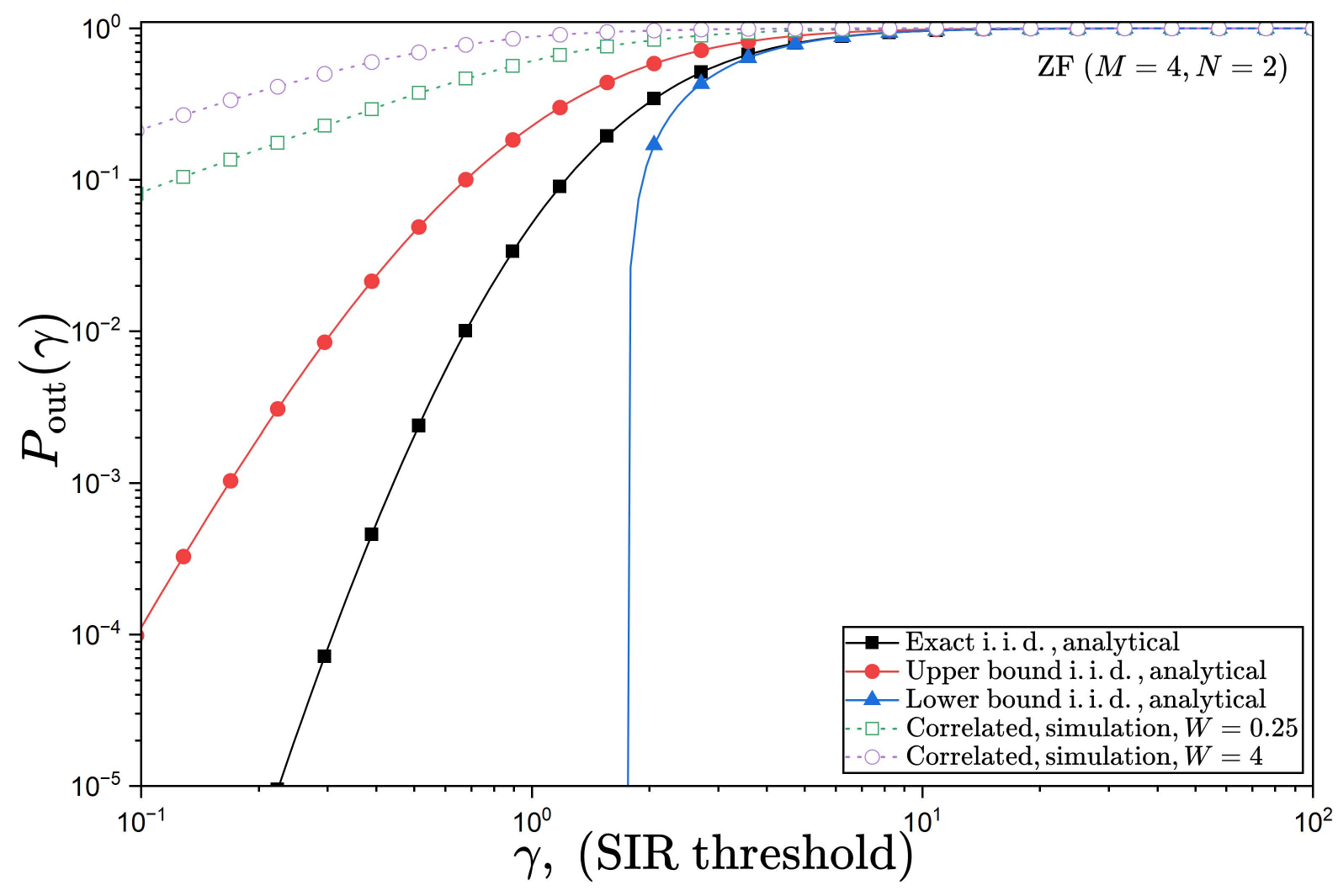} }
	%		\label{FigCorrvsUncorrIRS}
	\vspace*{-0.2cm}
	\end{minipage}
	
		\caption{FAMA Outage: i.i.d. vs Correlated Ports for MRT and ZF by varying $M$ and $N$: $(a)$ MRT ($M=4, N=8$), $(b)$ MRT ($M=8, N=8$), $(c)$ MRT ($M=4, N=2$), $(d)$ ZF ($M=4, N=8$), $(e)$ ZF ($M=8, N=8$), $(f)$ ZF ($M=4, N=2$).}
	\label{fig21}
	\vspace{-0.65cm}
\end{figure*}

\subsection{Small- and Large-SIR Asymptotic Validation}

Fig.~\ref{Fig22} validates the analytical characterization of the per-port SIR distribution by comparing MC outage curves with the small- and large-SIR asymptotic expressions derived in \eqref{eq:FX_MRT_small_gamma}, \eqref{eq:FX_ZF_small_gamma}, and \eqref{large1}. The agreement between theory and simulation is excellent across both ends of the SIR range.

\begin{figure*}[t]
	\begin{minipage}{0.47\textwidth}
		\centering
		\subcaptionbox{Small SIR}{	\includegraphics[trim=0cm -0.20cm 0cm 0.2cm, clip=true, width=3in]{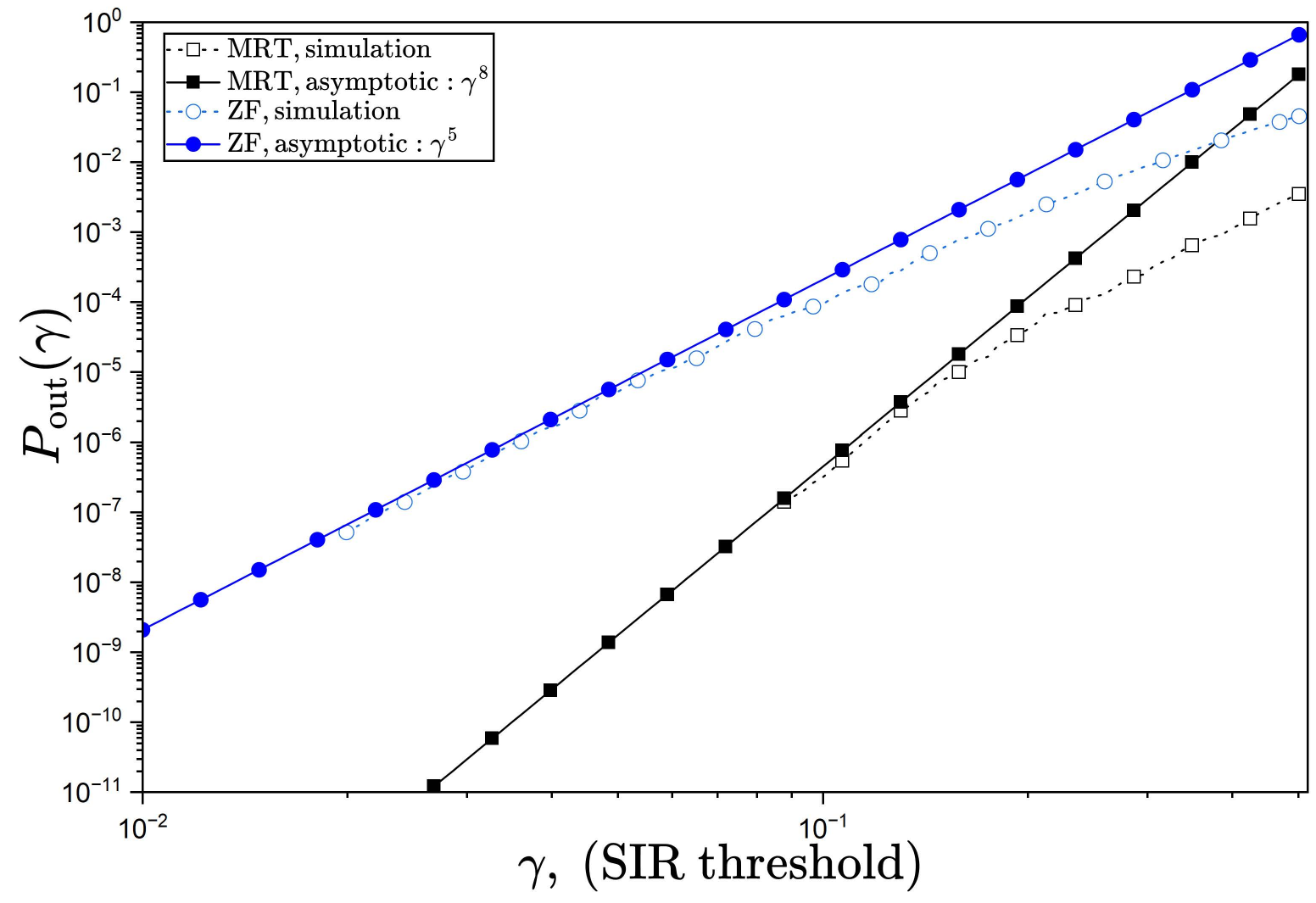}}
			\vspace*{-0.2cm}
	\end{minipage}
	\begin{minipage}{0.47\textwidth}
		\centering
		\subcaptionbox{Large SIR}{	\includegraphics[trim=0cm -0.20cm 0cm 0.2cm, clip=true, width=3in]{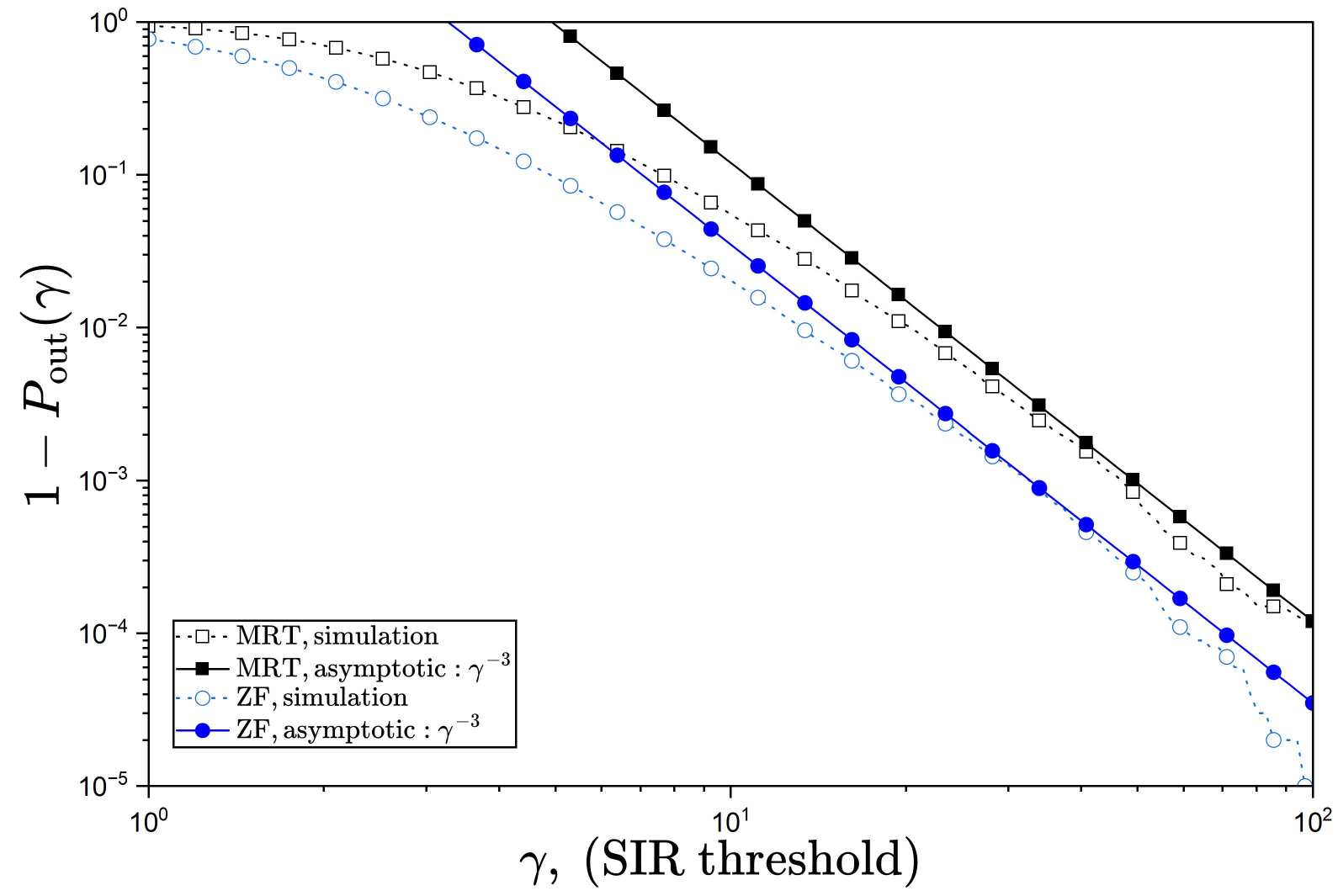}}
		\\ 
			\vspace*{-0.2cm}
	\end{minipage}
	
	\caption{Per-port SIR Outage: Small- and Large-SIR Asymptotics: $(a)$ Small SIR, $(b)$ Large SIR.}
	\label{Fig22}
\end{figure*}
In the \emph{small-SIR regime}, the simulated outage probability for MRT and ZF follows precisely the predicted diversity orders. As expected from \eqref{eq:FX_MRT_small_gamma} and \eqref{eq:FX_ZF_small_gamma}, the MRT outage curve exhibits an initial slope of $M$, while the ZF curve follows a slope of $M-U+1$, confirming that the behavior at very small~$\gamma$ is dominated by the Gamma-distributed desired-signal term. Since MRT uses all $M$ spatial degrees of freedom, it achieves a steeper decay, whereas ZF retains only $M-U+1$ effective dimensions after interference nulling.

In the \emph{large-SIR regime}, the tail of the simulated complementary CDF matches the $L$-dependent decay rate predicted by \eqref{large1}. For both MRT and ZF, the outage probability decays proportionally to $\gamma^{-L}$, reflecting the fact that the asymptotic behavior is governed by the interference term associated with the $L$ residual users. The MC points closely follow this theoretical tail, confirming that the Beta-prime model captures not only the shape but also the exact asymptotic structure of the SIR distribution.

%\subsection{Summary of Observations}
%From these results, several trends emerge:
%
%\begin{itemize}
%	\item MRT consistently outperforms ZF in FAMA systems due to its larger
%	effective dimension, which yields stronger per-port SIRs and
%	weaker cross-port correlation.
%	\item Port correlation is the dominant factor limiting selection
%	diversity. Aperture width must be sufficiently large to unlock
%	FAMA gains.
%	\item The analytical bounds in \eqref{eq:Pout_bounds_general} are tight
%	across all regimes and require no independence assumptions.
%	\item The asymptotic expressions accurately predict both the small- and
%	large-SIR behavior and provide a simple tool for understanding
%	the outage scaling with $M$, $L$, and $N$.
%\end{itemize}

%Overall, the numerical results fully validate the analytical framework and
%demonstrate the fundamental performance difference between MRT-FAMA and
%ZF-FAMA systems.

\textcolor{black}{
	The numerical results allow clear identification of the regimes in which each precoder is preferable under FAMA operation. MRT is advantageous in lightly loaded systems or moderate interference conditions, where its full array gain and robustness to precoding mismatch yield higher per-port SIR and stronger selection gains. In contrast, ZF becomes preferable in interference-limited regimes with larger numbers of simultaneously served users, where suppressing inter-user interference outweighs the reduction in effective signal dimension. In both cases, increasing spatial correlation across fluid-antenna ports reduces the achievable selection diversity and narrows the performance gap between the two precoders.
}

\section{Conclusion}\label{conclusion}
This paper developed a unified analytical framework for multiuser MISO downlink systems with FAMA, incorporating linear precoding, spatial correlation across fluid-antenna ports, and port selection. \textcolor{black}{
	The presented results should be interpreted as theoretical performance benchmarks that inform the practical viability of FAMA by identifying the regimes where meaningful gains can be achieved and the conditions under which such gains diminish due to spatial correlation and interference.
}
Exact per-port SIR distributions were derived for MRT and ZF precoding, revealing a Beta-prime structure governed by the effective BS array dimension. Closed-form CDF expressions enabled tractable outage analysis, and rigorous upper and lower outage bounds were established for arbitrary correlation. The asymptotic results further clarified the distinct diversity behavior of MRT and ZF and their shared interference-controlled tail characteristics.
Numerical results validated the analytical characterizations and illustrated the operating regimes of the two precoders under FAMA. \textcolor{black}{
	Overall, the results suggest MRT as a suitable choice for low-to-moderate user loads and weak-to-moderate correlation, while ZF is preferable in dense multiuser scenarios where interference suppression is critical.
}
\textcolor{black}{Finally, it is emphasized that the presented analysis establishes theoretical performance benchmarks for MRT- and ZF-based FAMA systems under idealized channel assumptions. Incorporating imperfect CSI, mechanical actuation constraints, and hardware-induced uncertainties constitutes an important direction for future work and will further bridge the gap between analytical performance limits and practical fluid-antenna implementations.}
\textcolor{black}{ Also, extending the proposed correlation-aware framework to Rician fading, where rotational invariance is lost, constitutes a challenging and interesting direction for future research.
}

		\begin{appendices}
		\textcolor{black}{	\section{Derivation of Effective Channel Gain Distributions}
			\label{app:gain}}
			
		\textcolor{black}{	\subsection{MRT: Proof of (21)}
			Consider MRT precoding based on the reference-port channel of user $u$,
			\begin{equation}
				\mathbf{w}_u^{\mathrm{MRT}} = \frac{\mathbf{h}_{u,1}}{\|\mathbf{h}_{u,1}\|}.
			\end{equation}
			The effective channel gain observed at the $k$-th fluid-antenna port is
			\begin{equation}
				U_k^{\mathrm{MRT}} = \big|\mathbf{h}_{u,k}^{\mathsf H}\mathbf{w}_u^{\mathrm{MRT}}\big|^2.
			\end{equation}
			Conditioned on $\mathbf{h}_{u,1}$, the beamformer $\mathbf{w}_u^{\mathrm{MRT}}$ is a deterministic unit-norm vector.
			Under the adopted channel model, $\mathbf{h}_{u,k}\sim \mathcal{CN}(\mathbf{0},\mathbf{I}_M)$ for all $k$, and is independent of
			$\mathbf{w}_u^{\mathrm{MRT}}$. Therefore, the inner product
			$\mathbf{h}_{u,k}^{\mathsf H}\mathbf{w}_u^{\mathrm{MRT}}$ is a circularly symmetric complex Gaussian random variable with zero mean
			and unit variance.}
			
		\textcolor{black}{	It follows that $U_k^{\mathrm{MRT}}$ is Gamma distributed with shape parameter $M$ and unit scale, i.e.,
			\begin{equation}
				U_k^{\mathrm{MRT}} \sim \Gamma(M,1),
			\end{equation}
			which yields (21). Any transmit-power normalization applied to $\mathbf{w}_u^{\mathrm{MRT}}$ only rescales $U_k^{\mathrm{MRT}}$
			and does not affect the shape parameter.}
			
		\textcolor{black}{	\subsection{ZF: Proof of (33)}
			Let $\mathbf{H}_{1}=[\mathbf{h}_{1,1},\ldots,\mathbf{h}_{U,1}] \in \mathbb{C}^{M\times U}$ denote the reference-port channel matrix.
			The ZF beamformer for user $u$ is constructed such that
			\begin{equation}
				\mathbf{h}_{i,1}^{\mathsf H}\mathbf{w}_u^{\mathrm{ZF}}=0, \quad \forall i\neq u,
			\end{equation}
			and is normalized as in the main text.
			Equivalently, $\mathbf{w}_u^{\mathrm{ZF}}$ lies in the nullspace of the $(U-1)$ interfering users’ reference-port channels.
			Let $\mathbf{P}_u^{\perp}$ denote the orthogonal projector onto this nullspace, which has dimension $M-U+1$.}
			
		\textcolor{black}{	The desired effective gain at the reference port can be written as
			\begin{equation}
				U^{\mathrm{ZF}}_{1}
				= \big|\mathbf{h}_{u,1}^{\mathsf H}\mathbf{w}_u^{\mathrm{ZF}}\big|^2
				\propto \big\|\mathbf{P}_u^{\perp}\mathbf{h}_{u,1}\big\|^2,
			\end{equation}
			where the proportionality constant accounts for the beamformer normalization.
			Under i.i.d.\ Rayleigh fading, $\mathbf{P}_u^{\perp}\mathbf{h}_{u,1}$ is a complex Gaussian vector supported on an $(M-U+1)$-dimensional
			subspace. Hence, $\|\mathbf{P}_u^{\perp}\mathbf{h}_{u,1}\|^2$ follows a Gamma distribution with shape parameter $M-U+1$ and unit scale.
			Including the normalization yields
			\begin{equation}
				U^{\mathrm{ZF}}_{1} \sim \Gamma(M-U+1,1),
			\end{equation}
			which establishes (33).}
			
		\section{Proof of Lemma~\ref{lem01}}\label{lem01proof}			
		Using \eqref{channel}, we decompose the channel at ports $k$ and
	$\ell$ as
	\begin{align}
		\mathbf{h}_{u,k}
		&= \mu_k \mathbf{z} + \mathbf{e}_k,\\
		\mathbf{h}_{u,\ell}
		&= \mu_\ell \mathbf{z} + \mathbf{e}_\ell,
	\end{align}
	where $\mathbf{z} \triangleq \sqrt{\beta_u}\mathbf{x}_{u,0}
	\sim\mathcal{CN}(\mathbf{0},\beta_u\mathbf{I})$ is common to all ports,
	while the residual terms
	$\mathbf{e}_k \triangleq \sqrt{\beta_u(1-\mu_k^2)}\,\mathbf{x}_{u,k}$ are
	independent across ports and independent of $\mathbf{z}$.
	
	For a fixed precoder $\mathbf{w}_u$, we can write
	\begin{equation}
		U_k
		= \big|\mathbf{h}_{u,k}^{H}\mathbf{w}_u\big|^{2}
		= \big|\mu_k \mathbf{z}^{H}\mathbf{w}_u
		+ \mathbf{e}_k^{H}\mathbf{w}_u\big|^{2}.
	\end{equation}
	We focus on a first-order decomposition in which $U_k$ is modeled as the
	sum of a common contribution (through $\mathbf{z}$) and an independent
	residual term,
	\begin{equation}
		U_k \approx \mu_k^{2} S_{\mathrm{c}} + S_{k},
		\label{eq:Uk_decomp}
	\end{equation}
	where $S_{\mathrm{c}}$ collects the power associated with the common
	component $\mathbf{z}^{H}\mathbf{w}_u$, and $S_k$ aggregates the local
	scattering and residual terms. Under MRT or ZF, the useful gain $U_k$ is
	Gamma-distributed with effective shape $M_{\mathrm{eff}}$, while the
	multiuser interference contributes $L$ additional degrees of randomness.
	Accordingly, we associate
	\begin{equation}
		S_{\mathrm{c}} \sim \Gamma(M_{\mathrm{eff}},1),
		\qquad
		S_k \sim \Gamma(L,1),
	\end{equation}
	with $\{S_k\}$ independent across ports and independent of $S_{\mathrm{c}}$.
	
	From \eqref{eq:Uk_decomp}, the mean and variance of $U_k$ satisfy
	\begin{align}
		\mathbb{E}[U_k]
		&\approx
		\mu_k^{2}\mathbb{E}[S_{\mathrm{c}}]
		+ \mathbb{E}[S_k]
		=
		\mu_k^{2}M_{\mathrm{eff}} + L,\\
		\mathrm{Var}(U_k)
		&\approx
		\mu_k^{4}\mathrm{Var}(S_{\mathrm{c}})
		+ \mathrm{Var}(S_k)
		=
		\mu_k^{4}M_{\mathrm{eff}} + L,
	\end{align}
	using $\mathrm{Var}(\Gamma(r,1))=r$.  
	The cross-covariance between $U_k$ and $U_\ell$ is dominated by the common
	term $S_{\mathrm{c}}$, since $S_k$ and $S_\ell$ are independent and all mixed covariance terms vanish. Specifically, we have
	\begin{equation}
		\mathrm{Cov}(U_k,U_\ell)
		\approx
		\mu_k^{2}\mu_\ell^{2}\mathrm{Var}(S_{\mathrm{c}})
		=
		\mu_k^{2}\mu_\ell^{2}M_{\mathrm{eff}}.
	\end{equation}
	Assuming $\mu_k^2$ and $\mu_\ell^2$ are not too close to zero, the
	dominant scaling in the variance can be approximated as
	\begin{equation}
		\mathrm{Var}(U_k)
		\approx M_{\mathrm{eff}} + L,
		\qquad
		\mathrm{Var}(U_\ell)
		\approx M_{\mathrm{eff}} + L,
	\end{equation}
	so that the correlation coefficient becomes
	\begin{align}
		\rho_U(k,\ell)
		&= \frac{\mathrm{Cov}(U_k,U_\ell)}
		{\sqrt{\mathrm{Var}(U_k)\mathrm{Var}(U_\ell)}} \\
		&\approx
		\mu_k^{2}\mu_\ell^{2}
		\frac{M_{\mathrm{eff}}}{M_{\mathrm{eff}}+L},
	\end{align}
	which establishes \eqref{eq:rhoU_final}.  
	
		\section{Proof of Lemma~\ref{lem02}}\label{lem02proof}	

		To approximate the correlation of the SIRs $X_{u,k}=U_k/V_k$ and
		$X_{u,\ell}=U_\ell/V_\ell$, we linearize the ratio around the mean of the
		interference. Recall that
	$V_k$ is Gamma–distributed with shape $L=U-1$ and unit scale, i.e.,
		$V_k\sim\Gamma(L,1)$. Thus, we have
		
	\begin{align}
			\bar V \triangleq \mathbb{E}[V_k] = L,
		\qquad
		\mathrm{Var}(V_k)=L.\label{gamm1}
	\end{align}
		
		Now, writing $V_k=\bar V+\Delta_k$ with $\Delta_k=V_k-\bar V$ and using the 
		first–order Taylor expansion
		\[
		\frac{1}{V_k}
		= \frac{1}{\bar V+\Delta_k}
		\approx \bar V^{-1} - \bar V^{-2}\Delta_k,
		\]
		we obtain
		\[
		X_{u,k}
		= \frac{U_k}{V_k}
		\approx 
		\frac{U_k}{\bar V}\left(1-\frac{\Delta_k}{\bar V}\right)
		= \frac{U_k}{\bar V} - \frac{U_k}{\bar V}\delta_k,
		\]
		where $\delta_k\triangleq (V_k-\bar V)/\bar V$ is zero-mean and independent
		of $\{U_j\}$. Hence, we have
		\[
		\mathrm{Cov}(X_{u,k},X_{u,\ell})
		\approx
		\frac{1}{\bar V^{2}}
		\mathrm{Cov}\!\left(
		U_k - U_k\delta_k,\;
		U_\ell - U_\ell\delta_\ell
		\right).
		\]
		Since $\delta_k$ and $\delta_\ell$ are independent, zero-mean, and
		independent of $(U_k,U_\ell)$, the mixed covariance terms vanish:
		\[
		\mathrm{Cov}(U_k,U_\ell \delta_\ell)
		=\mathrm{Cov}(U_k\delta_k,U_\ell)
		=\mathrm{Cov}(U_k\delta_k,U_\ell\delta_\ell)
		=0.
		\]
		Therefore, only the numerator covariance remains, yielding
		
	\begin{align}
			\mathrm{Cov}(X_{u,k},X_{u,\ell})
		\approx
		\frac{1}{\bar V^{2}}
		\,\mathrm{Cov}(U_k,U_\ell). \label{cor1}
	\end{align}

		Next, we approximate the variance of $X_{u,k}$. From the linearized model
		$X_{u,k}\approx A_k - B_k$ with
		\[
		A_k = \frac{U_k}{\bar V},
		\qquad
		B_k = \frac{U_k}{\bar V}\delta_k,
		\]
		and using the independence of $U_k$ and $\delta_k$, we have
		\[
		\mathrm{Var}(X_{u,k})
		\approx
		\mathrm{Var}(A_k)+\mathrm{Var}(B_k),
		\]
		with
	\textcolor{black}{\begin{align}
		& \mathrm{Var}(A_k)
		= \frac{\mathrm{Var}(U_k)}{\bar V^2},
		\nn\\
		& \mathrm{Var}(B_k)
		=
		\frac{\mathbb{E}[U_k^2]}{\bar V^2}\mathrm{Var}(\delta_k)
		=
		\frac{\mathbb{E}[U_k^2]}{\bar V^4}\mathrm{Var}(V_k).
	\end{align}
	Thus, we have
	\begin{align}
		\mathrm{Var}(X_{u,k})
		& \approx
		\frac{1}{\bar{V}^2}\mathrm{Var}(U_k)
		\;+\;
		\frac{\mathbb{E}[U_k^2]}{\bar{V}^4}\mathrm{Var}(V_k)
		\nn\\
		& =
		\frac{\mathrm{Var}(U_k)}{\bar{V}^2}
		\left(
		1+\frac{\mathrm{Var}(V_k)}{\bar{V}^2}
		\frac{\mathbb{E}[U_k^2]}{\mathrm{Var}(U_k)}
		\right).
		\label{eq:VarX_linearized}
	\end{align}
	Given \eqref{gamm1}, we have
	\[
	\bar{V}=L,
	\qquad
	\mathrm{Var}(V_k)=L.
	\]
	Substituting these into \eqref{eq:VarX_linearized} yields
	\begin{align}
		\mathrm{Var}(X_{u,k})
		\approx
		\frac{\mathrm{Var}(U_k)}{L^2}
		\left(
		1+\frac{\mathbb{E}[U_k^2]}{\mathrm{Var}(U_k)\,L}
		\right).
		\label{cor2}
	\end{align}}

	\textcolor{black}{Finally, the correlation coefficient is
	\begin{align}
		\rho_X(k,\ell)
		=
		\frac{\mathrm{Cov}(X_{u,k},X_{u,\ell})}
		{\sqrt{\mathrm{Var}(X_{u,k})\mathrm{Var}(X_{u,\ell})}}.
		\label{cor3}
	\end{align}
	Substituting \eqref{cor1} and \eqref{cor2} into \eqref{cor3}, we obtain
	\begin{align}
		\rho_{X}(k,\ell)
		\;\approx\;
		\rho_{U}(k,\ell)\,\frac{L}{L + M_{\mathrm{eff}}+1},
	\end{align}
	which completes the proof.}

		\section{Proof of Theorem~\ref{Prop1}}\label{Prop1proof}	
			By definition of the maximum, the event $\{\Gamma_u < \gamma\}$ can be
			written as
			\begin{equation}
				\{\Gamma_u < \gamma\}
				=
				\left\{\max_{k} X_{u,k} < \gamma\right\}
				=
				\bigcap_{k=1}^{N} \{X_{u,k} < \gamma\}, 
				\label{eq:proof_event_eq}
			\end{equation}
			where $\bigcap_{k=1}^{N} $ denotes the  intersection of events over all indices $k$. 
			For any fixed index $j\in\{1,\ldots,N\}$, the intersection
			$\bigcap_{k=1}^{N} \{X_{u,k} < \gamma\}$ is a subset of the single event
			$\{X_{u,j} < \gamma\}$, i.e.,
			\begin{equation}
				\bigcap_{k=1}^{N} \{X_{u,k} < \gamma\}
				\subseteq
				\{X_{u,j} < \gamma\}.
			\end{equation}
			Taking probabilities on both sides yields
			\begin{equation}
				\Pr\!\left(\bigcap_{k=1}^{N} \{X_{u,k} < \gamma\}\right)
				\le
				\Pr\{X_{u,j} < \gamma\},
				\quad \forall\, j.\label{res2}
			\end{equation}
			Using \eqref{eq:proof_event_eq} and minimizing over $j$ gives
			\begin{align}
				P_{\mathrm{out}}(\gamma)
			&	=
				\Pr\{\Gamma_u < \gamma\}\nn\\
				&=
				\Pr\!\left(\bigcap_{k=1}^{N} \{X_{u,k} < \gamma\}\right)\\
				&
				\le
				\min_{1\le k\le N} \Pr\{X_{u,k} < \gamma\},\label{res1}
			\end{align}
			which proves \eqref{eq:Pout_upper_general}. Note that \eqref{res1}  is obtained from \eqref{res2} because the latter holds for every $j$, which means that it also holds for the minimum over all $j$. If the random variables
			$\{X_{u,k}\}$ are identically distributed, then
			$\Pr\{X_{u,k} < \gamma\}=F_X(\gamma)$ for all $k$, and
			\eqref{eq:Pout_upper_general} reduces to \eqref{eq:Pout_upper_iid}.
		
			\section{Proof of Theorem~\ref{Prop2}}\label{Prop2proof}	
			By definition, we have
			\begin{align}
				P_{\mathrm{out}}(\gamma)
				&= \Pr\{\Gamma_u < \gamma\}\nn\\
				&= \Pr\{\max_k X_{u,k} < \gamma\}.
			\end{align}
			The complement event of $\{\Gamma_u < \gamma\}$ can be written as
			\begin{align}
				\{\Gamma_u \ge \gamma\}
				&=
				\left\{\max_k X_{u,k} \ge \gamma\right\}\nn\\
				&=
				\bigcup_{k=1}^{N} \{X_{u,k} \ge \gamma\},\label{res3}
			\end{align}
			where $	\bigcup_{k=1}^{N} $ denotes the  union of events over all indices $k$. This equation means that at least one port has SIR greater than or equal to $\gamma$. 
			Thus, we have
			\begin{align}
				P_{\mathrm{out}}(\gamma)
				&= 1 - \Pr\{\Gamma_u \ge \gamma\}\nn\\
				&= 1 - \Pr\!\left(\bigcup_{k=1}^{N} \{X_{u,k} \ge \gamma\}\right),
				\label{eq:Pout_complement}
			\end{align}
			where, in \eqref{eq:Pout_complement}, we have used \eqref{res3}.
			Applying the union bound (Boole's inequality) yields
			\begin{equation}
				\Pr\!\left(\bigcup_{k=1}^{N} \{X_{u,k} \ge \gamma\}\right)
				\le
				\sum_{k=1}^{N} \Pr\{X_{u,k} \ge \gamma\}.
			\end{equation}
			Substituting this into \eqref{eq:Pout_complement} gives
			\begin{align}
				P_{\mathrm{out}}(\gamma)
			&	\ge
				1 - \sum_{k=1}^{N} \Pr\{X_{u,k} \ge \gamma\}\nn\\
				&=
				1 - \sum_{k=1}^{N} \big(1 - \Pr\{X_{u,k} < \gamma\}\big),
			\end{align}
			which proves \eqref{eq:Pout_lower_general}. 
		
			If the random variables $\{X_{u,k}\}$ are identically distributed with
			$\Pr\{X_{u,k} < \gamma\}=F_X(\gamma)$ for all $k$, then
			\eqref{eq:Pout_lower_general} reduces to
			\begin{equation}
				P_{\mathrm{out}}(\gamma)
				\ge
				1 - N\big(1 - F_X(\gamma)\big)
				,
			\end{equation}
			which is \eqref{eq:Pout_lower_iid}.

\section{Proof of proposition~\ref{lem1}}\label{lem1proof}

	From \eqref{eq:Pout_lower_iid}, we have
\begin{equation}
	P_{\mathrm{out}}(\gamma)
	\ge 1 - N\varepsilon(\gamma).
	\label{eq:proof_linear}
\end{equation}
For any $x\ge0$, the elementary inequality
\begin{equation}
	1 - x \le e^{-x}
\end{equation}
holds. Setting $x = N\varepsilon(\gamma)$ yields
\begin{equation}
	1 - N\varepsilon(\gamma)
	\le
	\exp\big(-N\varepsilon(\gamma)\big).
\end{equation}
Combining this with \eqref{eq:proof_linear} gives
\begin{equation}
	P_{\mathrm{out}}(\gamma)
	\ge 1 - N\varepsilon(\gamma)
	\le \exp\big(-N\varepsilon(\gamma)\big),
\end{equation}
which shows that $P_{\mathrm{out}}(\gamma)$ is lower-bounded by a quantity
that itself is upper-bounded by the exponential term in
\eqref{eq:proposition_exp_approx}. When $N\varepsilon(\gamma)\ll 1$, i.e., in the
small-outage regime, we may use the first-order Taylor expansion
$\exp(-N\varepsilon(\gamma)) \approx 1 - N\varepsilon(\gamma)$, so that the
linear bound in \eqref{eq:proof_linear} and the exponential expression in
\eqref{eq:proposition_exp_approx} become asymptotically equivalent. This justifies
the approximation in \eqref{eq:proposition_exp_approx}.

\section{Proof of proposition~\ref{lem2}}\label{lem2proof}
	By definition,
\begin{equation}
	F_X(\gamma)
	= \Pr\!\left\{\frac{U}{V} < \gamma\right\}
	= \Pr\{U < \gamma V\}
	= \mathbb{E}_V\big[ F_U(\gamma V)\big],
	\label{eq:FX_def}
\end{equation}
where $F_U(\cdot)$ is the CDF of $U\sim\Gamma(M,1)$. For small arguments
$a\to 0$, the Gamma CDF admits the expansion
\begin{equation}
	F_U(a)
	= \frac{1}{\Gamma(M)} \int_{0}^{a} u^{M-1} e^{-u}\,du
	\sim \frac{a^{M}}{\Gamma(M+1)},
	\label{eq:FU_small_a}
\end{equation}
where we used $e^{-u} \approx 1$ for $u$ close to zero. This is a classical small-argument asymptotic of the lower incomplete Gamma function \cite{Abramowitz1964}. Substituting
$a = \gamma v$ into \eqref{eq:FU_small_a} and using \eqref{eq:FX_def}, we
obtain, for $\gamma \to 0$,
\begin{equation}
	F_X(\gamma)
	\sim
	\mathbb{E}_V\!\left[
	\frac{(\gamma V)^{M}}{\Gamma(M+1)}
	\right]
	=
	\frac{\gamma^{M}}{\Gamma(M+1)} \mathbb{E}[V^{M}],
	\label{eq:FX_moment_step}
\end{equation}
where $V\sim\Gamma(L,1)$. The $M$-th moment of $V$ is \cite{Papoulis1965}
\begin{equation}
	\mathbb{E}[V^{M}]
	= \frac{\Gamma(L+M)}{\Gamma(L)}.
\end{equation}
Substituting this into \eqref{eq:FX_moment_step} yields
\begin{equation}
	F_X(\gamma)
	\sim
	\gamma^{M}
	\frac{\Gamma(L+M)}{\Gamma(L)\,\Gamma(M+1)},
\end{equation}
which proves \eqref{eq:FX_MRT_small_gamma}.

For the large-$M$ behavior, we use the standard asymptotic relation
\begin{equation}
	\frac{\Gamma(L+M)}{\Gamma(M+1)}
	\sim M^{L-1}, \qquad M\to\infty,
\end{equation}
for fixed $L$. Substituting into \eqref{eq:FX_MRT_small_gamma} gives
\begin{equation}
	F_X(\gamma)
	\sim
	\gamma^{M} \frac{M^{L-1}}{\Gamma(L)},
	\qquad M\to\infty,
\end{equation}
which establishes \eqref{eq:FX_MRT_large_M}.

\section{Proof of proposition~\ref{prop3}}\label{prop3proof}
	For a Beta-prime random variable with parameters $a>0$ and
$b>0$, the pdf is given by \eqref{beta}. 	The upper tail of the CDF is
\begin{equation}
	1 - F_X(\gamma)
	=
	\int_{\gamma}^{\infty} f_X(x)\,dx.
	\label{eq:tail_def}
\end{equation}
For large $x$, we have $(1+x)^{-(a+b)} \sim x^{-(a+b)}$, and thus the
integrand in \eqref{eq:tail_def} satisfies
\begin{equation}
	f_X(x)
	\sim
	\frac{1}{B(a,b)} x^{a-1} x^{-(a+b)}
	=
	\frac{1}{B(a,b)} x^{-b-1}.
	\label{eq:integrand_asymp}
\end{equation}
Substituting \eqref{eq:integrand_asymp} into \eqref{eq:tail_def} and using
standard tail-integration arguments yields
\begin{align}
	1 - F_X(\gamma)
	&\sim
	\frac{1}{B(a,b)}
	\int_{\gamma}^{\infty} x^{-b-1}\,dx \nonumber\\
	&=
	\frac{1}{B(a,b)}
	\left[-\frac{1}{b}x^{-b}\right]_{x=\gamma}^{x=\infty} \nonumber\\
	&=
	\frac{1}{b\,B(a,b)}\,\gamma^{-b},
\end{align}
which proves \eqref{eq:LargeSIR_tail_general}.
		
			\end{appendices}
			
	\bibliographystyle{IEEEtran}

	\bibliography{IEEEabrv,bibl}
\end{document}